\def\ps@pprintTitle{%
  \let\@oddhead\@empty
  \let\@evenhead\@empty
  \def\@oddfoot{}
  \let\@evenfoot\@oddfoot}
\newtheorem{theorem}{Theorem}[section]
\newtheorem{lemma}[theorem]{Lemma}
\newtheorem{corollary}[theorem]{Corollary}
\newtheorem{proposition}[theorem]{Proposition}
\newtheorem{definition}[theorem]{Definition}
\newtheorem{example}[theorem]{Example}
\newtheorem{remark}[theorem]{Remark}
\newtheorem{fact}[theorem]{Fact}
\numberwithin{equation}{section}
\let\set\mathbb
\let\cal\mathcal
\def\lc{\operatorname{lc}}
\def\shift{\text{S}}
\def\disset{\operatorname{DS}}
\def\bigO{\operatorname{O}}
\def\softO{\operatorname{O^{\sim}}}
\def\lquo{\operatorname{LQ}}
\def\leftquo{\operatorname{LSQ}}
\def\leftrem{\operatorname{LSR}}
\def\hord{\operatorname{hord}}
\def\lord{\operatorname{lord}}
\newcommand{\lmA}[1][]{%
  \ifthenelse{\isempty{#1}}{\cal A_{\lambda,\mu}}{\cal A_{\lambda_{#1},\mu_{#1}}}%
}
\newcommand{\lmphi}[1][]{%
  \ifthenelse{\isempty{#1}}{\phi_{\lambda,\mu}}{\phi_{\lambda_{#1},\mu_{#1}}}%
}
\newcommand{\lmshift}[1][]{%
  \ifthenelse{\isempty{#1}}{\shift_{\lambda,\mu}}{\shift_{\lambda_{#1},\mu_{#1}}}%
}
\begin{document}

\begin{frontmatter}
  
  \title{Efficient Rational Creative Telescoping}
  
\author{Mark Giesbrecht}
\address{Symbolic Computation Group, Cheriton School of Computer Science,
  University of Waterloo,\\
  Waterloo, ON, N2L 3G1, Canada}
\ead{mwg@uwaterloo.ca}

\author{Hui Huang}
\address{School of Mathematical Sciences, Dalian University of Technology,\\
  Dalian, Liaoning, 116024, China}
\ead{huanghui@dlut.edu.cn}

\author{George Labahn}
\address{Symbolic Computation Group, Cheriton School of Computer Science,
  University of Waterloo,\\
  Waterloo, ON, N2L 3G1, Canada}
\ead{glabahn@uwaterloo.ca}

\author{Eugene Zima}
\address{Physics and Computer Science, Wilfrid Laurier University,\\
  Waterloo, ON, N2L 3C5, Canada}
\ead{ezima@wlu.ca}

\begin{abstract}
  We present a new algorithm to compute minimal telescopers for
  rational functions in two discrete variables. As with recent
  reduction-based approaches, our algorithm has the important feature
  that the computation of a telescoper is independent of its
  certificate. In addition, our algorithm uses a compact
  representation of the certificate, which allows it to be easily
  manipulated and analyzed without knowing the precise expanded
  form. This representation hides potential expression swell until the
  final (and optional) expansion, which can be accomplished in time
  polynomial in the size of the expanded certificate. A complexity
  analysis, along with a Maple implementation, indicates that our
  algorithm has better theoretical and practical performance than the
  reduction-based approach in the rational case.
\end{abstract}

\begin{keyword}
  Rational function, GGSZ reduction, Left scalar division with
  remainder, Telescoper
\end{keyword}
\end{frontmatter}

\section{Introduction}\label{SEC:intro}
Creative telescoping is a powerful method pioneered by
\cite{Zeil1990a,Zeil1990b,Zeil1991} in the 1990s and has now become
the cornerstone for finding closed forms for definite sums and
definite integrals in computer algebra. The method mainly constructs a
recurrence (resp.\ differential) equation admitting the prescribed
definite sum (resp.\ integral) as a solution. Employing other
algorithms applicable to the resulting recurrence or differential
equation, it is then possible to find closed form solutions or prove
that there is no such solution. In the latter case, one can still make
use of creative telescoping for such operations as determining
asymptotic expansions of the sum or integral under investigation.

In the case of summation, in order to compute a sum of the form
$\sum_{y=a}^bf(x,y)$, the main task of creative telescoping consists
of constructing polynomials $c_0,\dots, c_\rho$ in $x$, not all zero,
and another function $g$ in the same domain as $f$ such that
\begin{equation}\label{EQ:ct}
  c_\rho(x) f(x+\rho,y) +\cdots+c_1(x)f(x+1,y)+c_0(x)f(x,y)=
  g(x,y+1)-g(x,y).
\end{equation}
The number $\rho$ may or may not be part of the input. If $c_0, \dots,
c_\rho$ and $g$ are as above, then we say that
$L=c_\rho\shift_x^\rho+\cdots+c_1\shift_x+c_0$ with $\shift_x$ being
the shift operator in $x$ is a {\em telescoper} for $f$ and $g$ is a
{\em certificate} for $L$. If $c_\rho \neq 0$ then the integer $\rho$
is the {\em order} of $L$. Finally, the maximum degree in $x$ among
the polynomials $c_\ell$ is the {\em degree} of $L$.

The technique of creative telescoping has seen various generalizations
and improvements over the past three decades.  At the present time,
the so-called reduction-based approach originating
from~\citep{BCCL2010} has drawn the most attention, as it is both
efficient in practice and equipped with the useful feature that it
allows one to find a telescoper without necessarily also computing the
corresponding certificate. In other words, the computation of the
$c_\ell$ in \eqref{EQ:ct} is separated from the computation of $g$. In
a typical situation where the size of the $c_\ell$ is much smaller
than the size of $g$ and the right-hand side of \eqref{EQ:ct}
collapses to zero when summing over the defining interval, this
approach enables one to merely compute the $c_\ell$ avoiding the
costly yet unnecessary computation of the certificate $g$.  In
applications where a certificate is required, the approach also allows
one to express the certificate as an unnormalized sum so that the
summands are concatenated symbolically without actually calculating
the sum. These summands are often of much smaller sizes than the
original certificate. So far, the reduction-based approach has been
worked out for many special functions. We refer to \citep{Chen2019}
for an excellent exposition of all these algorithms.

However, it is also the case that the unnormalized expression for the
certificate returned by the reduction-based approach can introduce
superfluous terms which eventually cancel out when normalized. These
terms will not contribute to the final output but will increase sizes
of intermediate results and thus deteriorate the performance of the
approach in these applications. In order to illustrate this issue, let
us consider a simple discrete rational function of the form
\begin{equation}\label{EQ:ex}
  f(x,y) = \frac{x}{x+3y + 3m}-\frac{x}{x+3 y+3}
  +\frac{x}{x+3 y},
\end{equation}
where $m$ is an integer greater than one. Applying a reduction method,
for example, in \citep{Abra1975}, to the given rational function $f$
yields
\begin{equation}\label{EQ:exred0}
  f(x,y) = g_0(x,y+1)-g_0(x,y) + r_0\quad
  \text{with}\ g_0(x,y)=\sum_{k=1}^{m-1}\frac{x}{x+3y+3k} ~~
  \ \text{and}\ ~~ r_0=\frac{x}{x+3 y},
\end{equation}
where $r_0$ has the denominator of lowest possible degree in $y$.
Based on the form \eqref{EQ:exred0}, iteratively applying the chosen
reduction method to each $f(x+\ell,y)$ for $\ell \geq 0$ gives
\begin{align*}
  f(x+\ell,y) = g_\ell(x,y+1)-g_\ell(x,y)+r_\ell
  \quad\text{with}\ r_\ell=\frac{x+\ell}{x+3 y+\bar\ell},
\end{align*}
where $\bar\ell\in\{0,1,2\}$ is $\ell$ reduced modulo $3$ and
\[
g_\ell(x,y) = g_0(x+\ell,y)+\sum_{k=1}^{\lfloor \ell/3\rfloor}
\frac{x+\ell}{x+3 y +3(k-1) + \bar \ell}.
\]
Finding a linear dependency amongst the $r_\ell$ reduces to solving
the following linear system
\begin{equation}\label{EQ:exrctsys}
\begin{pmatrix}
  9x & 9x+9 & 9x+18 & 9x+27\\
  6x^2+9x & 6x^2+12x+6 & 6x^2+15x+6 & 6x^2+27x+27\\
  x^3+3x^2+2x & x^3+3x^2+2x & x^3+3x^2+2x & x^3+6x^2+11x+6
\end{pmatrix}
\begin{pmatrix} c_0\\c_1\\c_2\\c_3\end{pmatrix}
  =\begin{pmatrix} 0\\0\\0\end{pmatrix}.
\end{equation}
A nontrivial polynomial solution $(c_0,c_1,c_2,c_3) =(-(x+3),0,0,x)$
then gives
\begin{equation}\label{EQ:extel}
  L = x\,\shift_x^3 - (x+3),
\end{equation}
a telescoper for $f$ of minimal order with a corresponding certificate
\begin{equation}\label{EQ:excert}
  g(x,y) = x\cdot g_3(x,y) - (x+3)\cdot g_0(x,y)
  =\frac{x(x+3)}{x+3 y +3m}
  -\frac{x(x+3)}{x+3 y +3}
  +\frac{x(x+3)}{x+3 y}
\end{equation}
obtained by canceling out the common $m-2$ terms in the summation. As
the $m$ increases, the size of each $g_\ell$ grows rapidly, whereas
the expanded certificate $g$ may still be small. In this particular
example, it is actually more reasonable to use the decomposition
\[
f(x,y) = g_0(x,y+1)-g_0(x,y) + r_0,\quad
\text{with}\ g_0(x,y)=-\frac{x}{x+3 y}
\ \text{and} \ r_0 = \frac{x}{x+3 y+3m},
\]
instead of \eqref{EQ:exred0}. This leads to an alternate choice of
$r_\ell$ for each $f(x+\ell,y)$, with the corresponding $g_\ell$
having the denominator of much smaller degree in $y$.  With this
choice one gets the same telescoper $L$ and the same certificate $g$
as before, but this time there is no cancellation happening in
\eqref{EQ:excert}. That is, the unnormalized sum gives the final size
of the certificate.  This suggests a solution to the above issue.
Namely, find an initial decomposition \eqref{EQ:exred0} with both
$r_0$ and $g_0$ having denominators of lowest possible degrees in $y$
using the method proposed in \citep{Poly2011,Zima2011} to initiate the
iterative process of the reduction-based approach. However this
process requires a full irreducible factorization of a polynomial.

Separate from the previously mentioned work, there is an alternate
method developed by \cite{Le2003a} which constructs telescopers in a
direct fashion.  This method was later used by \cite{ChKa2012a} to
obtain the best order-degree curve known so far for telescopers of
bivariate rational functions. Currently, the method has only been
worked out for bivariate rational functions in the ($q$-)shift case.
Nevertheless, the method is still interesting because it also has the
feature that the computation of a telescoper does not depend on its
certificate. In order to demonstrate its main idea, consider again the
rational function $f$ given in \eqref{EQ:ex}. As with the
reduction-based approach, this method first decomposes $f$ as in
\eqref{EQ:exred0}.  The difference is that it later decomposes $r_0$
as the sum of several simple fractions of numerators in $x$ only,
which in our example is merely $x\left(\frac1{x+3y}\right)$. By
viewing $x=x\,\shift_x^0$ as a recurrence operator of order zero and
using the fact that $\shift_x^3-1$ is a minimal telescoper for
$\frac1{x+3y}$ with a corresponding certificate $\frac1{x+3y}$, Le's
method then computes the least common left multiple of $x$ and
$\shift_x^3-1$ with the left cofactor of $x$ (resp.\ $\shift_x^3-1$)
giving rise to the same telescoper $L$ as in \eqref{EQ:extel}
(resp.\ its certificate $\frac{x(x+3)}{x+3y}$) for the simple fraction
$x\left(\frac1{x+3y}\right)=r_0$. In the more general case where there
is more than one simple fraction in $r_0$, one finds a telescoper of
minimal order for $r_0$ by calculating the least common left multiple
of all telescopers for individual simple fractions. Together with
\eqref{EQ:exred0}, the method yields a telescoper of minimal order for
$f$, namely $L$, as well as its (optional) certificate of the form
\[
g = L(g_0) + \frac{x(x+3)}{x+3y}.
\]
Rather than leaving the certificate as a (potentially large)
unnormalized sum as done by the reduction-based approach, this method
represents the certificate by recurrence operators.  This
representation enables one to more easily manipulate the certificate
or analyze its various properties such as the singularities without
knowing its expanded form. However, the intermediate expression swell
which happens in the certificate is still unavoidable due to
\eqref{EQ:exred0}. A second disadvantage is that this method requires
the numerator of each simple fraction appearing in the decomposition
to be independent of $y$, often requiring one to work in algebraic
extensions of the base field.

\subsection{Proposed new approach}
Our new algorithm constructs a telescoper for a rational function in a
similar fashion as the reduction-based approach, but incorporating the
idea from the method of \cite{Le2003a}. As a result, our algorithm
completely avoids algebraic extensions of the base field and
intermediate expression swell in the certificate.  In order to
describe the main idea of our algorithm, let us continue the example
\eqref{EQ:ex}. Unlike the reduction-based approach and the method of
Le, we first find a recurrence operator $M$ allowing us to rewrite $f$
in the form
\[
f = \underbrace{(x\,\shift_x^{3m}-x\,\shift_x^3 +x)}_{M}\left(\frac1{x+3y}\right).
\]
Assume that we want to find a telescoper for $f$ of order no more than
$\rho\in\set N$, say $\rho = 3$. We then make an ansatz $L =
c_3\shift_x^3+c_2\shift_x^2+c_1\shift_x+c_0$ with $c_0,c_1,c_2,c_3$ to
be determined.  Using the relation $\shift_x^3(x+3y) = \shift_y(x+3y)$
with $\shift_y$ being the shift operator in $y$, we calculate the left
scalar remainder
\[
R = (x+2)c_2 \shift_x^2+(x+1)c_1\shift_x+((x+3)c_3 + xc_0)
\]
from the so-called left scalar division of $L\odot M$ by $\shift_y-1$,
where $L\odot M$ is the multiplication of $M$ by $L$ from the
left-hand side modulo the left ideal generated by
$\shift_y-\shift_x^3$ (see Section~\ref{SEC:integerlinear} for a
precise definition). We show that $L$ is a telescoper if and only if
$R = 0$. The problem is then reduced to solving the following linear
system
\begin{equation}\label{EQ:exoctsys}
\begin{pmatrix}
  x & 0 & 0 & x+3\\
  0 & 0 & x+2 & 0\\
  0 & x+1 & 0 & 0
\end{pmatrix}
\begin{pmatrix}
  c_0\\c_1\\c_2\\c_3\end{pmatrix}
  =\begin{pmatrix} 0\\0\\0\end{pmatrix}.
\end{equation}
One immediately reads a nontrivial polynomial solution
$(c_0,c_1,c_2,c_3) =(-(x+3),0,0,x)$, which yields the telescoper $L$
given by \eqref{EQ:extel}. In terms of the certificate, we either follow
the idea from \citep{GGSZ2003} and use the compact representation
\[
g = \leftquo(L\odot M, \shift_y-1)\left(\frac{1}{x+3y}\right),
\]
or expand it as \eqref{EQ:excert} by noticing $\leftquo(L\odot M,
\shift_y-1) =x(x+3)\shift_x^{3m}-x(x+3)\shift_x^3 + x(x+3)$, where
$\leftquo$ denotes the left scalar quotient obtained from the left
scalar division.

In the case where the induced linear system admits no nontrivial
solutions, we then have shown that there does not exist any telescoper
of order no more than $\rho$ for the given rational function.  In
order to find a telescoper of minimal order, one can execute the above
process incrementally by letting $\rho = 0,1, 2,\dots$.  The
termination of the new algorithm is guaranteed by the existence
criterion for telescopers of rational functions given in
\citep[Theorem~1]{AbLe2002}, which essentially boils down to checking
the integer-linearity of a polynomial. In the general case, the
operator $\shift_x$ in $M$ is replaced by a special recurrence
operator acting particularly on integer-linear rational functions of
one type, and the given rational function is initially separated into
several simple fractions according to integer-linear types.

In summary, our main contribution is a new algorithm for computing
minimal telescopers for rational functions. As with the
reduction-based approach and the method of Le, our algorithm separates
the computation of the telescoper from that of the certificate. When
the certificate is needed our algorithm computes it in a compact form,
hiding potential expression swell until a final, optional expansion.
Compared to Le's method, our algorithm avoids the need for algebraic
extensions.  In addition, if an expanded form for the certificate is
desired then it can be computed easily in time polynomial in the size
of the expanded certificate. Moreover, comparing \eqref{EQ:exoctsys}
with \eqref{EQ:exrctsys} suggests that our algorithm also has better
control for the size of intermediate expressions involved in the
computation of the telescoper.

The arithmetic cost of our new algorithm, as well as that of the
reduction-based approach in the rational case, is analyzed in this
paper. We note that, until recently, most complexity analyses were
done for the differential
case~\citep{BCCL2010,BLS2013,BCLS2018,vdHo2020} whereas little has
been known for the shift case. The complexity analysis shows that our
new algorithm is at least one order of magnitude faster than the
reduction-based approach in the rational case when the certificate is
not expanded. A Maple implementation further confirms that our
approach outperforms the reduction-based approach when restricted to
the rational case. In addition, the new algorithm is easy to analyze
and leads to a tight order-degree curve for telescopers, a property
shared with the method of Le.

The remainder of the paper proceeds as follows. Some basic notions and
results are recalled in the next section for later use. In particular,
two important decompositions of polynomials in the bivariate setting
are reviewed. A kind of recurrence operators specifically working on
integer-linear rational functions of one type is introduced in
Section~\ref{SEC:integerlinear}.  Based on basic arithmetic for
operators of this kind, Section~\ref{SEC:oct} describes a new
algorithm to construct a telescoper of minimal order for bivariate
rational functions. Section~\ref{SEC:complexity} provides a cost
analysis of our new algorithm, followed in Section~\ref{SEC:rct} by a
brief summary and a cost analysis of the reduction-based approach in
the rational case.  Section~\ref{SEC:timing} contains some
experimental comparison among all above-mentioned approaches. The
paper ends with some topics for future research.

\section{Preliminaries}\label{SEC:prel}
Throughout the paper $\set K$ denotes a field of characteristic zero
with $\set K(x,y)$ the field of rational functions in~$x,y$ over~$\set
K$. We let $\sigma_x$ and $\sigma_y$ be the automorphisms over $\set
K(x,y)$, which, for any $f\in \set K(x,y)$, are defined by
\[
\sigma_x(f(x,y))=f(x+1,y)\quad\text{and}\quad
\sigma_y(f(x,y))=f(x,y+1).
\]
A rational function $f\in \set K(x,y)$ is called {\em summable} with
respect to~$y$ (or {\em $\sigma_y$-summable} for short) if $f=
\sigma_y(g)-g$ for some $g\in \set K(x,y)$. A nonzero polynomial $f\in
\set K[x,y]$ is called {\em shift-free} with respect to~$y$ (or {\em
  $\sigma_y$-free} for short) if $\gcd(f,\sigma_y^\ell(f)) \in \set
K[x]$ for all nonzero integers $\ell$.

Let $f$ be a polynomial in $\set K[x,y]$. Throughout this paper, we
will order terms using a pure lexicographic order with $x\prec y$. For
this order, we let $\lc_{x,y}(f)$ denote the leading coefficient of
$f$ over $\set K$ with respect to $x,y$. We say that $f$ is {\em
  monic} with respect to $x,y$ if $\lc_{x,y}(f)=1$. In the sequel,
unless there is a danger of confusion, we will just say that $f$ is
monic, omitting the variables. We also denote by $\deg_x(f)$ and
$\deg_y(f)$ the degrees of $f$ with respect to $x$ and $y$,
respectively, following the convention that
$\deg_x(0)=\deg_y(0)=-\infty$.

Let $\set K(x,y)[\shift_x,\shift_y]$ be the ring of linear recurrence
operators in $x,y$ over $\set K(x,y)$, in which the following
commutation rules hold: $\shift_x\shift_y=\shift_y\shift_x$ and
$\shift_x f=\sigma_x(f)\shift_x$, $\shift_yf=\sigma_y(f)\shift_y$ for
any $f\in \set K(x,y)$. The application of an operator $L =
\sum_{i,j\geq 0} a_{ij}\shift_x^i\shift_y^j$ in $\set
K(x,y)[\shift_x,\shift_y]$ to a rational function $f\in \set K(x,y)$
is then defined as $L(f)=\sum_{i,j\geq 0}a_{ij}\sigma_x^i\sigma_y^j(f)$.
\begin{definition}\label{DEF:telescoper}
  Let $f$ be a rational function in $\set K(x,y)$. A nonzero operator
  $L\in \set K[x][\emph\shift_x]$ is called a {\em telescoper} for $f$
  if $L(f)$ is $\sigma_y$-summable, or equivalently, there exists a
  rational function $g\in \set K(x,y)$ such that
  \[
  L(f) = (\emph\shift_y-1)(g),
  \]
  where 1 denotes the identity map of $\set K(x,y)$. We call $g$ a
  corresponding {\em certificate} for $L$. The {\em order} and
  {\em degree} of $L$ are defined to be its degree in $\emph\shift_x$
  and the maximum degree in $x$ of its coefficients with respect to
  $\emph\shift_x$, respectively. A telescoper of minimal order is also
  called a {\em minimal telescoper}.
\end{definition}

In the rest of this section, we introduce two important decompositions
of polynomials, both of which will play crucial roles in our later
algorithms.

\subsection{Shift-homogeneous decomposition and GGSZ reduction}
\label{SUBSEC:ggszred}
Recall that two polynomials $f,g\in \set K[x,y]$ are called 
{\em shift-equivalent} with respect to~$y$ (or {\em $\sigma_y$-equivalent} 
for short), denoted by $f\sim_y g$, if $f = \sigma_y^m(g)$ for some 
$m\in \set Z$. Clearly, $\sim_y$ is an equivalence relation. The 
$\sigma_y$-equivalence of two polynomials can be easily recognized
by comparing coefficients.

By grouping together its $\sigma_y$-equivalent irreducible factors,
any polynomial $g\in \set K[x,y]$ can be written in the form
\begin{equation}\label{EQ:shiftlessdecomp}
g = c\,\prod_{i=1}^m\prod_{j=1}^{n_i}\sigma_y^{\nu_{ij}}(g_i)^{e_{ij}},
\end{equation}
where $c\in \set K[x]$, $m,n_i,\nu_{ij}, e_{ij}\in\set N$ with
$0=\nu_{i1}<\nu_{i2}<\dots<\nu_{in_i}$ and $e_{ij}>0$, $g_i \in \set
K[x,y]$ is monic, irreducible and of positive degree in $y$, and the
$g_i$ are pairwise $\sigma_y$-inequivalent. Since $\set K[x,y]$ is a
unique factorization domain, the decomposition
\eqref{EQ:shiftlessdecomp} is unique up to the order of factors.  In
view of this, we call \eqref{EQ:shiftlessdecomp} the {\em
  shift-homogeneous decomposition} of $g$ with respect to~$y$.

We note that in the context of univariate polynomials, the
shift-homogeneous decomposition is the same as the most refined
shiftless decomposition defined in \citep{GGSZ2003}. 
Based on shiftless decompositions, a reduction algorithm for
univariate rational functions, named {\bf RatSum}, was developed 
in the same paper. This
algorithm can be carried over to the case of bivariate rational
functions in a straightforward manner, to which we will refer as the
GGSZ reduction later for convenience, named after the authors.  The
input and output of the GGSZ reduction are given below.

\smallskip\noindent{\bf GGSZReduction.}  Given a rational function
$f\in \set K(x,y)$, compute two rational functions $h, r$ in $\set
K(x,y)$ with $r=a/b$, $a,b\in \set K[x,y]$, $\deg_y(a)<\deg_y(b)$ and
$b$ being $\sigma_y$-free such that
\begin{equation}\label{EQ:ggszred}
f = (\shift_y-1)(h)+r.
\end{equation}

Such a reduction algorithm is vital for many creative telescoping
approaches, including the reduction-based one in \citep{CHKL2015}, the
method of \cite{Le2003a} and the algorithm introduced in this paper.
Unlike previous reduction algorithms as given in
\citep{Abra1975,Paul1995}, the GGSZ reduction uses a compact
representation of $h$ in \eqref{EQ:ggszred} in terms of left quotients
(see Example~\ref{EX:ggszred} for an illustration), and hence works in
polynomial-time of the size of the input without the final expansion.

\begin{example}\label{EX:ggszred}
  Let $g$ be a polynomial of the form
  \[
  (xy+1)(x(y+1)+1)(x(y+29)+1)(x(y+30)+1)
  ((-5x+2y)^2+1)((-5x+2y+1)^2+1)((3x+10y)^3+1).
  \]
  Then by grouping together $\sigma_y$-equivalent irreducible factors,
  we obtain
  \[
  g = g_0 \sigma_y(g_0)\sigma_y^{29}(g_0)\sigma_y^{30}(g_0)
  g_1 g_2 g_3 g_4,
  \]
  where $g_0 =xy+1$, $g_1 = (-5x+2y)^2+1$, $g_2 = (-5x+2y+1)^2+1$
  $g_3 = (3x+10y)+1$ and $g_4 = (3x+10y)^2-(3x+10y)+1$.  Up to making
  $g_1,g_2,g_3,g_4$ monic, the above equation gives the
  shift-homogeneous decomposition of $g$ with respect to~$y$.
	
  Let $f$ be a rational function with denominator $g$ admitting the
  following decomposition
  \begin{align*}
    \frac{2x+3}{\sigma_y^{30}(g_0)}-\frac{2x+3}{\sigma_y^{29}(g_0)}
    -\frac1{\sigma_y(g_0)}+\frac1{g_0}+
    \frac{2x^2+1}{(-5x+2y)^2+1}+\frac{x-1}{(-5x+2y+1)^2+1}
    +\frac{xy+1}{(3x+10y)^3+1}.
  \end{align*}
  We remark that all decomposed forms given in our examples are for
  readability only. Applying the GGSZ reduction to $f$ then yields
  \eqref{EQ:ggszred} with
  \begin{align}
    h &= \lquo((2x+3)\emph\shift_y^{30}-(2x+3)\emph\shift_y^{29}-\emph\shift_y+1,
    \emph\shift_y-1)\left(\frac{1}{g_0}\right)
    =((2x+3)\emph\shift_y^{29}-1)\left(\frac{1}{g_0}\right)\nonumber\\
    \quad\text{and}\quad
    r &= \frac{2x^2+1}{(-5x+2y)^2+1}+\frac{x-1}{(-5x+2y+1)^2+1}
    +\frac{xy+1}{(3x+10y)^3+1},\label{EQ:red0}
  \end{align}
  where $\lquo$ denotes the left quotient in the ring $\set
  Q(x,y)[\emph\shift_y]$. Note that, in this example, the left
  quotient in $h$ is a sparse operator although it is of relatively
  high order 29. Hence the expanded form of $h$ is small. Since $r\neq
  0$, then $f$ is not $\sigma_y$-summable by
  \citep[Theorem~12]{GGSZ2003}. We will use $f$ as a running example
  in this paper.
\end{example}

\subsection{Integer-linear decomposition and its refinement}
Recall that an irreducible polynomial $g\in \set K[x,y]$ is called
{\em integer-linear} (over~$\set K$) if it is of the form $p(\lambda
x+\mu y)$ for some integers $\lambda,\mu$ and a univariate polynomial
$p\in \set K[z]$.  Note that $\lambda,\mu$ cannot both be zero since
$g$ is irreducible and thus nonunit. By pulling out a common factor
and absorbing it into $p$, one may assume without loss of generality
that $\lambda,\mu$ are coprime and that $\mu\geq 0$.  Such a pair
$(\lambda,\mu)$ is unique and is called the {\em integer-linear type}
of $g$. For the sake of completeness, we let a constant polynomial be
integer-linear of type $(0,0)$.  A polynomial in $\set K[x,y]$ is then
called {\em integer-linear} (over~$\set K$) if all its irreducible
factors are integer-linear, possibly with different integer-linear
types.  A rational function in $\set K(x,y)$ is called {\em
  integer-linear} (over~$\set K$) if its denominator and numerator are
both integer-linear.
\begin{definition}\label{DEF:ildecomp}
  Let $g\in \set K[x,y]$ be a polynomial admitting the decomposition
  \begin{equation}\label{EQ:ildecomp}
    g = p_0(x,y) \prod_{i=1}^{m} p_i(\lambda_i x+ \mu_i y),
  \end{equation}
  where $p_0\in \set K[x,y]$, $m\in \set N$, $\lambda_i,\mu_i\in \set
  Z$ and $p_{i}\in \set K[z]$ for $1\leq i\leq m$. Then
  \eqref{EQ:ildecomp} is called the {\em integer-linear decomposition}
  of $g$ if
  \begin{itemize}
  \item none of irreducible factors of $p_0$ is integer-linear;
  \item $p_1,\dots,p_m$ are monic and of positive degrees in $z$;
  \item each $(\lambda_i,\mu_i)$ satisfies $\gcd(\lambda_i,\mu_i) = 1$
    and $\mu_i\geq 0$;
  \item any two pairs of the $(\lambda_i,\mu_i)$ are distinct.
  \end{itemize}
  The $(\lambda_i,\mu_i)$ are called {\em integer-linear types} of
  $g$. If $g$ is clear from the context, we will simply say that the
  $(\lambda_i,\mu_i)$ are integer-linear types.
\end{definition}
Clearly, $g$ is integer-linear if and only if $p_0\in \set K$ in
\eqref{EQ:ildecomp}.  By the uniqueness of full factorization and
integer-linear types, we see that every polynomial admits a unique
integer-linear decomposition up to the order of the factors.

In terms of computation, an efficient algorithm for finding
integer-linear decompositions of general multivariate polynomials was
recently proposed by authors~\citep{GHLZ2019}. Compared with previous
known approaches \citep{AbLe2002,LiZh2013}, this algorithm performs
better both in theory and in practice.

Recall that two polynomials $f,g\in \set K[x,y]$ are called
{\em shift-equivalent} with respect to~$x,y$ (or
{\em $(\sigma_x,\sigma_y)$-equivalent} for short), denoted by $f
\sim_{x,y} g$, if there exist $\ell,m\in \set Z$ such that
$f=\sigma_x^{\ell}\sigma_y^{m}(g)$. Clearly, $\sim_{x,y}$ is an
equivalence relation and contains the relation $\sim_y$.
Suppose that $f,g$ are integer-linear of the forms
$f(x,y)=p_1(\lambda_1 x+\mu_1 y)$ and $g(x,y)=p_2(\lambda_2 x+\mu_2y)$
for $p_i\in \set K[z]$ and $\lambda_i,\mu_i\in \set Z$ with $\mu_i\geq
0$ and $\gcd(\lambda_i,\mu_i)=1$. Then $f\sim_{x,y} g$ implies that
$(\lambda_1,\mu_1)=(\lambda_2,\mu_2)$ and $p_1(z)=p_2(z+\ell)$ for
some $\ell\in \set Z$, and conversely.
This indicates that for any two integer-linear polynomials of single
types, testing their $(\sigma_x,\sigma_y)$-equivalence amounts to
checking the equality of the integer-linear types and the
shift-equivalence of univariate polynomials.

Let $p\in\set K[z]$ be a monic polynomial of positive degree in $z$, 
and let $(\lambda,\mu)$ be an integer-linear type with $\mu > 0$.  
By computing the shift-homogeneous decomposition of $p$
with respect to $z$, we obtain
$p(z) = \prod_{i=1}^m\prod_{j=1}^{n_i}p_i(z+\nu_{ij})^{e_{ij}}$,
where $m,n_i,\nu_{ij}, e_{ij}\in\set N$ with
$0=\nu_{i1}<\nu_{i2}<\dots<\nu_{in_i}$ and $e_{ij}>0$, $p_i\in\set K[z]$
is monic and irreducible, and the $p_i$ are pairwise
shift-inequivalent with respect to $z$. It then follows that
\[
p(\lambda x + \mu y) = \prod_{i=1}^m\prod_{j=1}^{n_i}
p_i(\lambda x + \mu y + \nu_{ij})^{e_{ij}},
\]
where the $p_i(\lambda x+ \mu y)$ are pairwise
$(\sigma_x,\sigma_y)$-inequivalent.

Consider now a polynomial $g\in\set K[x,y]$ with the integer-linear
decomposition \eqref{EQ:ildecomp}. For each factor $p_i(\lambda_i
x+\mu_i y)$ with $1\leq i\leq m$ in \eqref{EQ:ildecomp}, if $\mu_i =
0$ we then absorb it into $p_0$; otherwise we further split it into
distinct $(\sigma_x,\sigma_y)$-equivalence classes using the procedure
described in the preceding paragraph. By relabeling all the resulting
factors, we finally derive the following decomposition (with a slight
abuse of notation)
\begin{equation}\label{EQ:refinedildecomp}
  g=p_0(x,y)\prod_{i=1}^m\prod_{j=1}^{n_i} p_{i}(\lambda_i x+\mu_i
  y+\nu_{ij})^{e_{ij}},
\end{equation}
where $p_0\in \set K[x,y]$, $m,n_i,\nu_{ij}, e_{ij}\in \set N$,
$\lambda_i, \mu_i\in\set Z$ and $p_1,\dots,p_m\in \set K[z]$
satisfying
\begin{itemize}
\item none of irreducible factors of $p_0$ of positive degree in $y$ 
  is integer-linear;
\item $p_1,\dots,p_m$ are monic and irreducible;
\item each $(\lambda_i,\mu_i)$ is an integer-linear type with $\mu_i>0$;
\item $p_i(\lambda_i x + \mu_i y)\nsim_{x,y}p_j(\lambda_j x + \mu_j y)$
  for any two integers $i,j$ with $1\leq i< j\leq m$; or equivalently,
  either $(\lambda_i,\mu_i)\neq(\lambda_j,\mu_j)$ or $p_i$ is
  shift-inequivalent with $p_j$ with respect to $z$;
\item $0=\nu_{i1}<\cdots <\nu_{in_i}$ and $e_{ij}>0$.
\end{itemize}
Evidently, the above decomposition is unique up to the order of
factors.  We will call \eqref{EQ:refinedildecomp} the
{\em refined integer-linear decomposition} of the polynomial $g$.

\begin{example}\label{EX:intlinear}
  Let $g$ be the same polynomial as given in Example~\ref{EX:ggszred}.
  By definition, it is easy to see that $g$ possesses the
  integer-linear decomposition
  \[
  g = p_0(x,y) \tilde p_1(-5x+2y) \tilde p_2(3x+10y),
  \]
  where $p_0=g_0\sigma_y(g_0)\sigma_y^{29}(g_0)\sigma_y^{30}(g_0)$
  with $g_0 = x y + 1$, $\tilde p_1(z) = (z^2+1)((z+1)^2+1)$ and
  $\tilde p_2(z) = z^3+1$. Computing the shift-homogeneous
  decompositions of $\tilde p_1$ and $\tilde p_2$ with respect to $z$
  then yields the refined integer-linear decomposition
  \begin{equation}\label{EQ:grefined}
    g = p_0(x,y)p_1(-5x+2y)p_1(-5x+2y+1)p_2(3x+10y)p_3(3x+10y)
  \end{equation}
  with $p_1(z) = z^2+1$, $p_2(z) = z+1$ and $p_3(z) = z^2-z+1$.
\end{example}

\section{Integer-linear operators}\label{SEC:integerlinear}
In this section, we introduce another vital ingredient of our
algorithms, in this case a special recurrence operator specifically
acting on integer-linear rational functions of a single type.

By a standard localization at a left Ore set (see \citep[\S 0.9]{Cohn1985}
or \citep[\S 3.1]{Rowe1988a}), the ring $\set K(x,y)[\shift_x,\shift_y]$
can be extended to
\[
\cal A:=\set K(x,y)[\shift_x, \shift_y, \shift_x^{-1}, \shift_y^{-1}].
\]
Here $\shift_x^{-1} f = \sigma_x^{-1}(f) \shift_x^{-1}$ and
$\shift_y^{-1} f = \sigma_y^{-1}(f) \shift_y^{-1}$ for all $f\in\set
K(x,y)$ with $\sigma_x^{-1}, \sigma_y^{-1}$ denoting the inverse maps
of the automorphisms $\sigma_x,\sigma_y$, respectively. For an
operator $L\in\cal A$, there exist unique rational functions
$a_{ij}\in\set K(x,y)$, finitely many nonzero, such that $L =
\sum_{i,j \in \set Z} a_{ij} \shift_x^i \shift_y^j$.

In the rest of this section, we fix a pair $(\lambda,\mu)$ of coprime
integers with $\mu>0$. Then there exist unique integers $\alpha,\beta$
such that
\begin{equation}\label{EQ:bezout}
  \alpha \lambda+\beta\mu=1,
\end{equation}
with the constraints $0\leq \alpha <\mu$ and $|\beta| \leq |\lambda|$
if $\lambda \neq 0$, or $\alpha = 0$ and $\beta = 1$ otherwise. Set
$\lmshift$ to be the product $\shift_x^\alpha\shift_y^\beta$. Then
\[
\lmA := \set K(x,y)[\lmshift,\lmshift^{-1}]
\]
is a subring of $\cal A$, which consists of all integer-linear
operators of type $(\lambda,\mu)$.

We can view $\lmA$ as a left module over $\cal A$ as follows. Define
the left $\set K(x,y)$-linear map
\[ 
\begin{array}{cccc}
  \lmphi: & \cal A & \longrightarrow & \lmA \\
  &  \sum_{i,j \in \set Z} a_{ij} \shift_x^i \shift_y^j & \mapsto 
  & \sum_{i,j \in \set Z} a_{ij} \lmshift^{i \lambda + j \mu}. 
\end{array}
\]
The image and kernel of $\lmphi$ are determined below.
\begin{proposition}\label{PROP:surjective}
  The restriction of $\lmphi$ on $\lmA$ is the identity. Consequently,
  $\lmphi$ is surjective.
\end{proposition}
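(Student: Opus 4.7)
The plan is to prove the identity claim directly; surjectivity will then follow immediately since $\lmA\subseteq\cal A$. I would fix an arbitrary $L\in\lmA$, written uniquely as $L = \sum_k b_k\,\lmshift^k$ with $b_k\in\set K(x,y)$, and view the same element inside $\cal A$ via $\lmshift^k = \shift_x^{k\alpha}\shift_y^{k\beta}$ (the exponents combine as expected because $\shift_x\shift_y=\shift_y\shift_x$). Since $\alpha\lambda+\beta\mu=1$ forces $(\alpha,\beta)\neq(0,0)$, the map $k\mapsto(k\alpha,k\beta)$ is injective, so $\sum_k b_k\,\shift_x^{k\alpha}\shift_y^{k\beta}$ already \emph{is} the canonical expansion $\sum_{i,j}a_{ij}\shift_x^i\shift_y^j$ of $L$ in $\cal A$, with $a_{k\alpha,k\beta}=b_k$ and $a_{ij}=0$ otherwise.

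Feeding this canonical form into the definition of $\lmphi$ gives
\[
\lmphi(L) \;=\; \sum_{k\in\set Z} b_k\, \lmshift^{k\alpha\lambda+k\beta\mu}.
\]
The Bezout relation \eqref{EQ:bezout} collapses the exponent to $k$, so $\lmphi(L)=\sum_k b_k\,\lmshift^k = L$, which establishes the first claim. For surjectivity, I would observe that every $L\in\lmA$ already lies in the image of $\lmphi$, since it equals $\lmphi(L)$; combined with the obvious inclusion $\lmphi(\cal A)\subseteq\lmA$, this yields $\lmphi(\cal A)=\lmA$.

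The only real obstacle is bookkeeping: one must carefully match the two representations of $L$ (as an element of $\lmA$ and as an element of $\cal A$) so that the coefficients $a_{ij}$ driving the definition of $\lmphi$ are read off correctly. The Bezout identity \eqref{EQ:bezout} is engineered precisely so that this matching inverts cleanly on the exponent lattice $\{(k\alpha,k\beta):k\in\set Z\}$, and there is no genuine technical difficulty beyond this identification.
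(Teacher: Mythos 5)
Your proof is correct and follows essentially the same route as the paper: both reduce the claim to the computation $\lmphi(\lmshift^k)=\lmphi(\shift_x^{k\alpha}\shift_y^{k\beta})=\lmshift^{k(\alpha\lambda+\beta\mu)}=\lmshift^k$ via the Bezout relation \eqref{EQ:bezout}, then extend by $\set K(x,y)$-linearity. Your version merely makes the coefficient bookkeeping explicit, which the paper leaves implicit.
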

\begin{proof}
  By \eqref{EQ:bezout}, $\lmphi(\lmshift^i) = \lmshift^i$ for all
  $i\in \set Z$, which, together with the definition of $\lmphi$,
  implies the assertion.
\end{proof}
\begin{lemma}\label{LEM:prod}
  For every $L\in \cal A$ and $k,\ell\in\set Z$, we have
  $\lmphi(L\emph\shift_x^k\emph\shift_y^\ell) =
  \lmphi(L)\lmphi(\emph\shift_x^k\emph\shift_y^\ell)$.
\end{lemma}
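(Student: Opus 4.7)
The plan is to unwind the definitions, using just two structural facts: that the generators $\shift_x, \shift_y$ commute in $\cal A$, and that $\lmphi$ is left $\set K(x,y)$-linear by definition. The only genuine arithmetic is the linearity identity $(i+k)\lambda+(j+\ell)\mu = (i\lambda+j\mu)+(k\lambda+\ell\mu)$ in the exponent of $\lmshift$.

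First I would write $L$ in normal form $L = \sum_{i,j\in\set Z} a_{ij}\shift_x^i\shift_y^j$ with finitely many nonzero coefficients $a_{ij}\in\set K(x,y)$, as guaranteed by the description of $\cal A$ given right before the definition of $\lmphi$. Using $\shift_x\shift_y=\shift_y\shift_x$ (and their inverses), and the fact that the $a_{ij}$ are already on the left of the shifts, I can then write
\[
L\shift_x^k\shift_y^\ell = \sum_{i,j\in\set Z} a_{ij}\shift_x^{i+k}\shift_y^{j+\ell},
\]
which is again in normal form. Applying $\lmphi$ directly from its definition yields
\[
\lmphi(L\shift_x^k\shift_y^\ell)
 = \sum_{i,j\in\set Z} a_{ij}\lmshift^{(i+k)\lambda+(j+\ell)\mu}.
\]

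Next I would split the exponent as $(i\lambda+j\mu)+(k\lambda+\ell\mu)$ and use that powers of the single operator $\lmshift$ commute with each other to factor the right-hand sum as $\bigl(\sum_{i,j} a_{ij}\lmshift^{i\lambda+j\mu}\bigr)\lmshift^{k\lambda+\ell\mu}$. The left factor is $\lmphi(L)$ by definition, and the right factor equals $\lmphi(\shift_x^k\shift_y^\ell)$, again directly from the definition of $\lmphi$ on a single monomial. Combining these gives the claimed equality.

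There is no real obstacle here — this is purely a bookkeeping lemma whose role is to make $\lmphi$ behave multiplicatively against right-multiplication by pure shift monomials, which will then be used (together with Proposition~\ref{PROP:surjective} and left $\set K(x,y)$-linearity) to extend the identity to general products in the sequel. The one place that warrants explicit mention is that the normal-form rewriting of $L\shift_x^k\shift_y^\ell$ does \emph{not} require pushing any $a_{ij}$ past a shift, so no application of $\sigma_x,\sigma_y$ to the coefficients occurs; this is precisely what makes the two sides agree without any twisting.
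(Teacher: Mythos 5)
Your proof is correct and follows essentially the same route as the paper's: the paper verifies the identity on monomials $\shift_x^i\shift_y^j$ and then invokes the left $\set K(x,y)$-linearity of $\lmphi$, which is exactly the content of your single explicit computation. Your closing remark — that right-multiplication by a shift monomial never pushes a coefficient $a_{ij}$ past a shift, so no twisting by $\sigma_x,\sigma_y$ occurs — correctly identifies why the lemma holds here while $\lmphi$ fails to be a ring homomorphism in general.
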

\begin{proof}
  A straightforward calculation based on the definition of $\lmphi$
  implies that
  \[
  \lmphi((\shift_x^i\shift_y^j)(\shift_x^k\shift_y^\ell))
  = \lmphi(\shift_x^i\shift_y^j)\lmphi(\shift_x^k\shift_y^\ell)
  \quad\text{for all $i,j\in\set Z$}.
  \]
  The lemma then follows from the linearity of $\lmphi$.
\end{proof}
The above lemma does not imply that $\lmphi$ is a ring homomorphism.
In fact, one can easily verify that $\lmphi(\shift_y y)\neq
\lmphi(\shift_y)\lmphi(y)$ provided that $\beta\mu\neq 1$.
\begin{proposition}\label{PROP:kernel}
  The kernel of $\lmphi$ is the left ideal generated by
  $\emph\shift_x-\emph\shift_{\lambda,\mu}^\lambda$ and
  $\emph\shift_y-\emph\shift_{\lambda,\mu}^\mu$ in $\cal A$.
\end{proposition}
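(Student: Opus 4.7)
The plan is to prove both inclusions of the ideal equality, denoting by $I$ the left ideal generated by $G_1 := \shift_x - \lmshift^\lambda$ and $G_2 := \shift_y - \lmshift^\mu$ in $\cal A$. A key preliminary observation, to be used throughout, is that $\lmshift^k = \shift_x^{\alpha k}\shift_y^{\beta k}$ is a pure monomial in the commuting operators $\shift_x,\shift_y$, and therefore \emph{commutes} with every monomial $\shift_x^i\shift_y^j$; in particular $G_1$ and $G_2$ commute with all pure shift monomials (though not with scalars from $\set K(x,y)$). This will allow certain right-multiplications by monomials to be rewritten as left-multiplications without leaving $I$.

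For the inclusion $I\subseteq \ker\lmphi$, I would take an arbitrary $A=\sum a_{ij}\shift_x^i\shift_y^j\in\cal A$ and compute $\lmphi(AG_1)$. Using the $\set K(x,y)$-linearity of $\lmphi$ and Lemma~\ref{LEM:prod} (which applies since $G_1$ is a sum of two pure shift monomials $\shift_x$ and $\shift_x^{\alpha\lambda}\shift_y^{\beta\lambda}$), this becomes $\lmphi(A)\bigl(\lmphi(\shift_x)-\lmphi(\lmshift^\lambda)\bigr)$. By Proposition~\ref{PROP:surjective} the second term is $\lmshift^\lambda$, and by the definition of $\lmphi$ the first is also $\lmshift^\lambda$, so the difference vanishes. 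The same argument, with the identity $\alpha\lambda+\beta\mu=1$ showing that $\lmphi(\lmshift^\mu)=\lmphi(\shift_y)=\lmshift^\mu$, handles $AG_2$. Linearity then gives $\lmphi(I)=0$.

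For the reverse inclusion, the central claim is that
\[
\shift_x^i\shift_y^j - \lmshift^{i\lambda+j\mu}\in I\quad\text{for every }i,j\in\set Z.
\]
I would prove this by induction on $|i|+|j|$, with the base case trivial. For the inductive step with $i>0$, I rewrite
\[
\shift_x^i\shift_y^j-\lmshift^{i\lambda+j\mu}
=\shift_x\bigl(\shift_x^{i-1}\shift_y^j-\lmshift^{(i-1)\lambda+j\mu}\bigr)
+\bigl(\shift_x-\lmshift^\lambda\bigr)\lmshift^{(i-1)\lambda+j\mu}.
\]
The first summand lies in $I$ by the induction hypothesis and the left-ideal property. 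The second summand is where the preliminary commutation observation is essential: since $\lmshift^{(i-1)\lambda+j\mu}$ commutes with $G_1$, it equals $\lmshift^{(i-1)\lambda+j\mu}G_1$, which is manifestly in $I$. The cases $i<0$, $j>0$, $j<0$ are handled symmetrically, using $\shift_x^{-1}$, $G_2$, and $\shift_y^{-1}$ in place of $\shift_x$ and $G_1$; for the negative cases, one rewrites $\shift_x^{-1}-\lmshift^{-\lambda}=-\shift_x^{-1}\lmshift^{-\lambda}G_1$, which lies in $I$ by the same commutation principle.

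Once the claim is established, the conclusion is immediate: given any $L=\sum a_{ij}\shift_x^i\shift_y^j\in\ker\lmphi$, write
\[
L=\sum a_{ij}\bigl(\shift_x^i\shift_y^j-\lmshift^{i\lambda+j\mu}\bigr)+\sum a_{ij}\,\lmshift^{i\lambda+j\mu}.
\]
The first sum is in $I$ by the claim and the fact that $I$ is closed under left multiplication by $\set K(x,y)$; the second sum equals $\lmphi(L)=0$. Hence $L\in I$. The main obstacle is conceptual rather than computational: correctly exploiting the fact that although $I$ is only a left ideal, its generators commute with arbitrary pure shift monomials, so that the needed right-multiplications by such monomials still land inside $I$.
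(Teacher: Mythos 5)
Your proof is correct and follows essentially the same route as the paper's: the forward inclusion via the multiplicativity of $\lmphi$ on shift monomials (Lemma~\ref{LEM:prod}) together with Proposition~\ref{PROP:surjective}, and the reverse inclusion by showing that each monomial $\shift_x^i\shift_y^j$ differs from $\lmshift^{i\lambda+j\mu}$ by an element of the ideal $I$, so that $L-\lmphi(L)\in I$ for every $L$. The only difference is presentational: you establish that key fact by induction on $|i|+|j|$, whereas the paper expands $(\shift_x-\lmshift^\lambda+\lmshift^\lambda)^i(\shift_y-\lmshift^\mu+\lmshift^\mu)^j$ using the mutual commutativity of the four factors; your version is, if anything, more explicit about the negative-exponent cases and about why right-multiplying the generators by shift monomials keeps you inside the \emph{left} ideal.
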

\begin{proof}
  Let $I$ be the left ideal generated by $\shift_x-\lmshift^\lambda$
  and $\shift_y-\lmshift^\mu$ in $\cal A$. For any $L\in I$, there are
  $P,Q\in\cal A$ such that $L =
  P(\shift_x-\lmshift^\lambda)+Q(\shift_y-\lmshift^\mu)$.  By
  Lemma~\ref{LEM:prod}, $\lmphi(L) =
  \lmphi(P)\lmphi(\shift_x-\lmshift^\lambda) +
  \lmphi(Q)\lmphi(\shift_y-\lmshift^\mu)$. It follows from the
  definition of $\lmphi$ and Proposition~\ref{PROP:surjective} that
  $\lmphi(\shift_x-\lmshift^\lambda) = \lmphi(\shift_y-\lmshift^\mu) =
  0$, and so also $\lmphi(L) = 0$. We have that $I \subseteq
  \ker(\lmphi)$.
	
  Conversely, we first observe that every $L\in\cal A$ can be
  decomposed as $L = M + R$ for some $M\in I$ and $R\in\lmA$. This is
  because every monomial $\shift_x^i\shift_y^j$ in $L$ with
  $i,j\in\set Z$ can be rewritten as
  $(\shift_x-\lmshift^\lambda+\lmshift^\lambda)^i
  (\shift_y-\lmshift^\mu+\lmshift^\mu)^j$ and
  $(\shift_x-\lmshift^\lambda),\lmshift^\lambda,(\shift_y-\lmshift^\mu),
  \lmshift^\mu$ multiplicatively commute with each other, so expanding
  the powers yields the desired result.  Then $\lmphi(L) = \lmphi(R)$
  since $M \in I\subseteq \ker(\lmphi)$.  Moreover, $\lmphi(L) = R$ by
  Proposition~\ref{PROP:surjective}.  We see that $L\in\ker(\lmphi)$
  implies $R = 0$. Hence $\ker(\lmphi)\subseteq I$.
\end{proof}
According to Proposition~\ref{PROP:surjective}, $\cal A/\ker(\lmphi)$
is isomorphic to $\lmA$ as additive groups. Furthermore, $\cal
A/\ker(\lmphi)$ is a left module over $\cal A$ by
Proposition~\ref{PROP:kernel}.  Hence, $\lmA$ can be viewed as a left
module over $\cal A$ as well. Its left scalar multiplication is
defined via $\lmphi$ as follows. For all $L\in \cal A$ and $M \in
\lmA$, the result obtained by multiplying $L$ from the left-hand side
to $M$ is $\lmphi(LM)$, which is denoted by $L\odot M$ when the pair
$(\lambda,\mu)$ is clear from context.

Using the scalar multiplication, we introduce a left division, which
will allows us to characterize telescopers and represent certificates
in a compact form. To this end, we need to define the notion of orders
in $\lmA$. Let $M = \sum_{i=m}^na_i\lmshift^i\in\lmA$, where
$m,n\in\set Z$ with $m\leq n$ and $a_i\in\set K(x,y)$ with $a_ma_n\neq
0$.  We say that $m$ and $n$ are the lowest and highest orders of $M$,
and denote them by $\lord(M)$ and $\hord(M)$, respectively.
\begin{lemma}\label{LEM:order}
  Let $L\in\set K(x,y)[\emph\shift_y,\emph\shift_y^{-1}]$ and
  $M\in\lmA$ be two nonzero operators. Then $L\odot M$ is
  nonzero. Moreover,
  \[
  \lord(L \odot M) = \lord(\lmphi(L)) + \lord(M) 
  \quad \text{and} \quad 
  \hord(L \odot M) = \hord(\lmphi(L)) + \hord(M).
  \]
\end{lemma}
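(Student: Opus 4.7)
The plan is to compute $L\odot M$ explicitly in coordinates and then single out the two corner monomials. Writing $L=\sum_{j=p}^q b_j\shift_y^j$ with $b_p,b_q\neq 0$, so that $p,q$ are the extreme $\shift_y$-orders of $L$, and $M=\sum_{i=m}^n a_i\lmshift^i$ with $a_m,a_n\neq 0$, the first observation is that, by the definition of $\lmphi$, $\lmphi(L)=\sum_{j=p}^q b_j\lmshift^{j\mu}$; hence $\lord(\lmphi(L))=p\mu$ and $\hord(\lmphi(L))=q\mu$. This reduces the claim to showing $\lord(L\odot M)=m+p\mu$, $\hord(L\odot M)=n+q\mu$, and $L\odot M\neq 0$.

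Next I would compute $L\odot M=\lmphi(LM)$ directly. Using $\lmshift^i=\shift_x^{i\alpha}\shift_y^{i\beta}$ (valid because $\shift_x$ and $\shift_y$ commute) together with the commutation $\shift_y^j a_i=\sigma_y^j(a_i)\shift_y^j$, I get
\[
LM=\sum_{j=p}^q\sum_{i=m}^n b_j\,\sigma_y^j(a_i)\,\shift_x^{i\alpha}\shift_y^{j+i\beta}.
\]
Applying $\lmphi$ termwise and simplifying the exponent via the B\'ezout relation \eqref{EQ:bezout} in the form $i\alpha\lambda+(j+i\beta)\mu=i+j\mu$ then yields
\[
L\odot M=\sum_{j=p}^q\sum_{i=m}^n b_j\,\sigma_y^j(a_i)\,\lmshift^{i+j\mu}.
\]
The same formula can alternatively be read off from Lemma~\ref{LEM:prod} applied monomial by monomial; the direct route just feels cleaner.

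The final step is to argue that the two corner contributions, at $(i,j)=(m,p)$ and $(i,j)=(n,q)$, are not cancelled by any other pair. Since $\mu>0$, the integer-valued map $(i,j)\mapsto i+j\mu$ attains its minimum on the rectangle $[m,n]\times[p,q]$ uniquely at $(m,p)$ and its maximum uniquely at $(n,q)$: any other pair must strictly increase, respectively decrease, the exponent in at least one of the two coordinates. Therefore the coefficient of $\lmshift^{m+p\mu}$ in $L\odot M$ is exactly $b_p\,\sigma_y^p(a_m)$, which is nonzero because $\sigma_y$ is an automorphism of $\set K(x,y)$ and $b_p,a_m$ are nonzero; an identical argument handles the top coefficient $b_q\,\sigma_y^q(a_n)$. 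This simultaneously establishes $L\odot M\neq 0$ and both order identities. The only delicate ingredient is the uniqueness-of-extrema observation; it is not deep, but it is precisely where the hypotheses $\mu>0$ and the purity of $L$ in $\shift_y$ both enter.
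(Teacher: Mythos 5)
Your proof is correct and follows essentially the same route as the paper's: expand both operators, observe that $L\odot M=\sum_{j,i}b_j\sigma_y^j(a_i)\lmshift^{i+j\mu}$, and read off the extreme orders from the corner terms. The only difference is that you spell out explicitly why the minimum and maximum of $(i,j)\mapsto i+j\mu$ on the index rectangle are attained uniquely at the corners (so no cancellation can occur there), a point the paper's proof leaves implicit.
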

\begin{proof}
  Let 
  \begin{equation}\label{EQ:ops}
    L = \sum_{i=k}^\ell a_i \shift_y^i
    \in\set K(x,y)[\emph\shift_y,\emph\shift_y^{-1}]
    \quad \text{and}\quad 
    M = \sum_{j=m}^n b_j\lmshift^j \in\lmA,
  \end{equation}
  where $k,\ell,m,n\in\set Z$ with $k\leq \ell$ and $m\leq n$, and
  $a_i,b_j\in\set K(x,y)$ with $a_ka_\ell b_mb_n\neq 0$. Then
  $\lmphi(L) = \sum_{i=k}^\ell a_i \lmshift^{i\mu}$, which is nonzero.
  Hence, $\lord(\lmphi(L)) = k\mu$ and $\hord(\lmphi(L)) = \ell\mu$.
  Observe that $\shift_y^i \odot (f\lmshift^j) =
  \sigma_y^i(f)\lmshift^{i\mu+j}$ for all $f\in\set K(x,y)$ and
  $i,j\in\set Z$. Since $a_ka_\ell b_mb_n\neq 0$, then $\lord(L\odot
  M) = k\mu+m$ and $\hord(L\odot M) = \ell\mu+n$. In particular, $L
  \odot M\neq 0$.
\end{proof}
\begin{lemma}\label{LEM:leftquorem}
  Let $L\in\set K(x,y)[\emph\shift_y]$ with $L\neq 0$ and $M\in \lmA$.
  Then there exist $Q,R\in \lmA$ such that $M = L\odot Q + R$, and $R$
  either is zero or satisfies
  $0\leq\lord(R)\leq\hord(R)<\hord(\lmphi(L))$.
\end{lemma}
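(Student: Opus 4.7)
The plan is to mimic polynomial long division in $\lmA$, carried out in two stages from opposite ends of the Laurent expansion. Writing $L = \sum_{i=k}^\ell a_i \shift_y^i$ with $0 \leq k \leq \ell$ and $a_k a_\ell \neq 0$, a direct calculation using Lemma~\ref{LEM:prod}, the definition of $\lmphi$, and the identity $\lmphi(\shift_y^i)=\lmshift^{i\mu}$ yields
\[
L \odot (b\,\lmshift^{\,j}) \;=\; \sum_{i=k}^\ell a_i\,\sigma_y^i(b)\,\lmshift^{\,i\mu + j}
\]
for every $b \in \set K(x,y)$ and $j \in \set Z$. Thus this operator has highest-order term $a_\ell \sigma_y^\ell(b)\,\lmshift^{\,\ell\mu + j}$ and lowest-order term $a_k \sigma_y^k(b)\,\lmshift^{\,k\mu + j}$. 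Suitable choices of $b$ and $j$ therefore give me two elimination primitives: one that cancels any prescribed leading coefficient of~$M$ and one that cancels any prescribed trailing coefficient.

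The first phase is a top reduction. While $M \neq 0$ and $\hord(M) = n \geq \ell\mu$, I take the leading coefficient $b_n$ of $M$, set $j := n - \ell\mu$ and $b := \sigma_y^{-\ell}(b_n/a_\ell)$, replace $M$ by $M - L \odot (b\,\lmshift^{\,j})$, and accumulate $b\,\lmshift^{\,j}$ into a running quotient~$Q$. By construction the $\lmshift^{\,n}$ coefficient is killed, so $\hord(M)$ strictly decreases; the loop halts after finitely many steps with $\hord(M) < \ell\mu = \hord(\lmphi(L))$ (using Lemma~\ref{LEM:order}). The second phase is the symmetric bottom reduction: while $M \neq 0$ and $\lord(M) = m < 0$, I pick the trailing coefficient $b_m$, set $j := m - k\mu$ and $b := \sigma_y^{-k}(b_m/a_k)$, subtract $L \odot (b\,\lmshift^{\,j})$ from $M$, and update $Q$. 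This cancels the $\lmshift^{\,m}$ term and strictly raises $\lord(M)$, so again the process terminates.

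The one real obstacle is checking that the second phase does not spoil the upper bound already established by the first. The highest order of the subtracted operator $L \odot (b\,\lmshift^{\,m - k\mu})$ equals $\ell\mu + (m - k\mu) = (\ell-k)\mu + m$, so preservation of the bound reduces to the inequality $m < k\mu$. Since bottom reduction is triggered only when $m < 0$, and since the hypothesis $L \in \set K(x,y)[\shift_y]$ forces $k \geq 0$, we have $m < 0 \leq k\mu$ and the bound is maintained throughout. This inequality is the heart of the argument; it is precisely where the absence of negative powers of $\shift_y$ in $L$ enters essentially.

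After both phases terminate, the accumulated $Q$ is a finite $\set K(x,y)$-linear combination of powers $\lmshift^{\,j}$ and so lies in $\lmA$; by construction $M = L\odot Q + R$, where the residue $R$ is the final value of $M$. Either $R = 0$, or $R$ simultaneously satisfies $\lord(R) \geq 0$ (from phase two) and $\hord(R) < \hord(\lmphi(L))$ (from phase one, preserved by phase two), which is exactly the claim.
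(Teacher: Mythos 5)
Your proof is correct and follows essentially the same two-phase division argument as the paper: the same elimination primitives built from Lemma~\ref{LEM:order}, with termination by strict monotonicity of the relevant order. The only difference is that you run the top reduction before the bottom one (the paper does the reverse), so the compatibility check you highlight --- that bottom reduction cannot push $\hord$ up to $\ell\mu$ because $m<0\leq k\mu$ --- is the mirror image of the paper's implicit observation that its top reduction cannot push $\lord$ below $0$.
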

\begin{proof}
  If $M = 0$, then we set $Q = 0$ and $R = 0$. Otherwise, let $L$ and
  $M$ be the same as in \eqref{EQ:ops} with $k\geq 0$. Then
  $\hord(\lmphi(L))=\ell\mu$.
  
  \smallskip\noindent{\em Case 1.} Assume that $m\geq 0$. If
  $n<\ell\mu$, then we set $Q = 0$ and $R = M$. Otherwise, let $f =
  \sigma_y^{-\ell}(b_n/a_\ell)$. By Lemma~\ref{LEM:order},
  \[
  N := M - L \odot (f \lmshift^{n-\ell\mu}) 
  = M - (b_n \lmshift^n + \text{lower terms in $\lmshift$}). 
  \]
  Therefore, either $N = 0$ or $0\leq \lord(N) \leq \hord(N)<n$. If $N = 0$
  or $\hord(N) < \ell\mu$, then we are done. Otherwise, we recursively
  apply the same reduction on $N$. The conclusion will be reached in a
  finite number of steps.
	
  \smallskip\noindent{\em Case 2.} Assume that $m < 0$. We reduce $M$
  to an integer-linear operator which is either zero or of nonnegative
  lowest order. Let $g = \sigma_y^{-k}(b_m/a_k)$. Again, by
  Lemma~\ref{LEM:order}, $M - L \odot (g \lmshift^{m-k\mu}) = M - (b_m
  \lmshift^m + \text{higher terms in $\lmshift$})$, which is either
  zero or of lowest order higher than $m$. Repeating the above
  reduction finitely many times, we will obtain $Q_1, R_1\in\lmA$ such
  that $M = L\odot Q_1+R_1$ and either $R_1=0$ or $\lord(R_1)\geq 0$.
  If $R_1 = 0$, then we are done. Otherwise, applying the argument in
  the first case to $R_1$ yields the lemma.
\end{proof}

\begin{theorem}\label{THM:leftquorem}
  Let $L\in \set K(x,y)[\emph\shift_y,\emph\shift_y^{-1}]$ with $L\neq
  0$ and $M\in \lmA$. Then there exist unique $Q,R\in \lmA$ such that
  $M = L\odot Q + R$, and $R$ either is zero or satisfies
  \[
  \lord(\lmphi(L))\leq\lord(R)\leq\hord(R) < \hord(\lmphi(L)).
  \]
\end{theorem}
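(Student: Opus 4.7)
The plan is to bootstrap from Lemma~\ref{LEM:leftquorem} by first factoring out the lowest power of $\shift_y$ in $L$ to reduce to the polynomial case already treated there, and then to derive uniqueness separately from the order formulas of Lemma~\ref{LEM:order}. A preliminary step I would record is the associativity identity $(AB)\odot M = A\odot(B\odot M)$ for all $A,B\in\cal A$ and $M\in\lmA$. By the definition of $\odot$, this amounts to $\lmphi(ABM) = \lmphi(A\,\lmphi(BM))$, which in turn reduces to showing $\lmphi\bigl(A(N-\lmphi(N))\bigr)=0$ for $N=BM$. But $N-\lmphi(N)\in\ker(\lmphi)$ by Proposition~\ref{PROP:surjective}, and $\ker(\lmphi)$ is a left ideal by Proposition~\ref{PROP:kernel}, so left-multiplying by $A$ keeps $N-\lmphi(N)$ in the kernel.

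For existence, set $k=\lord(L)$ and write $L = \shift_y^k\tilde L$ with $\tilde L\in\set K(x,y)[\shift_y]$, $\lord(\tilde L)=0$, and $\hord(\lmphi(\tilde L)) = \hord(\lmphi(L)) - k\mu$. Apply Lemma~\ref{LEM:leftquorem} to $\tilde L$ and $M':=\shift_y^{-k}\odot M\in\lmA$ to obtain $Q',R'\in\lmA$ with $M' = \tilde L\odot Q' + R'$ and either $R'=0$ or $0\leq\lord(R')\leq\hord(R')<\hord(\lmphi(\tilde L))$. Hitting both sides with $\shift_y^k\odot$ and invoking associativity gives $M = L\odot Q' + \shift_y^k\odot R'$, so I take $Q=Q'$ and $R=\shift_y^k\odot R'$. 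Using Lemma~\ref{LEM:order} to check that $\shift_y^k\odot$ is injective and shifts both $\lord$ and $\hord$ by exactly $k\mu$, the bound on $R'$ translates directly into $\lord(\lmphi(L))\leq\lord(R)\leq\hord(R)<\hord(\lmphi(L))$.

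For uniqueness, suppose $M = L\odot Q_1 + R_1 = L\odot Q_2 + R_2$ with each $R_i$ satisfying the stated bound. Then $L\odot(Q_1-Q_2) = R_2-R_1$, and if $Q_1\neq Q_2$ Lemma~\ref{LEM:order} yields $\lord(L\odot(Q_1-Q_2)) = \lord(\lmphi(L))+\lord(Q_1-Q_2)$ together with the analogous identity for $\hord$. Combined with $\lord(\lmphi(L))\leq\lord(R_2-R_1)\leq\hord(R_2-R_1)<\hord(\lmphi(L))$, this forces $\lord(Q_1-Q_2)\geq 0$ and $\hord(Q_1-Q_2)<0$, contradicting $\lord\leq\hord$; hence $Q_1=Q_2$ and then $R_1=R_2$. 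The main obstacle I anticipate is not any deep step but the noncommutative bookkeeping around $\odot$: confirming associativity so that $L\odot Q' = \shift_y^k\odot(\tilde L\odot Q')$, and tracking how $\shift_y^k\odot(\cdot)$ interacts with $\lord$ and $\hord$. Once these are in place, the theorem follows quickly from Lemmas~\ref{LEM:order} and~\ref{LEM:leftquorem}.
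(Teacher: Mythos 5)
Your proof is correct and follows essentially the same route as the paper: existence by factoring $\shift_y^{\lord(L)}$ out of $L$ and invoking Lemma~\ref{LEM:leftquorem}, then transporting the bounds back via Lemma~\ref{LEM:order}, and uniqueness by comparing lowest and highest orders of $L\odot(Q_1-Q_2)$ against those of $R_2-R_1$. The only (minor, welcome) differences are that you verify the associativity $(AB)\odot M=A\odot(B\odot M)$ explicitly---which the paper obtains implicitly from the left-module structure established after Proposition~\ref{PROP:kernel}---and that your uniform treatment of $k=\lord(L)$ also cleanly covers $k>0$, where a bare appeal to Lemma~\ref{LEM:leftquorem} would only give $\lord(R)\geq 0$ rather than $\lord(R)\geq\lord(\lmphi(L))$.
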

\begin{proof}
  Let $L$ be given as in \eqref{EQ:ops}. If $k\geq 0$, then the
  existence of $Q$ and $R$ follows from Lemma~\ref{LEM:leftquorem}.
  Assume that $k < 0$. The same lemma implies that there exist $\tilde
  Q,\tilde R\in\lmA$ such that $\shift_y^{-k}\odot M =
  (\shift_y^{-k}L)\odot \tilde Q + \tilde R$.  In addition, either
  $\tilde R = 0$ or $0\leq\lord(\tilde R)\leq\hord(\tilde
  R)<\hord(\lmphi(\shift_y^{-k}L))$.  It follows that $M = L\odot
  \tilde Q + \shift_y^k\odot \tilde R$.  Assume that $\tilde R$ is
  nonzero. Then $\lord(\shift_y^k\odot \tilde R) \geq
  \lord(\lmphi(L))$ by Lemma~\ref{LEM:order} and the fact that
  $\lord(\tilde R) \geq 0$.  Moreover, $\hord(\shift_y^k\odot \tilde
  R) < \hord(\lmphi(L))$ by Lemma~\ref{LEM:order} and the fact that
  $\hord(\tilde R)<\hord(\lmphi(\shift_y^{-k}L))$. Setting $Q = \tilde
  Q$ and $R = \shift_y^k\odot \tilde R$ establishes the existence of
  $Q$ and $R$.
	
  To show the uniqueness, we let $\bar Q, \bar R\in \lmA$ be such that
  $M = L\odot \bar Q + \bar R$, and $\bar R$ either is zero or
  satisfies $\lord(\lmphi(L))\leq \lord(\bar R)\leq\hord(\bar
  R)<\hord(\lmphi(L))$.  Then $L\odot(Q-\bar Q) = \bar R-R$. Suppose
  that $Q \neq \bar Q$.  Then $\bar R \neq R$ by
  Lemma~\ref{LEM:order}. Suppose that $\hord(Q-\bar Q)\geq 0$. By
  Lemma~\ref{LEM:order} and the fact that $\hord(\lmphi(L)) >
  \hord(\bar R-R)$, we have $\hord(L\odot(Q-\bar Q)) > \hord(\bar
  R-R)$, a contradiction.  Otherwise, a similar argument yields
  $\lord(L\odot (Q-\bar Q)) < \lord(\bar R-R)$, a contradiction.
  Hence $Q = \bar Q$ and then $R = \bar R$.
\end{proof}
In view	of the above theorem, we call $Q$ the {\em left scalar quotient}
and $R$ the {\em left scalar remainder} of $M$ by $L$, and denote them
by $\leftquo(M,L)$ and $\leftrem(M,L)$, respectively.
\begin{remark}\label{REM:leftquorem}
  It is possible to extend Theorem~\ref{THM:leftquorem} to the general
  case when the scalar divisor $L$ is an arbitrary nonzero operator in
  $\cal A$. However, as doing this extension is somewhat tedious and
  as this extension is not used in the paper we do not investigate
  this aspect further.
\end{remark}
\begin{remark}\label{REM:formulas}
  We are particularly interested in the case where the difference
  operator $\emph\shift_y-1$ plays the part of a scalar divisor.  For
  later reference, we collect below explicit formulas for left scalar
  remainders, as well as for left scalar quotients, in this case.
  
  Let $M =\sum_{i = m}^n a_i\emph\shift_{\lambda,\mu}^i\in\lmA$, where
  $m,n\in\set Z$ with $m\leq n$ and $a_i\in \set K(x,y)$.  Then
  \begin{equation}\label{EQ:leftremformula}
    \leftrem(M,\emph\shift_y-1)=\sum_{r=0}^{\mu-1}
    \left(\sum_{i_r}\sigma_y^{-q_{i_r}}(a_{i_r})\right)\lmshift^r,
  \end{equation}
  where the inner summation runs over all integers $i_r$ with
  $m\leq i_r\leq n$ such that $i_r = \mu q_{i_r}+r$ for some
  integer $q_{i_r}$, and
  \[
  \leftquo(M,\emph\shift_y-1) 
  = -\sum_{j= m}^{-1}\left(\sum_{i_j}\sigma_y^{-q_{i_j}}(a_{i_j})\right)
  \emph\shift_{\lambda,\mu}^j
  +\sum_{j=0}^{n-\mu}\left(\sum_{i_j} \sigma_y^{-q_{i_j}}(a_{i_j})\right)
  \emph\shift_{\lambda,\mu}^j,
  \]
  where the first inner summation runs over all integers $i_j$ with
  $m\leq i_j\leq n$ such that $i_j=\mu q_{i_j}+j$ for some nonpositive
  integer $q_{i_j}$, while the second inner summation runs over all
  integers $i_j$ with $m\leq i_j\leq n$ such that $i_j=\mu q_{i_j}+j$
  for some positive integer $q_{i_j}$.
\end{remark}
\begin{example}\label{EX:leftquorem}
  Let $M = (x-1)\emph\shift_{-5,2}+(2x^2+1)$ with
  $\emph\shift_{-5,2}=\emph\shift_x\emph\shift_y^3$. Let $L$ be an
  operator in $\set Q[x][\emph\shift_x]$ of the form
  $L = c_2\emph\shift_x^2+c_1\emph\shift_x+c_0$ for some
  $c_0,c_1,c_2\in \set Q[x]$. Multiplying $L$ from the left-hand side
  to $M$ yields
  \begin{align}
    L\odot M& = c_0 (x-1) \emph\shift_{-5,2} + c_0\, (2x^2+1)
    + c_1\, \sigma_x(x-1)\emph\shift_{-5,2}^{-4}
    + c_1\, \sigma_x(2x^2+1)\emph\shift_{-5,2}^{-5}
    \nonumber\\
    &\quad
    + c_2\,\sigma_x^2(x-1)\emph\shift_{-5,2}^{-9}
    + c_2\,\sigma_x^2(2x^2+1)\emph\shift_{-5,2}^{-10}.
    \label{EQ:sparse}
  \end{align}
  A direct calculation based on Remark~\ref{REM:formulas} then
  delivers
  \begin{equation}\label{EQ:leftrem}
    \leftrem(L\odot M, \emph\shift_y-1) 
    = a_1\emph\shift_{-5,2}+a_0,
  \end{equation}
  where 
  \begin{align*}
    &a_1 = c_0(x-1)+c_1(\sigma_y^3\sigma_x(2x^2+1))
    +c_2(\sigma_y^5\sigma_x^2(x-1))\\
    \text{and}\quad 
    &a_0 = c_0(2x^2+1)+c_1(\sigma_y^2\sigma_x(x-1))
    +c_2(\sigma_y^5\sigma_x^2(2x^2+1)).
  \end{align*}
  We note that $L\odot M$ is a sparse operator by \eqref{EQ:sparse};
  the left scalar quotient $\leftquo(L\odot M,\emph\shift_y-1)$,
  however, is a dense operator with exponents in $\emph\shift_{-5,2}$
  ranging consecutively from $-10$ to $-1$.
\end{example}

\section{Telescoping with compact certificates}\label{SEC:oct}
In this section, we demonstrate how to construct a telescoper for a
given rational function, along with its certificate in a compact form,
using left scalar divisions of integer-linear operators introduced in
the preceding section.

For an operator $L = \sum_{i,j\in\set Z}a_{ij}\shift_x^i\shift_y^j\in\cal A$
and a rational function $f\in \set K(x,y)$, the application of $L$ to
$f$ is defined to be
\[
L(f) = \sum_{i,j\in\set Z}a_{ij}\sigma_x^i(f)\sigma_y^j(f).
\]
Let $(\lambda,\mu)$ be a pair of coprime integers with $\mu>0$, and
$g\in \set K(x,y)$ of the form $g = p(\lambda x+ \mu y)$ with
$p\in\set K(z)$.  Then
\[
\lmshift^i(g) = p(\lambda x+ \mu y+i) \quad\text{for all}\ i\in\set Z.
\]
It follows that $\shift_x(g) = \lmshift^\lambda(g)$ and $\shift_y(g) =
\lmshift^\mu(g)$. Thus, for all $L\in\cal A$, we have that
$L(g)=\lmphi(L)(g)$. Assume further that $M\in \lmA$. Then
\begin{equation} \label{EQ:app} 
  LM(g) =(L \odot M)(g), 
\end{equation}
which allows us to describe telescopers and their corresponding
certificates in terms of module-theoretic language.

Let $f\in \set K(x,y)$ be a rational function with denominator
$g\in\set K[x,y]$.  Based on the refined integer-linear decomposition
\eqref{EQ:refinedildecomp} of $g$, there is a unique partial fraction
decomposition of $f$ with respect to~$y$, that is, there exist unique
$a_0,a_{ijk}\in\set K(x)[y]$ with $\deg_y(a_{ijk})<\deg_z(p_i)$ such
that
\begin{equation}\label{EQ:ratpfd}
  f= \frac{a_0}{p_0} +
  \sum_{i=1}^m\sum_{j=1}^{n_i}\sum_{k=1}^{e_{ij}}
  \frac{a_{ijk}}{p_i(\lambda_i x+\mu_i y+\nu_{ij})^k}.
\end{equation}
Let $d_i = \max_{1\leq j\leq n_i}\{e_{ij}\}$ and specify that $a_{ijk}
= 0$ in case $k>e_{ij}$. Interchanging the order of summations in
\eqref{EQ:ratpfd} and introducing the operator $M_{ik} =
\sum_{j=1}^{n_i} a_{ijk} \lmshift[i]^{\nu_{ij}}$ then gives
\begin{equation}\label{EQ:ratdecomp}
  f = \frac{a_0}{p_0} + \sum_{i=1}^m\sum_{k=1}^{d_i}
  M_{ik}\left(\frac1{p_i(\lambda_ix+\mu_iy)^k}\right).
\end{equation}
Note that $M_{ik}\in \set K(x)[y,\lmshift[i]]$ and
$\deg_y(M_{ik})<\deg_z(p_i)$ for all $i = 1,\dots,m$ and $k = 1,
\dots,d_i$. Using the above argument in the opposite direction, one
can easily derive the partial fraction decomposition \eqref{EQ:ratpfd}
from \eqref{EQ:ratdecomp}.  It thus follows from the uniqueness of
\eqref{EQ:ratpfd} that \eqref{EQ:ratdecomp} is unique. In particular,
the operators $M_{ik}$ are uniquely determined by the given rational
function $f$.  We will refer to \eqref{EQ:ratdecomp} as the
{\em RILD-based partial fraction decomposition} of $f$.

\subsection{The basic case}
In order to illustrate the main idea of our algorithm in a concise
way, we first focus on the simpler yet important subcase when the
given rational function $f\in \set K(x,y)$ possesses the form
\begin{equation}\label{EQ:basiccase}
  f = \sum_{i\in\set Z}\frac{a_i}{p(\lambda x+ \mu y + i)^k}
  = M\left(\frac1{p(\lambda x+ \mu y)^k}\right),
\end{equation}
where $\lambda,\mu,k\in\set Z$ with $\gcd(\lambda,\mu) = 1$ and
$\mu,k>0$, $p\in \set K[z]$ is monic and irreducible, $a_i\in \set
K(x)[y]$, finitely many nonzero, with $\deg_y(a_i) < \deg_z(p)$, and
$M=\sum_{i\in\set Z} a_i\lmshift^i \in \set
K(x)[y,\lmshift,\lmshift^{-1}]$.  Note that such a function $f$ has a
telescoper by the criterion of~\citet[Theorem~1]{AbLe2002}.

\begin{proposition}\label{PROP:basiccriterion}
  Let $f\in \set K(x,y)$ be of the form \eqref{EQ:basiccase}, and let
  $L\in\set K[x][\emph\shift_x]$ be a nonzero operator. Then $L$ is a
  telescoper for $f$ if and only if there exists $Q\in\lmA$ such that
  $L\odot M = (\emph\shift_y-1)\odot Q$.  When this is the case, a
  certificate corresponding to $L$ is given by
  \[
  Q\left(\frac1{p(\lambda x+\mu y)^k}\right).
  \]
\end{proposition}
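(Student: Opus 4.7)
My plan rests on the identity~\eqref{EQ:app}, namely $(LN)(g_0)=(L\odot N)(g_0)$ for $L\in\cal A$, $N\in\lmA$, and $g_0=1/p(\lambda x+\mu y)^k$, which converts operator identities in $\lmA$ into functional identities at $g_0$. For the direction~$(\Leftarrow)$, assuming $L\odot M=(\shift_y-1)\odot Q$, I would apply both sides to $g_0$ and invoke~\eqref{EQ:app} twice to obtain $L(f)=LM(g_0)=(L\odot M)(g_0)=((\shift_y-1)\odot Q)(g_0)=(\shift_y-1)(Q(g_0))$, establishing simultaneously that $L$ is a telescoper and that $Q(g_0)$ is the claimed certificate.

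For~$(\Rightarrow)$, suppose that $L$ is a telescoper, so $L(f)=(\shift_y-1)(h)$ for some $h\in\set K(x,y)$. I would invoke Theorem~\ref{THM:leftquorem} to left-scalar divide $L\odot M\in\lmA$ by $\shift_y-1$, obtaining $L\odot M=(\shift_y-1)\odot Q+R$ where $R$ is either zero or satisfies $0\leq\lord(R)\leq\hord(R)<\mu$ (since $\lmphi(\shift_y-1)=\lmshift^\mu-1$). Applying both sides to $g_0$ and using~\eqref{EQ:app} once more yields $R(g_0)=(\shift_y-1)(h-Q(g_0))$, so $R(g_0)$ is $\sigma_y$-summable; the remaining task, and the main obstacle, is to force $R=0$.

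For this last step I would exploit the structural properties of the denominators. Write $R=\sum_{r=0}^{\mu-1}b_r\lmshift^r$. Since the coefficients of $M$ lie in $\set K(x)[y]$ with $y$-degree strictly less than $\deg_z(p)$, and those of $L$ lie in $\set K[x]$, the same $y$-degree bound is inherited by the coefficients of $L\odot M$, and, via the explicit formula in Remark~\ref{REM:formulas}, by each $b_r$. Thus $R(g_0)=\sum_{r=0}^{\mu-1}b_r/p(\lambda x+\mu y+r)^k$ has numerators of $y$-degree strictly less than that of the corresponding denominators. Because $p$ is monic and irreducible of positive degree and $\gcd(\lambda,\mu)=1$, each factor $p(\lambda x+\mu y+r)$ is irreducible in $\set K[x,y]$ and $\sigma_y$-free, and the factors for distinct $r\in\{0,\dots,\mu-1\}$ are pairwise $\sigma_y$-inequivalent (any relation $p(z+\mu\ell+r_1)=p(z+r_2)$ with $|r_2-r_1|<\mu$ forces $\ell=0$ and $r_1=r_2$). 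Hence $R(g_0)$ is already in reduced partial-fraction form with $\sigma_y$-free denominator, and the standard summability criterion~\citep[Theorem~12]{GGSZ2003} forces every $b_r$ to vanish, so $R=0$ and $L\odot M=(\shift_y-1)\odot Q$ as desired.
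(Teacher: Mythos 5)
Your proposal is correct, and the easy direction (applying both sides to $1/p(\lambda x+\mu y)^k$ via \eqref{EQ:app}) matches the paper exactly. For the forward direction, however, you take a genuinely different route. The paper argues structurally: from $L(f)=(L\odot M)(g_0)=(\shift_y-1)(h)$ with $g_0=1/p(\lambda x+\mu y)^k$, it invokes the uniqueness of the RILD-based partial fraction decomposition twice, first to conclude that the certificate $h$ itself must have the shape $Q(g_0)$ for some $Q\in\set K(x)[y,\lmshift,\lmshift^{-1}]$ with $\deg_y(Q)<\deg_z(p)$, and then to lift the resulting functional identity to the operator identity $L\odot M=(\shift_y-1)\odot Q$. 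You instead work entirely on the operator side: you divide $L\odot M$ by $\shift_y-1$ using Theorem~\ref{THM:leftquorem}, observe that the remainder $R$ evaluated at $g_0$ is $\sigma_y$-summable, and then kill $R$ by noting that $R(g_0)$ is $y$-proper with a $\sigma_y$-free denominator (your verification that the factors $p(\lambda x+\mu y+r)$, $0\leq r<\mu$, are irreducible and pairwise $\sigma_y$-inequivalent is exactly what is needed here), so that Proposition~\ref{PROP:remainder} forces $R(g_0)=0$ and coprimality of the denominators forces each $b_r=0$. Your version buys two things: it never needs to characterize the shape of the unknown certificate $h$ (a step the paper passes over rather quickly), and it directly establishes the criterion actually used by the algorithm, namely that $L$ is a telescoper if and only if $\leftrem(L\odot M,\shift_y-1)=0$, with $Q=\leftquo(L\odot M,\shift_y-1)$ produced constructively. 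The paper's version is shorter and avoids the denominator bookkeeping by leaning on uniqueness of the decomposition. Both are sound.
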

\begin{proof}
  Assume that $L$ is a telescoper for $f$. Then there exists 
  $h\in\set K(x,y)$ such that 
  \[
  L(f) = (L\odot M) \left(\frac1{p(\lambda x + \mu y)^k}\right)
  = (\shift_y-1)(h),
  \]
  where the first equality follows by \eqref{EQ:app}. From the 
  uniqueness of the RILD-based partial fraction decomposition we have 
  that there exists $Q\in \set K(x)[y,\lmshift,\lmshift^{-1}]$ with
  $\deg_y(Q) < \deg_z(p)$ such that
  \[
  h = Q\left(\frac1{p(\lambda x + \mu y)^k}\right).
  \] 
  It thus follows that
  \[
  (L\odot M) \left(\frac1{p(\lambda x + \mu y)^k}\right) 
  = \big((\shift_y-1)\odot Q\big)\left(\frac1{p(\lambda x + \mu y)^k}\right).
  \]
  Again, by the uniqueness of the RILD-based partial fraction
  decomposition, we find that $L \odot M = (\shift_y-1)\odot
  Q$. Applying both sides of this equality to $1/p(\lambda x + \mu
  y)^k$ proves the converse.
\end{proof}
Based on Proposition~\ref{PROP:basiccriterion}, for a rational
function $f\in \set K(x,y)$ of the form \eqref{EQ:basiccase}, we can
compute a telescoper of order no more than $\rho \in\set N$ as
follows. Making an ansatz $L = c_\rho \shift_x^\rho + \dots + c_0$
with $c_0,\dots,c_\rho\in\set K[x]$ to be determined, we first compute
the left scalar remainder $R$ of $L\odot M$ by $\shift_y-1$.  Note
that $R\in\set K(x)[y,\lmshift,\lmshift^{-1}]$.  Sending $R$ to zero
thus yields a linear homogeneous system in $c_0,\dots,c_\rho$ over
$\set K(x)$.  If this system admits a nontrivial solution over $\set
K[x]$, we then successfully find a desired telescoper.  Otherwise, we
have shown that such a telescoper does not exist. Performing the above
procedure for $\rho = 0,1,\dots$, one eventually obtains a minimal
telescoper for $f$. With a telescoper $L$ for the given rational
function $f$ at hand, by Proposition~\ref{PROP:basiccriterion}, a
corresponding certificate can be formally represented as
$\leftquo(L\odot M, \shift_y-1)\left(\frac1{p(\lambda x+\mu y)^k}\right)$.
\begin{example}\label{EX:octsmall}
  Consider the rational function $f$ of the form
  \begin{equation*}\label{EQ:r1pfd}
    f = \frac{2x^2+1}{(-5x+2y)^2+1}+\frac{x-1}{(-5x+2y+1)^2+1}.
  \end{equation*}
  A simple calculation shows that the RILD-based partial fraction
  decomposition of $f$ is given by
  \[
  f = \underbrace{((x-1)\emph\shift_{-5,2}+2x^2+1)}_{M}
  \left(\frac{1}{p(-5x+2y)}\right)
  \quad\text{with}\
  \emph\shift_{-5,2}=\emph\shift_x\emph\shift_y^3
  \ \text{and}\ p(z) = z^2+1.
  \]
  Let $L = c_2\emph\shift_x^2+c_1\emph\shift_x+c_0$ with
  $c_0,c_1,c_2\in \set Q[x]$ to be determined. By
  Example~\ref{EX:leftquorem}, we know that the left scalar remainder
  $R$ of $L\odot M$ by $\emph\shift_y-1$ is given by
  \eqref{EQ:leftrem}.  Sending $R$ to zero then delivers the following
  linear homogeneous system
  \begin{equation}\label{EQ:r1system}
    \begin{pmatrix}
      x-1 & \sigma_y^3\sigma_x(2x^2+1) & \sigma_y^5\sigma_x^2(x-1)\\[2ex]
      2x^2+1 & \sigma_y^2\sigma_x(x-1) & \sigma_y^5\sigma_x^2(2x^2+1)
    \end{pmatrix}
    \begin{pmatrix}
      c_0\\c_1\\c_2
    \end{pmatrix}
    =\begin{pmatrix} 0\\[2ex] 0\end{pmatrix}.
  \end{equation}
  Solving this system over $\set Q[x]$ gives a telescoper
  \begin{equation*}\label{EQ:octsmallL1}
    L =(4x^4+8x^3+7x^2+5x+3)\emph\shift_x^2+2(2x^2-5)\emph\shift_x
    -(4x^4+24x^3+55x^2+59x+27)
  \end{equation*}
  and then a corresponding certificate 
  \begin{equation*}
    h=\leftquo(L\odot M,\emph\shift_y-1)\left(\frac1{p(-5x+2y)}\right).
  \end{equation*}
  We note that $L$ is actually a telescoper for $f$ of minimal order.
\end{example}

\subsection{The general case}
We now turn our attention to the general case, namely the case when
the input is an arbitrary rational function in $\set K(x,y)$.  Let
$f\in \set K(x,y)$ be a rational function admitting the RILD-based
partial fraction decomposition \eqref{EQ:ratdecomp}.  By
\cite[Theorem~1]{AbLe2002}, $f$ has a telescoper if and only if
$a_0/p_0$ in \eqref{EQ:ratdecomp} is $\sigma_y$-summable. Thus it
suffices to construct a telescoper for $r := f-a_0/p_0$, which
possesses the following form
\begin{equation}\label{EQ:rform}
  r = \sum_{i=1}^m\sum_{k=1}^{d_i} M_{ik}
  \left(\frac1{p_i(\lambda_ix+\mu_iy)^k}\right),
\end{equation}
where each $(\lambda_i,\mu_i)$ is a pair of coprime integers with
$\mu_i>0$, each $p_i\in\set K[z]$ is monic and irreducible, each 
$M_{ik}\in \set K(x)[y,\lmshift[i]]$ with $\deg_y(M_{ik})<\deg_z(p_i)$,
and the $p_i(\lambda_i x+ \mu_iy)$ are pairwise
$(\sigma_x,\sigma_y)$-inequivalent

There are two natural ways to proceed. The first method separately
takes each simple fraction
$M_{ik}\left(\frac1{p_i(\lambda_ix+\mu_iy)^k}\right)$ in
\eqref{EQ:rform} as the basic case and computes its own minimal
telescoper $L_{ik}\in\set K[x][\shift_x]$ using the approach presented
in the preceding subsection, and then returns the least common left
multiple of all these $L_{ik}$ as the output. By taking use of
\cite[Theorem~2]{Le2003a}, one can show that this least common left
multiple gives a minimal telescoper for $r$ (and thus for $f$).
Preliminary experiments, however, suggest that in practice this method
does not perform as well as expected. In fact, it is often less
efficient than the second method which we are going to explore
shortly.

This second method shares exactly the same spirit as the basic case
given in the preceding subsection, in the sense that it also reduces
the problem of constructing a telescoper to the problem of computing
left scalar remainders of integer-linear operators.
\begin{theorem}\label{THM:criterion}
  Let $r\in\set K(x,y)$ be a rational function possessing the
  decomposition \eqref{EQ:rform}, and let $L\in\set K[x][\emph\shift_y]$
  be a nonzero operator.  Then $L$ is a telescoper for $r$ if and only
  if, for all $i = 1,\dots,m$ and $k = 1,\dots,d_i$, there exist
  operators $Q_{ik}\in\lmA[i]$ such that
  $L\odot M_{ik} = (\emph\shift_y-1)\odot Q_{ik}$. When this is the
  case, a corresponding certificate of $L$ is given by
  \begin{equation}\label{EQ:certform}
    \sum_{i=1}^m\sum_{k=1}^{d_i}
    Q_{ik}\left(\frac1{p_i(\lambda_ix+\mu_iy)^k}\right).
  \end{equation}
\end{theorem}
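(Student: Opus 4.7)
The plan mirrors Proposition~\ref{PROP:basiccriterion} and rests crucially on the uniqueness of the RILD-based partial fraction decomposition established earlier in this section. For the ``if'' direction, I apply \eqref{EQ:app} to each summand in the decomposition \eqref{EQ:rform} of $r$ to obtain
\[
L(r) = \sum_{i=1}^m\sum_{k=1}^{d_i} (L \odot M_{ik})\left(\frac{1}{p_i(\lambda_i x + \mu_i y)^k}\right).
\]
Substituting the hypothesis $L\odot M_{ik} = (\shift_y - 1)\odot Q_{ik}$ and then using \eqref{EQ:app} in the reverse direction, every summand becomes $(\shift_y-1)\bigl(Q_{ik}(1/p_i(\lambda_i x + \mu_i y)^k)\bigr)$; pulling $\shift_y - 1$ outside the finite sum yields a certificate of the claimed form \eqref{EQ:certform}.

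For the ``only if'' direction, suppose $L(r) = (\shift_y - 1)(h)$ for some $h \in \set K(x,y)$. The key observation is that $\shift_y - 1$ preserves the refined integer-linear denominator structure: it maps any rational function whose denominator is supported on the $(\sigma_x,\sigma_y)$-equivalence class of $p_i(\lambda_i x + \mu_i y)$ to another function supported on the same class. Consequently, when $h$ is written in its RILD-based PFD and summands are grouped by $(\lambda_i,\mu_i,p_i)$-class, the $(\lambda_i,\mu_i,p_i)$-component of $h$ can be written uniquely as $\sum_k Q_{ik}(1/p_i(\lambda_i x + \mu_i y)^k)$ with $Q_{ik} \in \lmA[i]$ and $\deg_y(Q_{ik}) < \deg_z(p_i)$. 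All other pieces of $h$ (non-integer-linear, other integer-linear types, or same type but different shift class) have $(\shift_y - 1)$-images in the same orthogonal families of ``slots'' of the PFD; by the uniqueness of the PFD applied to $L(r)$, each such image must vanish, since none of these slots appear on the right-hand side of the decomposition of $L(r)$.

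It then remains to match the $(\lambda_i,\mu_i,p_i)$-components on both sides and to deduce the operator identity $L\odot M_{ik} = (\shift_y - 1)\odot Q_{ik}$ in $\lmA[i]$. By uniqueness of the PFD applied separately to each denominator power $k$, the coefficients of $1/p_i(\lambda_i x + \mu_i y + \ell)^k$ agree on both sides for every shift $\ell \in \set Z$. Since both $L \odot M_{ik}$ and $(\shift_y - 1) \odot Q_{ik}$ lie in $\lmA[i]$ with coefficients that are polynomials in $y$ of degree less than $\deg_z(p_i)$, the evaluation map $N \mapsto N(1/p_i(\lambda_i x + \mu_i y)^k)$ is injective on such operators, and the desired operator identity follows. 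The main technical obstacle is the careful bookkeeping in the separation step: verifying that the RILD-based PFD decomposes a rational function into mutually \emph{orthogonal} components indexed by $(\sigma_x,\sigma_y)$-equivalence classes of irreducible factors, and that the action of $\shift_y - 1$ respects this decomposition so that the matching can be performed class by class.
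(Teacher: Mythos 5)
Your proof is correct and follows essentially the same route as the paper's: the ``if'' direction is the same direct computation via \eqref{EQ:app}, and the ``only if'' direction invokes the uniqueness of the RILD-based partial fraction decomposition first to force the certificate $h$ into the form \eqref{EQ:certform} and then to upgrade the equality of rational functions to the operator identity $L\odot M_{ik}=(\shift_y-1)\odot Q_{ik}$. You spell out the supporting facts (that $\shift_y-1$ preserves the $(\sigma_x,\sigma_y)$-equivalence classes and that the evaluation map is injective on operators with $\deg_y<\deg_z(p_i)$) in more detail than the paper does, but the argument is the same.
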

\begin{proof}
  Assume that $L$ is a telescoper for $r$. Then there exists $h\in\set
  K(x,y)$ such that $L(r) = (\shift_y-1)(h)$. By \eqref{EQ:app} and
  \eqref{EQ:rform}, we have
  \[
  \sum_{i=1}^m\sum_{k=1}^{d_i}(L\odot M_{ik})
  \left(\frac1{p_i(\lambda_i x + \mu_i y)^k}\right)
  = (\shift_y-1)(h).
  \]
  The RILD-based partial fraction decomposition of the left-hand side
  in the above equality implies that the same decomposition of $h$ is
  of the form \eqref{EQ:certform}, in which $Q_{ik} \in \set
  K(x)[y,\lmshift[i],\lmshift[i]^{-1}]$ with $\deg_y(Q_{ik}) <
  \deg_z(p_i)$. The uniqueness of the RILD-based partial fraction
  decomposition then forces
  \[
  L\odot M_{ik} = (\shift_y-1)\odot Q_{ik}
  \quad\text{for all}\ i = 1,\dots,m \ \text{and}\ k = 1,\dots,d_i.
  \]
  
  Conversely, we apply $L$ to $r$. By \eqref{EQ:app} and
  \eqref{EQ:rform},
  \[
  L(r) = \sum_{i=1}^m\sum_{k=1}^{d_i} (L\odot M_{ik})
  \left(\frac1{p_i(\lambda_i x + \mu_i y)^k}\right)
  = \sum_{i=1}^m\sum_{k=1}^{d_i}\big( (\shift_y-1)\odot Q_{ik}\big)
  \left(\frac1{p_i(\lambda_i x + \mu_i y)^k}\right).
  \]
  It follows that $L(r) = (\shift_y-1)(h)$, where $h$ is given by the
  formula \eqref{EQ:certform}.
\end{proof}
In analogy to the basic case, the above theorem induces an iterative
strategy to compute a telescoper for a given rational function.

Putting this all together, we obtain a new creative telescoping
algorithm for rational functions.

\smallskip\noindent{\bf RationalCT.} Given a rational function $f\in
\set K(x,y)$, compute a minimal telescoper $L\in \set K[x][\shift_x]$
for $f$ and a corresponding certificate $h\in \set K(x,y)$ if
telescopers exist. The steps are:

\begin{enumerate}
\item Compute the RILD-based partial fraction decomposition of $f$ to
  get \eqref{EQ:ratdecomp}.
  
\item Apply the GGSZ reduction to $a_0/p_0$ in \eqref{EQ:ratdecomp} to
  find $h,r\in \set K(x,y)$ with $h$ being of a compact form such that
  \begin{equation}\label{EQ:nonintlinear}
    \frac{a_0}{p_0} = (\shift_y-1)(h)+r.
  \end{equation}
  
\item If $r\neq 0$ then return \lq\lq No telescoper exists!\rq\rq.
  
\item For $i=1,\dots,m$ and $k=1,\dots,d_i$ set $R_{ik} = 0$.\\[1ex]
  For $\ell = 0, 1, 2, \dots $ do
  \begin{itemize}
  \item[4.1] For $i=1,\dots,m$ and $k=1,\dots,d_i$ do
    \begin{itemize}
    \item[4.1.1] Compute the left scalar remainder $\bar R$ of 
      $\shift_x^\ell\odot M_{ik}$ by $\shift_y-1$.
      
    \item[4.1.2] Update $R_{ik}$ to be $R_{ik}+c_\ell\bar R$,
      where $c_\ell$ is an indeterminate.
    \end{itemize}
    
  \item[4.2] Find $c_0,\dots,c_\ell\in \set K[x]$ such that $R_{ik} = 0$
    for all $i=1,\dots,m$ and $k=1,\dots,d_i$, by solving a linear
    system in $c_0,\dots,c_\ell$ over $\set K[x]$. If there is a
    nontrivial solution, set $L = \sum_{j=0}^\ell c_j\shift_x^j$ and
    return
    \[
    \left(L,L(h)+
    \sum_{i=1}^m\sum_{k=1}^{d_i}\leftquo(L\odot M_{ik},\shift_y-1)
    \left(\frac1{p_i(\lambda_i x+\mu_i y)^k}\right)\right).
    \]
  \end{itemize}
\end{enumerate}
\begin{theorem}\label{THM:octcorrectness}
  Let $f$ be a rational function in $\set K(x,y)$. Then the algorithm
  {\bf RationalCT} terminates and correctly finds a minimal telescoper
  for $f$ and a corresponding certificate in a compact form when such
  telescopers exist.
\end{theorem}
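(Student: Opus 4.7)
The proof breaks into three separate checks: correctness of the ``No telescoper exists!'' branch in Step~3, correctness of the telescoper and certificate returned in Step~4, and termination of the loop in Step~4 whenever a telescoper actually exists. The two key ingredients will be \citep[Theorem~1]{AbLe2002}, which characterizes when a telescoper exists for a rational function, and Theorem~\ref{THM:criterion}, which converts that condition into the vanishing of a finite collection of left scalar remainders.

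For Step~3, I would write $f = a_0/p_0 + \tilde r$ from~\eqref{EQ:ratdecomp}, where $\tilde r$ is the integer-linear sum $\sum_{i,k} M_{ik}(1/p_i(\lambda_ix+\mu_iy)^k)$. Because no irreducible factor of $p_0$ of positive degree in $y$ is integer-linear, \citep[Theorem~1]{AbLe2002} says $f$ has a telescoper iff $a_0/p_0$ is $\sigma_y$-summable; and by \citep[Theorem~12]{GGSZ2003} (as invoked already in Example~\ref{EX:ggszred}) the GGSZ reduction in Step~2 yields $r=0$ precisely in that case. Hence the early return in Step~3 is sound, and from now on I may assume $a_0/p_0 = (\shift_y-1)(h)$, so that a telescoper for $f$ coincides with one for $\tilde r$.

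For Step~4, the central observation is that $L\mapsto \leftrem(L\odot M_{ik},\shift_y-1)$ is $\set K[x]$-linear in $L$: every scalar $c\in\set K[x]$ commutes with $\shift_y-1$ in $\cal A$, so if $L\odot M_{ik} = (\shift_y-1)\odot Q + R$ then $(cL)\odot M_{ik} = (\shift_y-1)\odot(cQ) + cR$ with $cR$ still satisfying the order bounds of Theorem~\ref{THM:leftquorem}, and uniqueness forces $\leftrem(cL\odot M_{ik},\shift_y-1) = cR$. Thus for the ansatz $L=\sum_{j=0}^{\ell}c_j\shift_x^j$ the operator $R_{ik}$ accumulated in Steps~4.1.1--4.1.2 is exactly $\leftrem(L\odot M_{ik},\shift_y-1)$ with symbolic coefficients $c_j$, and by Theorem~\ref{THM:criterion} together with the uniqueness in Theorem~\ref{THM:leftquorem}, the vanishing of all $R_{ik}$ is equivalent to $L$ being a telescoper for $\tilde r$. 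So Step~4.2 correctly searches for a telescoper by solving a homogeneous linear system in $c_0,\dots,c_\ell$. The certificate is then checked directly: $L(a_0/p_0)=(\shift_y-1)(L(h))$ because $L$ commutes with $\shift_y-1$, while Theorem~\ref{THM:criterion} gives $L(\tilde r)=(\shift_y-1)\bigl(\sum_{i,k} Q_{ik}(1/p_i(\lambda_ix+\mu_iy)^k)\bigr)$ with $Q_{ik} = \leftquo(L\odot M_{ik},\shift_y-1)$, matching the expression returned by the algorithm. Minimality of the order is automatic since $\ell$ is incremented from~$0$.

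Termination when a telescoper exists follows from the same equivalence. Under $r=0$, \citep[Theorem~1]{AbLe2002} furnishes some $L^\ast\in\set K[x][\shift_x]$ telescoping $\tilde r$; once $\ell$ reaches the order of $L^\ast$, its coefficients provide a nontrivial solution in Step~4.2 and the loop exits. The main obstacle I expect to be the $\set K[x]$-linearity observation above: one must be careful that taking left scalar remainders really commutes with polynomial-in-$x$ scaling, and it is Theorem~\ref{THM:leftquorem}'s uniqueness that makes this work and reduces the entire problem to finite-dimensional linear algebra.
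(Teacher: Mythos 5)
Your proposal is correct and follows essentially the same route as the paper's proof: Steps 1--3 are justified by \citep[Theorem~1]{AbLe2002} together with \citep[Theorem~12]{GGSZ2003}, and Step~4 by identifying the accumulated $R_{ik}$ with $\leftrem\bigl((\sum_j c_j\shift_x^j)\odot M_{ik},\shift_y-1\bigr)$ and invoking Theorem~\ref{THM:criterion} for both correctness and termination. The only difference is cosmetic: where the paper disposes of that identification by ``a direct induction on $\ell$,'' you make the underlying $\set K[x]$-linearity of the left-scalar-remainder map explicit via the uniqueness in Theorem~\ref{THM:leftquorem}, which is a sound (and arguably cleaner) justification of the same step.
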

\begin{proof}
  By \citep[Theorem~1]{AbLe2002}, $f$ has a telescoper if and only if
  $a_0/p_0$ in \eqref{EQ:ratdecomp} is $\sigma_y$-summable, which,
  according to \citep[Theorem~12]{GGSZ2003}, is equivalent to the
  condition that $r = 0$ in \eqref{EQ:nonintlinear}. Thus steps~1-3
  are correct.
	
  For $\ell = 0$, It is evident that $R_{ik}$ obtained in step~4.1 is
  equal to $\leftrem(c_0\odot M_{ik},\shift_y-1)$ for all $i =
  1,\dots,m$ and $k = 1,\dots,d_i$. By a direct induction on $\ell$,
  we see that in the outer loop of step~4, $R_{ik} =
  \leftrem((c_\ell\shift_x^\ell+\dots+c_0)\odot M_{ik}, \shift_y-1)$
  holds for all $i = 1,\dots,m$ and $k = 1,\dots,d_i$ every time the
  algorithm passes through step~4.1.
	
  Assume that $L = \sum_{\ell=0}^{\rho}\tilde c_\ell\shift_x^\ell$
  with $\tilde c_\ell\in\set K[x]$ and $\tilde c_{\rho}\neq 0$ is a
  minimal telescoper for $f$. By Theorem~\ref{THM:criterion}, the left
  scalar remainders $\tilde R_{ik}$ of the $L\odot M_{ik}$ by
  $\shift_y-1$ are all zero. Thus, the linear homogeneous system over
  $\set K[x]$ obtained by equating all the $R_{ik}$ at the $\rho$th
  iteration of the outer loop in step~4 to zero has a nontrivial
  solution, which gives rise to a telescoper of minimal order. The
  compact representation for a corresponding certificate follows by
  Theorem~\ref{THM:criterion}.
\end{proof}
\begin{example}\label{EX:oct}
  Consider the same rational function $f$ as in
  Example~\ref{EX:ggszred}.  By Example~\ref{EX:intlinear}, the
  refined integer-linear decomposition of the denominator $g$ is given
  by \eqref{EQ:grefined}. Then in step~1, we obtain the RILD-based
  partial fraction decomposition
  \begin{align*}
    f &= \frac{a_0}{p_0}+
    \underbrace{((x-1)\emph\shift_{-5,2}+2x^2+1)}_{M_1}
    \left(\frac{1}{p_1(-5x+2y)}\right)
    +\underbrace{\tfrac1{30}(-3x^2-x+10)\emph\shift_{3,10}^0}_{M_2}
    \left(\frac1{p_2(3x+10y)}\right),\\
    &\quad
    +\underbrace{
      \tfrac1{30}(9x^3+30x^2y-3x^2+10xy-29x-100y+20)
      \emph\shift_{3,10}^0}_{M_3}
    \left(\frac1{p_3(3x+10y)}\right),
  \end{align*}
  where $\emph\shift_{-5,2}=\emph\shift_x\emph\shift_y^3$,
  $\emph\shift_{3,10}=\emph\shift_x^7\emph\shift_y^{-2}$ and
  \begin{equation}\label{EQ:a0p0}
    \frac{a_0}{p_0}
    = \frac{2x+3}{\sigma_y^{30}(g_0)}-\frac{2x+3}{\sigma_y^{29}(g_0)}
    -\frac1{\sigma_y(g_0)}+\frac1{g_0},
  \end{equation}
  to which subsequently applying the GGSZ reduction in step~2 yields
  \eqref{EQ:nonintlinear} with $h$ represented by the compact form
  given in \eqref{EQ:red0} and $r =0$. In step~4, we execute the outer
  loop for $\ell = 0,1,\dots,22$ and iteratively compute the left
  scalar remainder $R_i$ of $(c_{22}\emph\shift_x^{22}+\dots+c_0)\odot
  M_i$ by $\emph\shift_y-1$ for $i = 1,2,3$, where $c_0,\dots,c_{22}$
  are indeterminates. By equating $R_1, R_2, R_3$ to zero, we set up
  an overdetermined system of $32$ linear equations in unknowns
  $c_0,\dots,c_{22}$ over $\set Q[x]$, in which each linear equation
  is of degree in $x$ at most 3. Solving this linear system over $\set
  Q[x]$ gives a minimal telescoper
  \begin{align}
    L&=(3x^2+42x+82)\emph\shift_x^{22}-(3x^2+30x+10)\emph\shift_x^{20}
    -2(3x^2+72x+142)\emph\shift_x^{12}\nonumber\\
    &\quad
    +2(3x^2+60x+10)\emph\shift_x^{10}+(3x^2+102x+802)\emph\shift_x^{2}
    -(3x^2+90x+610)\label{EQ:octL},
  \end{align}
  along with a corresponding certificate in the compact expression
  \begin{align*}
    &\quad L(h)+\leftquo(L\odot M_1,\emph\shift_y-1)
    \left(\frac{1}{p_1(-5x+2y)}\right)\\
    &+\leftquo(L\odot M_2,\emph\shift_y-1)\left(\frac1{p_2(3x+10y)}\right)
    +\leftquo(L\odot M_3,\emph\shift_y-1)\left(\frac1{p_3(3x+10y)}\right).
  \end{align*}
\end{example}

\subsection{Efficiency considerations}\label{SUBSEC:modifications}
The efficiency of the algorithm {\bf RationalCT} can be enhanced by
incorporating the following modifications in the algorithm.

\smallskip\noindent
(i) {\em Modification in step 1.}

In step~1, we employ the shift-homogeneous decomposition to obtain the
refined integer-linear decomposition of the denominator of $f$, which
leads to the RILD-based partial fraction decomposition of $f$.  In
fact, the role of the shift-homogeneous decomposition can be played by
any shiftless decomposition introduced in \cite[Definition~1]{GGSZ2003}.
In particular, the coarsest shiftless decomposition, namely the one
which groups all irreducible factors $g_i$ having the same tuples
$(\nu_{i1},\dots,\nu_{in_i})$ and $(e_{i1},\dots,e_{in_i})$ in
\eqref{EQ:shiftlessdecomp}, can be used.  Such a decomposition can be
computed via GCD computation (see \cite[\S 3]{GGSZ2003}). In this way,
we avoid the need of full factorization while maintaining the
uniqueness of the induced RILD-based partial fraction decomposition,
which in turn ensures the correctness of the algorithm.

\smallskip\noindent
(ii) {\em Modification in step~2.}  

In step~2, with $\tilde a_0\in \set K[x,y]$ and $u_0\in \set K[x]$
denoting the numerator and denominator of $a_0$, respectively, it
actually suffices to apply the GGSZ reduction to $\tilde a_0/p_0$
(instead of $a_0/p_0$) since $\tilde a_0/p_0 =
(\shift_y-1)(hu_0)+ru_0$ and $ru_0 = 0$ if and only if $r=0$. This
reduces the cost of this step.

\smallskip\noindent
(iii) {\em Modification in step~4.}  

Let $u\in\set K[x]$ be the common denominator of the $M_{ik}$ and
write each $M_{ik}$ as $M_{ik} = \frac1{u} \tilde M_{ik}$ for some
$\tilde M_{ik}\in\set K[x,y,\lmshift[i]]$.  Inspired by the proof of
\citep[Theorem~10]{ChKa2012a}, it actually amounts to looking for a
telescoper of the form
$L = \sum_{\ell=0}^\rho c_\ell\sigma_x^\ell(u)\shift_x^\ell$. As
such, for all $i = 1,\dots,m$ and $k = 1,\dots,d_i$, we have
\begin{equation}\label{EQ:cancelden}
  L\odot M_{ik}
  = \left(\sum_{\ell=0}^\rho c_\ell\sigma_x^\ell(u)\shift_x^\ell\frac1u\right)
  \odot \tilde M_{ik}
  = \left(\sum_{\ell=0}^\rho c_\ell\shift_x^\ell\right) \odot \tilde M_{ik}
  = \sum_{\ell=0}^\rho c_\ell\left(\shift_x^\ell\odot \tilde M_{ik}\right),
\end{equation}
and thus $L\odot M_{ik} \in \set K[x,y,\lmshift[i],\lmshift[i]^{-1}]$,
so that operations in step~4 only induce arithmetic with polynomial
coefficients. For doing so, we compute in step~4.1.1 the left scalar
remainder of $\tilde R$ of $\shift_x^\ell\odot \tilde M_{ik}$ (instead
of $\shift_x^\ell\odot M_{ik}$) by $\shift_y-1$ and return in step~4.2
a telescoper of the form
$L = \sum_{j=0}^\ell c_j\sigma_x^j(u)\shift_x^j$ once a
nontrivial solution is found.

We note that looking for a telescoper of the specified form in fact
does not lose any generality because, for a telescoper $\tilde L =
\sum_{\ell=0}^\rho\tilde c_\ell\shift_x^\ell\in \set K[x][\shift_x]$,
multiplying from the left the least common multiple $u_\rho$ of $u,
\sigma_x(u), \dots, \sigma_x^\rho(u)$ gives
\[
u_\rho \tilde L = \sum_{\ell=0}^\rho c_\ell\,\sigma_x^\ell(u)\shift_x^\ell
\quad\text{with}\
c_\ell = \frac{\tilde c_\ell u_\rho}{\sigma_x^\ell(u)}\in \set K[x],
\]
which is again a telescoper with the same order as $\tilde L$ and of
the required form. On the other hand, it is often observed in
experiments that taking such a special form for telescopers actually
helps to decrease sizes of the $c_\ell$ to be determined, which might
deserve further investigation.

\smallskip\noindent
(iv) {\em Further modification in step~4.}  

Following the preceding modification, before executing the outer loop
of step~4, we can first compute the left scalar remainder $N_{ik}$ of
each $\tilde M_{ik}$ by $\shift_y-1$; then we let $N_{ik}$ play the
role of $\tilde M_{ik}$ in step~4.1.1.  This is because any operator
in $\set K[x][\shift_x]$ commutes with $\shift_y-1$ and then
$\leftrem(\shift_x^\ell\odot \tilde M_{ik},\shift_y-1) =
\leftrem(\shift_x^\ell\odot N_{ik},\shift_y-1)$ for any $\ell\in \set
N$.  Note that every nonzero $N_{ik}$ has highest order no more than
$\mu_i-1$ and typically can be handled more easily than $\tilde
M_{ik}$.

Let us now reconsider Example~\ref{EX:oct} in the light of the above
modifications.
\begin{example}\label{EX:modified}
  Consider the same rational function $f$ as Example~\ref{EX:ggszred}.
  Using the coarsest shiftless decomposition instead of the
  shift-homogeneous decomposition as described in the modification~(i), we
  obtain the following refined integer-linear decomposition
  \begin{equation}\label{EQ:grild}
    g = p_0(x,y)p_1(-5x+2y) p_1(-5x+2y+1)p_2(3x+10y),
  \end{equation}
  where $p_0=g_0\sigma_y(g_0)\sigma_y^{29}(g_0)\sigma_y^{30}(g_0)$
  with $g_0 = xy +1$, $p_1(z) = z^2+1$ and $p_2(z) = z^3+1$.  Based on
  \eqref{EQ:grild}, we find in step~1 the partial fraction
  decomposition
  \begin{align*}
    f = \frac{a_0}{p_0}+ 
    \underbrace{((x-1)\emph\shift_{-5,2}+2x^2+1)}_{M_1}
    \left(\frac{1}{p_1(-5x+2y)}\right)
    +\underbrace{(xy+1)\emph\shift_{3,10}^0}_{M_2}
    \left(\frac1{p_2(3x+10y)}\right),
  \end{align*}
  where $a_0/p_0$ is given by \eqref{EQ:a0p0},
  $\emph\shift_{-5,2}=\emph\shift_x\emph\shift_y^3$ and
  $\emph\shift_{3,10}=\emph\shift_x^7\emph\shift_y^{-2}$.  Again, in
  step~2, we apply the GGSZ reduction to $a_0/p_0$ which yields
  \eqref{EQ:nonintlinear} with $h$ represented by the compact form
  given in \eqref{EQ:red0} and $r =0$.  In step~4, the loop will be
  executed for $\ell = 0,\dots,22$. The final, induced linear system
  contains $22$ equations in unknowns $c_0,\dots,c_{22}$ over $\set
  Q[x]$ and each equation has degree in $x$ at most 2. This compares
  to Example~\ref{EX:oct} which involves a linear system of $32$
  equations of degree in $x$ at most 3. The basis to the
  nullspace of the linear system over $\set Q(x)$ gives rise to the
  same minimal telescoper $L$ in \eqref{EQ:octL}.  Note that
  modifications~(ii)-(iv) are trivial in this example.
\end{example}

\section{Arithmetic cost for the new algorithm}\label{SEC:complexity}
In this section, we give a complexity analysis of the new algorithm
described in the preceding section. For this purpose, we first collect
some classical complexity notations and facts needed in this
paper. More background on these can be found in \citep{vzGe2013}.

\subsection{Complexity background}
In this paper, costs of algorithms will be counted by the
number of arithmetic operations in the field $\set K$. All costs are
analyzed in terms of $\bigO$-estimates for classical arithmetic and
$\softO$-estimates for fast arithmetic, where the {\em soft-Oh notation}
\lq\lq $\softO$\rq\rq\ is basically \lq\lq $\bigO$\rq\rq\ but
suppressing logarithmic factors (see \citep[Definition~25.8]{vzGe2013}
for a precise definition).

We summarize the facts needed for our analysis below and will freely
use them later.  For proofs, we refer to \citep{vzGe2013},
\citep[\S 3 and \S 5]{Gerh2004} and \citep[Theorem~4.1]{ZLS2012}.

The first fact gives sharp degree bounds for two basic arithmetic
operations -- division with remainder and partial fraction
decomposition.  This turns out to be very useful in estimating degree
sizes.  The proofs are mainly based on Cramer's rule and determinant
expansions and will be skipped.

\begin{fact}[Degree bounds]\label{FAC:degreesize}
  Let $f,g$ be two nonzero polynomials in $\set K[x,y]$.
  \begin{itemize}
  \item[(i)] Assume that $\deg_y(f)\geq \deg_y(g)$. Then there exist
    unique $q,r\in \set K[x,y]$ with
    \begin{align*}
      &(\deg_x(q),\deg_y(q)) \leq
      \big((\deg_y(f)-\deg_y(g))\deg_x(g)+\deg_x(f),
      \deg_y(f)-\deg_y(g)\big)\\[1ex]
      \text{and}\quad &(\deg_x(r),\deg_y(r)) \leq
      \big((\deg_y(f)-\deg_y(g)+1)\deg_x(g)+\deg_x(f), \deg_y(g)-1\big)
    \end{align*}
    such that $\lc_y(g)^{\deg_y(f)-\deg_y(g)+1} f = q g + r$.
		
  \item[(ii)] Assume that $\deg_y(f)<\deg_y(g)$ and $g= g_1^{e_1}\dots g_m^{e_m}$
    with $e_i\in\set N\setminus\{0\}$ and $g_i\in \set K[x,y]$ being
    pairwise coprime. Then there exists $u\in \set K[x]$ and
    $\{f_{ij}\}_{1\leq i\leq m,1\leq j\leq e_i}\subseteq \set K[x,y]$
    with
    \begin{align*}
      &\deg_x(u)\leq \deg_x(g)\deg_y(g)
      -\sum_{i=1}^m\frac{e_i(1+e_i)}2\deg_x(g_i)\deg_y(g_i)
      \quad\text{and}\\[1ex]
      &(\deg_x(f_{ij}),\deg_y(f_{ij}))\leq
      (\deg_x(g)\deg_y(g)
      +\deg_x(f)-\deg_x(g)+j\deg_x(g_i), \deg_y(g_i)-1)
    \end{align*}
    such that
    \[\frac{f}{g} = \frac1{u}\left(\frac{f_{11}}{g_1}+\dots+
    \frac{f_{1e_1}}{g_1^{e_1}}+\dots+\frac{f_{m1}}{g_m}+\dots
    +\frac{f_{me_m}}{g_m^{e_m}}\right).\]
  \end{itemize}
\end{fact}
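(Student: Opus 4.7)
For part (i), my plan is to unroll the classical pseudo-division step-by-step and track the $x$-degree through each iteration. Set $n = \deg_y(f)$, $m = \deg_y(g)$ and $d = n - m + 1$, and define $f_0 = f$ together with
\[
f_{k+1} = \lc_y(g)\, f_k \;-\; q_k\, y^{\deg_y(f_k) - m}\, g,
\]
where $q_k$ is the leading $y$-coefficient of $f_k$ (taken to be $0$ if the $y$-degree has already dropped). A straightforward induction gives $\deg_y(f_k) \leq n - k$ and $\deg_x(f_k) \leq \deg_x(f) + k\deg_x(g)$, so $r = f_d$ has the claimed bounds. The quotient assembles as $q = \sum_{k=0}^{d-1} \lc_y(g)^{d-1-k}\, q_k\, y^{n-k-m}$, and each summand has $x$-degree at most $(d-1-k)\deg_x(g) + (\deg_x(f) + k\deg_x(g)) = (n-m)\deg_x(g) + \deg_x(f)$, matching the stated bound on $\deg_x(q)$. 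Uniqueness of $q,r$ follows from the constraint $\deg_y(r) < \deg_y(g)$.

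For part (ii), I would convert the partial fraction ansatz into a square linear system over $\set K(x)$ of size $\deg_y(g)$ and read off $u$ as the determinant of its coefficient matrix. Writing each $f_{ij}$ as a polynomial in $y$ of degree less than $\deg_y(g_i)$ with unknown $x$-coefficients, matching coefficients in $y$ of the identity $f = \sum_{i,j} f_{ij}\, g/g_i^j$ yields this system. Its columns split into $m$ blocks, the $i$-th block consisting of $e_i \deg_y(g_i)$ columns corresponding to $y^\ell\, g/g_i^j$ for $j = 1,\dots,e_i$ and $\ell = 0,\dots,\deg_y(g_i)-1$; each such column is a polynomial in $x$ of degree at most $\deg_x(g) - j\deg_x(g_i)$. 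Nonsingularity of the matrix follows from the Chinese remainder theorem applied to $\set K(x)[y]/(g)$, and Cramer's rule then produces $u\, f_{ij}$ as a minor of the augmented matrix.

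The main obstacle is the sharp degree accounting that produces the correction $\sum_i \tfrac{e_i(e_i+1)}{2}\deg_x(g_i)\deg_y(g_i)$. The bound $\deg_x(u) \leq \sum_i \deg_y(g_i) \sum_{j=1}^{e_i}(\deg_x(g) - j\deg_x(g_i))$, obtained by summing maximal column-degree contributions across any permutation, simplifies to exactly the stated expression. For each $f_{ij}$-coefficient, the numerator determinant is obtained from $u$ by replacing the column associated with that unknown (of $x$-degree $\leq \deg_x(g) - j\deg_x(g_i)$) by the right-hand side (of $x$-degree $\leq \deg_x(f)$); this shifts the degree estimate by $\deg_x(f) - \deg_x(g) + j\deg_x(g_i)$, which, added to the bound on $\deg_x(u)$, recovers the stated bound on $\deg_x(f_{ij})$. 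The $y$-degree bound $\deg_y(f_{ij}) \leq \deg_y(g_i) - 1$ is built into the ansatz, and the remaining care lies in ensuring no column is double-counted in the permutation sum.
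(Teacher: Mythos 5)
Your proposal is correct. The paper skips this proof, saying only that it rests on Cramer's rule and determinant expansions, and that is exactly what you carry out for part (ii): the square $\deg_y(g)\times\deg_y(g)$ system, the column-degree bound $\deg_x(g)-j\deg_x(g_i)$, and the column-wise determinant estimate do simplify to the stated bounds (your bound for $\deg_x(f_{ij})$ even retains the correction term $\sum_i\tfrac{e_i(e_i+1)}{2}\deg_x(g_i)\deg_y(g_i)$, so it is slightly sharper than what the Fact claims). For part (i) your unrolled pseudo-division is the standard direct argument and the induction $\deg_x(f_k)\leq\deg_x(f)+k\deg_x(g)$ together with the explicit expression for $q$ gives precisely the stated degree bounds, so no gap remains.
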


The next fact contains the cost of some basic arithmetics for
univariate polynomials.
\begin{fact}[Arithmetic of univariate polynomials]
  \label{FAC:unipolycost}
  Let $f,g\in \set K[x]$ with $\deg_x(f),\deg_x(g)\leq d_x$. Then the
  following operations can be performed at most in $\bigO(d_x^2)$
  arithmetic operations in $\set K$ with classical arithmetic and
  $\softO(d_x)$ with fast arithmetic.
  \begin{itemize}
  \item[(i)] Addition, multiplication, division with remainder, GCD
    computation of $f$ and $g$;
  \item[(ii)] Evaluation $f$ at $d_x+1$ distinct points in $\set K$ or
    interpolation in $\set K[x]$ at these points;
  \item[(iii)] Partial fraction decomposition of $f/g$ with respect to
    a given factorization of $g$, provided that $f,g$ are nonzero
    coprime polynomials with $\deg_x(f)<\deg_x(g)$.
  \end{itemize}
\end{fact}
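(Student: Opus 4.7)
Since the fact collects standard algorithmic cost estimates from the computer algebra literature, my plan is to verify each item by pairing the operation with the appropriate classical and fast algorithm and checking that the stated cost is attained. Write $\mathsf M(d_x)$ for the cost of multiplying two polynomials of degree at most $d_x$; then $\mathsf M(d_x)\in \bigO(d_x^2)$ via schoolbook multiplication and $\mathsf M(d_x)\in \softO(d_x)$ via FFT-based multiplication. The strategy is to reduce every operation in~(i)--(iii) to a constant or logarithmic number of degree-$d_x$ multiplications together with $\bigO(d_x)$ of linear work, so that both the classical and the fast bounds follow uniformly by substituting the corresponding value of $\mathsf M(d_x)$.

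For item~(i), addition is linear and multiplication is the defining operation. Division with remainder reduces to multiplication by Newton iteration applied to the reciprocal of the reverse of the divisor, giving cost $\bigO(\mathsf M(d_x))$ in both regimes; the greatest common divisor is computed classically by the extended Euclidean algorithm in $\bigO(d_x^2)$ and, in the fast regime, by the half-GCD algorithm, whose divide-and-conquer recursion on the continued-fraction remainder sequence yields $\softO(\mathsf M(d_x))=\softO(d_x)$.

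For item~(ii), Horner's rule evaluates $f$ at a single point in $\bigO(d_x)$ operations, giving $\bigO(d_x^2)$ over $d_x+1$ points, while fast multipoint evaluation constructs a subproduct tree over the evaluation points and descends by division with remainder, yielding the recursion $T(d_x)=2T(d_x/2)+\bigO(\mathsf M(d_x))$ and hence $\softO(d_x)$. Interpolation is obtained by inverting this procedure through Lagrange's formula with precomputed denominators, at the same cost in each regime.

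For item~(iii), the partial fraction decomposition of $f/g$ with respect to $g=\prod_{i=1}^m g_i^{e_i}$ proceeds in two stages: first peel off successive powers $g_i^j$ by iterated division with remainder, and then decompose across the pairwise coprime factors using the extended Euclidean algorithm arranged in a balanced binary tree over the $g_i$. Each level of the tree costs $\bigO(\mathsf M(d_x))$ in the fast regime and $\bigO(d_x^2)$ classically, and the $\bigO(\log m)\subseteq \bigO(\log d_x)$ depth is absorbed into the soft-Oh. The main obstacle throughout is keeping the polynomial degrees inside each recursive call bounded by $\bigO(d_x)$ so that no intermediate step incurs a super-quadratic blow-up; the subproduct-tree discipline ensures this uniformly across items~(ii) and~(iii), and the degree bounds of Fact~\ref{FAC:degreesize} guarantee it for the partial fraction step in~(iii).
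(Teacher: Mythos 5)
This statement is labelled a \emph{Fact} and the paper supplies no proof of its own: it simply points the reader to the standard references (von zur Gathen--Gerhard, Gerhard's monograph, and Zhou--Labahn--Storjohann). Your sketch therefore goes beyond what the paper does, and it correctly reconstructs the standard arguments: schoolbook versus FFT-based multiplication for the two cost regimes, Newton iteration on the reversed divisor for fast division with remainder, the extended Euclidean algorithm versus half-GCD, Horner versus subproduct-tree multipoint evaluation and its inverse for interpolation, and a balanced tree of extended Euclidean computations plus repeated division for partial fractions. Two small quibbles. First, in item~(iii) the two stages are more naturally performed in the opposite order --- one first splits $f/g$ across the pairwise coprime blocks $g_i^{e_i}$ (this is where coprimality and the Euclidean tree are used) and only then expands each piece $h_i/g_i^{e_i}$ into the powers $g_i, g_i^2, \dots, g_i^{e_i}$ by repeated division; peeling off powers before the coprime split is not well defined. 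Second, the appeal to Fact~\ref{FAC:degreesize} is misplaced: that fact concerns degree bounds in $x$ for \emph{bivariate} division and partial fractions with respect to $y$, whereas in the univariate setting of item~(iii) the needed degree control is immediate, since every remainder and every numerator in the decomposition has degree strictly below that of the corresponding divisor, hence below $d_x$. Neither point affects the validity of the cost estimates, and your reduction of everything to $\bigO(\log d_x)$ many degree-$d_x$ multiplications is exactly the argument the cited references carry out in detail.
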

In order to analyze the cost for operations on bivariate polynomials,
a general (although not optimal) technique is to use evaluation and
interpolation on polynomials and to perform operations on univariate
polynomials based on the above fact. We will frequently use this
technique without explicitly pointing it out.

As mentioned in the introduction, most of recent creative telescoping
algorithms, including our new one presented in Section~\ref{SEC:oct},
eventually reduce the problem of finding telescopers to the problem of
solving linear systems, which can be accomplished efficiently.
\begin{fact}[Solving linear systems]\label{FAC:systemcost}
  Let $M$ be a polynomial matrix in $\set K[x]^{m\times n}$ with
  entries being polynomials in $\set K[x]$ of degree in $x$ less than
  $d_x$. Assume that $n\in\bigO(m)$. Then a basis of the null space of
  $M$ in $\set K[x]$ can be computed using $\bigO(m^3d_x^2)$
  arithmetic operations in~$\set K$ with classical arithmetic
  (Gaussian elimination) and $\softO(m^{\omega-1}nd_x)$ with fast
  arithmetic, where $\omega\in\set R$ with $2<\omega\leq 3$ is the
  exponent of matrix multiplication over $\set K$.
\end{fact}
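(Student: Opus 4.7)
The plan is to establish the two cost estimates in Fact~\ref{FAC:systemcost} using standard techniques from polynomial matrix computation; the two bounds require rather different arguments.

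For the classical bound $\bigO(m^3 d_x^2)$, I would perform Gaussian elimination on $M$ viewed as a matrix over the fraction field $\set K(x)$, but organized in a fraction-free (Bareiss-style) manner so that all intermediate entries remain in $\set K[x]$. Gaussian elimination requires $\bigO(m^2 n) = \bigO(m^3)$ elementary row operations, and the key structural observation is that at the $k$th stage of a Bareiss-style elimination each intermediate entry is, up to sign, a $k\times k$ minor of (a submatrix of) $M$, hence a polynomial of degree $\bigO(k d_x)$. A careful amortized accounting shows that the exact divisions in the Bareiss recurrence keep the per-step cost at $\bigO(d_x^2)$ using classical polynomial arithmetic (Fact~\ref{FAC:unipolycost}(i)), so the total cost for elimination is $\bigO(m^3 d_x^2)$. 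Once the matrix is in echelon form, reading off a basis of the null space using back-substitution fits within the same bound.

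For the soft-Oh bound $\softO(m^{\omega-1} n d_x)$, I would invoke a kernel basis algorithm for polynomial matrices based on fast polynomial matrix arithmetic, as developed in \citep{ZLS2012}. The idea is to reduce kernel basis computation to a logarithmic number of polynomial matrix multiplications via a divide-and-conquer strategy on the degree (or on the dimension). Polynomial matrix multiplication of two $m \times m$ matrices with entries of degree bounded by $d$ costs $\softO(m^\omega d)$ arithmetic operations in $\set K$ by combining fast univariate polynomial multiplication with fast dense matrix multiplication. Combined with the hypothesis $n \in \bigO(m)$ and degree bounds on the output (again a minor-based argument gives $\bigO(m d_x)$), the divide-and-conquer recurrence solves to $\softO(m^{\omega-1} n d_x)$ as stated.

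The main obstacle is the classical-arithmetic analysis: a naive application of Gaussian elimination over $\set K(x)$ produces intermediate rational functions of rapidly growing numerator and denominator degrees, and a direct worst-case bound on minor degrees alone would inflate the total work well beyond $\bigO(m^3 d_x^2)$. The delicate point is thus the amortized argument that exploits the fraction-free structure to charge most of the cost of the $k$th elimination stage against previously-accumulated work, so that the aggregate bound is cubic in $m$ and only quadratic in $d_x$. The soft-Oh statement is relatively clean by comparison, depending mostly on citing the appropriate kernel basis algorithm and combining it with the standard cost of polynomial matrix multiplication.
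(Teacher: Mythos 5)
First, note that the paper offers no proof of this statement: it is labelled a Fact and is imported from the literature, the classical bound being attributed to \citep{vzGe2013} and \citep[\S 3 and \S 5]{Gerh2004} and the fast bound to \citep[Theorem~4.1]{ZLS2012}. Your handling of the soft-Oh bound is fine and amounts to the same citation the paper makes. The genuine gap is in your classical-arithmetic argument.

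You correctly observe that in Bareiss-style fraction-free elimination the stage-$k$ intermediate entries are $k\times k$ minors and hence have degree $\Theta(k d_x)$ in the worst case. But that observation already defeats the bound you are trying to reach by the route you chose: a single entry update at stage $k$ multiplies two polynomials of degree $\Theta(k d_x)$ and performs an exact division by a polynomial of degree $\Theta(k d_x)$, which with classical univariate arithmetic (Fact~\ref{FAC:unipolycost}(i)) costs $\Theta(k^2 d_x^2)$, not $\bigO(d_x^2)$. Since stage $k$ updates $(m-k)(n-k)$ entries and $n\in\Theta(m)$, the total is $\Theta(m^5 d_x^2)$; indeed the single stage $k=\lfloor m/2\rfloor$ already costs $\Theta(m^4 d_x^2)$, so no amortization across stages can recover $\bigO(m^3 d_x^2)$. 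The ``careful amortized accounting'' you invoke is precisely the step that does not exist, and you give no argument for it --- you only flag it as delicate. To actually justify the classical bound one has to abandon symbolic elimination over $\set K[x]$ and argue along the lines of the cited references, e.g.\ via Cramer-type degree bounds on the (cleared) nullspace entries combined with evaluation--interpolation or a modular scheme as in \citep[\S 5]{Gerh2004}; this is a different computation from Gaussian elimination with polynomial entries, and its cost accounting cannot be reconstructed from the Bareiss recurrence.
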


\subsection{Output size estimates}
We define the degree of a rational function in $\set K(x,y)$ with
respect to $x$ (resp.\ $y$) to be the maximum of the degrees of its
numerator and denominator with respect to $x$ (resp.\ $y$).  Using
Fact~\ref{FAC:degreesize}, we are now able to estimate sizes of
intermediate results.
\begin{lemma}\label{LEM:intermediatedeg}
  Let $f\in\set K(x,y)$ be a rational function with $\deg_x(f) = d_x$
  and $\deg_y(f) = d_y$. Assume that the RILD-based partial fraction
  decomposition of $f$ takes the form \eqref{EQ:ratdecomp}. Let
  $\tilde a_0 \in \set K[x,y]$ be the numerator of $a_0$.  Let
  $u\in\set K[x]$ be the common denominator of the $M_{ik}$ and write
  each $M_{ik}$ as $M_{ik} = \frac1{u} \tilde M_{ik}$ for some $\tilde
  M_{ik}\in\set K[x,y,\emph\shift_{\lambda_i,\mu_i}]$.  Then
  \begin{align*}
    &(\deg_x(\tilde a_0),\deg_y(\tilde a_0))\in \bigO(d_xd_y)\times\bigO(d_y),
    \quad \deg_x(u)\in\bigO(d_xd_y)\\
    \text{and}\quad &(\deg_x(\tilde M_{ijk}),\deg_y(\tilde M_{ijk}))
    \in \bigO(d_xd_y)\times\bigO(\deg_z(p_i))
    \ \text{for all}\ i = 1, \dots, m \ \text{and}\ k = 1, \dots, d_i.
  \end{align*}
\end{lemma}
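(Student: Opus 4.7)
The plan is to reduce the proof to a direct application of Fact~\ref{FAC:degreesize}, exploiting the observation that \eqref{EQ:ratdecomp} is the standard partial fraction decomposition of $f$ as an element of $\set K(x)(y)$, merely regrouped into shift operators according to $(\sigma_x,\sigma_y)$-equivalence classes of integer-linear factors.

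I would first write $f=A/B$ with coprime $A,B\in\set K[x,y]$, so $\deg_x(A),\deg_x(B)\le d_x$ and $\deg_y(A),\deg_y(B)\le d_y$. Let $B=p_0\prod_{i,j}p_i(\lambda_ix+\mu_iy+\nu_{ij})^{e_{ij}}$ be the refined integer-linear decomposition of $B$; by the defining properties, $p_0$ and all the distinct factors $p_i(\lambda_ix+\mu_iy+\nu_{ij})$ are pairwise coprime in $\set K[x,y]$. Because $\mu_i>0$, each $p_i(\lambda_ix+\mu_iy+\nu_{ij})$ has $y$-degree exactly $\deg_z(p_i)$ and $x$-degree $|\lambda_i|\deg_z(p_i)\le d_x$, while $\deg_x(p_0)\le d_x$ and $\deg_y(p_0)\le d_y$. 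If $\deg_y(A)\ge\deg_y(B)$, Fact~\ref{FAC:degreesize}(i) provides $q,A'\in\set K[x,y]$ with $\lc_y(B)^s A=qB+A'$, where $s:=\deg_y(A)-\deg_y(B)+1$, $\deg_y(A')<\deg_y(B)$, $\deg_x(q),\deg_x(A')\in\bigO(d_xd_y)$, and $\deg_y(q)\le d_y$; the term $q/\lc_y(B)^s$ will be folded into $a_0/p_0$.

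Applying Fact~\ref{FAC:degreesize}(ii) to $A'/B$ then yields a common $x$-denominator $u\in\set K[x]$ with $\deg_x(u)\le\deg_x(B)\deg_y(B)\le d_xd_y$ together with numerators $f_0,f_{ijk}\in\set K[x,y]$ satisfying $\deg_y(f_0)\le d_y-1$, $\deg_y(f_{ijk})\le\deg_z(p_i)-1$, and $\deg_x(f_0),\deg_x(f_{ijk})\in\bigO(d_xd_y)$ (for the last estimate, $k\le d_y$ combined with $\deg_x(p_i(\lambda_ix+\mu_iy+\nu_{ij}))\le d_x$ gives $k\deg_x(p_i(\lambda_ix+\mu_iy+\nu_{ij}))\in\bigO(d_xd_y)$). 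By the uniqueness of the RILD-based partial fraction decomposition, $a_{ijk}=f_{ijk}/u$, hence $\tilde M_{ik}=u M_{ik}=\sum_j f_{ijk}\lmshift[i]^{\nu_{ij}}$, whose total degrees in $x$ and $y$ are the coefficientwise maxima. This delivers $\deg_x(\tilde M_{ik})\in\bigO(d_xd_y)$ and $\deg_y(\tilde M_{ik})\le\deg_z(p_i)-1\in\bigO(\deg_z(p_i))$. Combining $f_0/p_0$ with the polynomial piece from the preliminary division produces $\tilde a_0$ with $\deg_x(\tilde a_0)\in\bigO(d_xd_y)$ and $\deg_y(\tilde a_0)\in\bigO(d_y)$.

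The main obstacle is the bookkeeping when $\deg_y(A)\ge\deg_y(B)$: one must verify that absorbing the pseudo-leading factor $\lc_y(B)^s$ (whose $x$-degree is at most $(d_y+1)d_x$) into both $u$ and $\tilde a_0$ preserves the asserted $\bigO$-orders, and that the common $x$-denominator of the $M_{ik}$ as defined in the statement coincides with or divides the $u$ produced by the Cramer-rule proof of Fact~\ref{FAC:degreesize}(ii). Both checks are routine once one observes that all the $a_{ijk}$ and $a_0$ arise by solving a single linear system whose matrix has entries in $\set K[x]$ with degree already controlled by $\deg_x(B)\deg_y(B)$.
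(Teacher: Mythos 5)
Your proposal is correct and follows the same route as the paper: the paper's own proof is a one-line remark that \eqref{EQ:ratdecomp} is just the partial fraction decomposition of $f$ with respect to $y$ relative to the refined integer-linear decomposition of the denominator, so the bounds follow from Fact~\ref{FAC:degreesize}. You have simply spelled out that application in full (including the pseudo-division step for the improper case and the identification $a_{ijk}=f_{ijk}/u$), and the details check out.
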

\begin{proof}
  We know from definition that \eqref{EQ:ratdecomp} gives the partial
  fraction decomposition of $f$ with respect to $y$, based on the
  refined integer-linear decomposition of its denominator. The degree
  bounds thus follow directly by Fact~\ref{FAC:degreesize}.
\end{proof}
\begin{lemma}\label{LEM:systemdeg}
  Let $r\in \set K(x,y)$ be a rational function of the form
  \eqref{EQ:rform}.  Let $u\in\set K[x]$ be the common denominator of
  the $M_{ik}$ and write each $M_{ik}$ as $M_{ik} = \frac1{u} \tilde
  M_{ik}$ for some $\tilde M_{ik}\in\set
  K[x,y,\emph\shift_{\lambda_i,\mu_i}]$.  Let $L = \sum_{\ell=0}^\rho
  c_\ell\sigma_x^\ell(u)\emph\shift_x^\ell \in\set
  K[x][\emph\shift_x]$ with $\rho\in\set N$ and $c_\ell\in\set K[x]$.
  Then for each integer pair $(i,k)$ with $1\leq i\leq m$ and $1\leq
  k\leq d_i$, the left scalar remainder $R_{ik}$ of $L\odot M_{ik}$ by
  $\emph\shift_y-1$ can be written as
  \begin{equation}\label{EQ:leftremform}
    R_{ik} = c_\rho \tilde R_{ik\rho} + \dots + c_0 \tilde R_{ik0},
  \end{equation}
  where $\tilde R_{ik\ell}\in\set K[x,y,\emph\shift_{\lambda_i,\mu_i}]$ with
  \[
  (\deg_x(\tilde R_{ik\ell}),\deg_y(\tilde R_{ik\ell}))
  \leq (\deg_x(\tilde M_{ik}),\deg_y(\tilde M_{ik}))
  \quad \text{and}\quad
  \deg_{x,y}(\tilde R_{ik\ell}) \leq \deg_{x,y}(\tilde M_{ik}).
  \]
  Here $\deg_{x,y}(\cdot)$ denotes the total degree of the argument
  with respect to $x,y$.
\end{lemma}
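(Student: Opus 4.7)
The plan is to reduce to the identity~\eqref{EQ:cancelden} and then invoke $\set K[x]$-linearity of $\leftrem(\cdot,\shift_y-1)$, together with the explicit formula of Remark~\ref{REM:formulas} to read off the degree bounds. By~\eqref{EQ:cancelden}, $L \odot M_{ik} = \sum_{\ell=0}^\rho c_\ell\,(\shift_x^\ell \odot \tilde M_{ik})$. Setting $\tilde R_{ik\ell} := \leftrem(\shift_x^\ell \odot \tilde M_{ik},\,\shift_y-1)$, I would then derive~\eqref{EQ:leftremform} by establishing that $\leftrem(\cdot,\shift_y-1)$ is $\set K[x]$-linear.

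For this linearity, I would note that for any $c \in \set K[x]$ and any $Q \in \lmA[i]$, one has $c \cdot ((\shift_y-1) \odot Q) = (\shift_y-1) \odot (c Q)$, because $c$ commutes with $\shift_y-1$ inside $\cal A$ and $\lmphi[i]$ is $\set K(x,y)$-linear. Scaling each division $\shift_x^\ell \odot \tilde M_{ik} = (\shift_y-1) \odot Q_{ik\ell} + \tilde R_{ik\ell}$ by $c_\ell$ and summing over $\ell$, the remainder sum $\sum_\ell c_\ell \tilde R_{ik\ell}$ still satisfies the order constraints of Theorem~\ref{THM:leftquorem}, since multiplication by an element of $\set K[x]$ does not alter the exponents in $\lmshift[i]$. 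The uniqueness half of Theorem~\ref{THM:leftquorem} then yields~\eqref{EQ:leftremform}.

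For the structural claims about $\tilde R_{ik\ell}$, expand $\tilde M_{ik} = \sum_j a_{ikj}\, \lmshift[i]^{\,j}$ with $a_{ikj} \in \set K[x,y]$; then $\shift_x^\ell \odot \tilde M_{ik} = \sum_j \sigma_x^\ell(a_{ikj})\, \lmshift[i]^{\,j+\ell\lambda_i}$. Formula~\eqref{EQ:leftremformula} writes $\tilde R_{ik\ell} = \sum_{r=0}^{\mu_i-1} b_{ikr\ell}\, \lmshift[i]^{\,r}$, where each $b_{ikr\ell}$ is a finite sum of expressions of the form $\sigma_y^{-q}\sigma_x^\ell(a_{ikj})$. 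Since $\sigma_x$ and $\sigma_y$ preserve $x$-degree, $y$-degree, and total degree, taking the maximum over summands immediately gives $\deg_x(\tilde R_{ik\ell}) \leq \deg_x(\tilde M_{ik})$, $\deg_y(\tilde R_{ik\ell}) \leq \deg_y(\tilde M_{ik})$, and $\deg_{x,y}(\tilde R_{ik\ell}) \leq \deg_{x,y}(\tilde M_{ik})$, while polynomiality in $x,y$ is visible from the formula. That the exponents of $\lmshift[i]$ all lie in $\{0,\ldots,\mu_i-1\}$, so that $\tilde R_{ik\ell} \in \set K[x,y,\lmshift[i]]$ rather than in the localized ring, is precisely the order condition of Theorem~\ref{THM:leftquorem}. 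The only minor subtlety I anticipate is that $j + \ell\lambda_i$ can be negative when $\lambda_i < 0$, but Theorem~\ref{THM:leftquorem} already normalizes the remainder back to nonnegative exponents, so this poses no real obstacle.
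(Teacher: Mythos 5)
Your proof is correct and follows essentially the same route as the paper's: reduce via \eqref{EQ:cancelden}, set $\tilde R_{ik\ell} = \leftrem(\shift_x^\ell\odot\tilde M_{ik},\shift_y-1)$, and read the degree bounds off \eqref{EQ:leftremformula}. You simply make explicit the $\set K[x]$-linearity and uniqueness argument that the paper leaves implicit, and your handling of the possibly negative exponents $j+\ell\lambda_i$ is accurate.
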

\begin{proof}
  For each integer pair $(i,k)$ with $1\leq i\leq m$ and $1\leq k\leq
  d_i$, it follows from \eqref{EQ:cancelden} that letting $\tilde
  R_{ik\ell} = \leftrem(\shift_x^\ell\odot \tilde M_{ik},\shift_y-1)$
  for all $\ell = 0,\dots,\rho$ gives the decomposition
  \eqref{EQ:leftremform}.  It remains to check the degree estimates of
  $\tilde R_{ik\ell}$, which in turn is an immediate result of
  \eqref{EQ:leftremformula}.
\end{proof}
The following depicts an order-degree curve of telescopers for
bivariate rational functions.
\begin{lemma}\label{LEM:octorderdegree}
  Let $r\in \set K(x,y)$ be a rational function of the form
  \eqref{EQ:rform}. Let $u\in\set K[x]$ be the common denominator of
  the $M_{ik}$ and write each $M_{ik}$ as $M_{ik} = \frac1{u} \tilde
  M_{ik}$ for some $\tilde M_{ik}\in\set
  K[x,y,\emph\shift_{\lambda_i,\mu_i}]$.  For each integer pair
  $(i,k)$ with $1\leq i\leq m$ and $1\leq k\leq d_i$, define
  $\alpha_{ik} = \max\{-1,\deg_{x,y}(\tilde M_{ik})\}$ and $\beta_{ik}
  = \max\{-1,\deg_y(\tilde M_{ik})\}$, and let
  \begin{equation}\label{EQ:upperbound}
    \rho_0=\sum_{i=1}^m\sum_{k=1}^{d_i}\mu_i(\beta_{ik}+1).
  \end{equation}
  Then for any nonnegative integer pair $(\rho,\tau)$ with $\rho \geq
  \rho_0$ and
  \begin{align}
    \tau &> \deg_x(u)-1+\frac{\sum_{i=1}^m\sum_{k=1}^{d_i}
      \mu_i(\alpha_{ik}-\frac12\beta_{ik})
      (\beta_{ik}+1)}{\rho+1-\rho_0},
    \label{EQ:degreebound}
  \end{align}
  there exists a telescoper for $r$ of order at most $\rho$ and degree
  at most $\tau$.
\end{lemma}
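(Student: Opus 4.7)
The plan is to restrict the search space to the special ansatz used in the efficiency modification of Section~\ref{SUBSEC:modifications}(iii). Specifically, I would look for a telescoper of the form
\[
L = \sum_{\ell=0}^{\rho} c_\ell\,\sigma_x^\ell(u)\,\shift_x^\ell,
\qquad c_\ell\in\set K[x],\ \deg_x(c_\ell)\leq D:=\tau-\deg_x(u),
\]
so that $L$ automatically has order at most $\rho$ and degree at most $\tau$, and any nontrivial choice of $(c_0,\ldots,c_\rho)$ yields $L\neq 0$ because $u\neq 0$. By Theorem~\ref{THM:criterion}, such an $L$ is a telescoper for $r$ if and only if $\leftrem(L\odot M_{ik},\shift_y-1)=0$ for every admissible pair $(i,k)$. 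Using \eqref{EQ:cancelden} and Lemma~\ref{LEM:systemdeg}, each of these remainders can be written as
\[
R_{ik}=\sum_{\ell=0}^{\rho} c_\ell\,\tilde R_{ik\ell},
\quad\text{with}\quad
\tilde R_{ik\ell}\in\set K[x,y,\lmshift[i]],\
\deg_y(\tilde R_{ik\ell})\leq\beta_{ik},\
\deg_{x,y}(\tilde R_{ik\ell})\leq\alpha_{ik},
\]
and, by Theorem~\ref{THM:leftquorem}, $R_{ik}$ has highest order in $\lmshift[i]$ strictly less than $\mu_i$.

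The second step is a monomial count. For each triple $(i,k,r)$ with $0\leq r<\mu_i$, the coefficient $b_{ikr}(x,y)$ of $\lmshift[i]^r$ in $R_{ik}$ is a $\set K$-linear combination of the (known) coefficients of $\lmshift[i]^r$ in the $\tilde R_{ik\ell}$, weighted by the unknown polynomials $c_\ell$. Since each $c_\ell$ has total degree at most $D$ in $x$ and each coefficient of $\tilde R_{ik\ell}$ satisfies the degree constraints above, every monomial $x^a y^b$ occurring in $b_{ikr}$ satisfies $b\leq\beta_{ik}$ and $a+b\leq D+\alpha_{ik}$. Imposing $b_{ikr}=0$ therefore gives at most
\[
\sum_{b=0}^{\beta_{ik}}\bigl(D+\alpha_{ik}-b+1\bigr)
=(\beta_{ik}+1)(D+\alpha_{ik}+1)-\tfrac{\beta_{ik}(\beta_{ik}+1)}{2}
\]
linear equations in the coefficients of the $c_\ell$'s (this is valid also in the degenerate case $\alpha_{ik}=\beta_{ik}=-1$, where both sides vanish and $\tilde M_{ik}=0$ imposes no equation). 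Summing over $r$, $i$, $k$, the total number of equations is
\[
N=(D+1)\sum_{i,k}\mu_i(\beta_{ik}+1)+\sum_{i,k}\mu_i(\beta_{ik}+1)\!\left(\alpha_{ik}-\tfrac{\beta_{ik}}{2}\right)
=(D+1)\rho_0+S,
\]
where $S$ is the numerator appearing in \eqref{EQ:degreebound}.

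Finally, the number of scalar unknowns is $(\rho+1)(D+1)$, so a nontrivial solution exists whenever $(\rho+1)(D+1)>N$, equivalently $(\rho+1-\rho_0)(D+1)>S$. Since $\rho\geq \rho_0$ the factor $\rho+1-\rho_0$ is positive, so this rearranges to $\tau>\deg_x(u)-1+S/(\rho+1-\rho_0)$, which is precisely \eqref{EQ:degreebound}. Standard linear algebra then yields nonzero $c_\ell\in\set K[x]$ of the prescribed degree and hence a telescoper of order at most $\rho$ and degree at most $\tau$, as required.

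The main technical point to be careful about is the monomial count of $b_{ikr}$: one must simultaneously exploit the $y$-degree bound $\beta_{ik}$ (coming from the partial fraction structure) and the total-degree bound $\alpha_{ik}$ (coming from Lemma~\ref{LEM:systemdeg}), since the coupling of these two bounds is what produces the subtracted triangular term $\frac{\beta_{ik}(\beta_{ik}+1)}{2}$ and hence the exact form of the bound in \eqref{EQ:degreebound}.
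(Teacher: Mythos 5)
Your proposal is correct and follows essentially the same route as the paper's proof: the same ansatz $L=\sum_\ell c_\ell\sigma_x^\ell(u)\shift_x^\ell$ justified by Theorem~\ref{THM:criterion}, the same degree bounds from Lemma~\ref{LEM:systemdeg}, and the same count of unknowns $(\rho+1)(\tau-\deg_x(u)+1)$ versus equations $(\tau-\deg_x(u)+1)\rho_0+\sum_{i,k}\mu_i(\alpha_{ik}-\frac12\beta_{ik})(\beta_{ik}+1)$. Your write-up merely makes the monomial count (including the triangular correction and the degenerate case $\tilde M_{ik}=0$) more explicit than the paper does.
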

\begin{proof}
  Let $\rho,\tau\in \set N$ with $\rho \geq \rho_0$ and $\tau$
  satisfying \eqref{EQ:degreebound}. To prove the lemma, it is
  sufficient to show that there exist $c_0,\dots, c_\rho\in \set
  K[x]$, not all zero, with $\deg_x(c_\ell)\leq\tau-\deg_x(u)$ such
  that
  \begin{equation}\label{EQ:system}
    \leftrem\left(
    \Big(\sum_{\ell=0}^\rho c_\ell\sigma_x^\ell(u)\shift_x^\ell\Big)
    \odot M_{ik}, \shift_y-1\right) = 0
    \quad\text{for all}\ i = 1,\dots,m\ \text{and}\ k = 1,\dots, d_i,
  \end{equation}
  because then Theorem~\ref{THM:criterion} asserts that
  $\sum_{\ell=0}^\rho c_\ell\sigma_x^\ell(u)\shift_x^\ell$ gives a
  desired telescoper for $r$. Now we consider the linear system over
  $\set K$ (rather than $\set K[x]$) obtained by vanishing
  coefficients of like powers of $x$ and $y$ in \eqref{EQ:system}. In
  other words, we view the coefficients of the $c_\ell$ with respect
  to $x$, not the $c_\ell$ themselves, as unknowns. This then gives us
  $(\tau-\deg_x(u)+1)(\rho+1)$ unknowns in total.  On the other hand,
  we derive from Lemma~\ref{LEM:systemdeg} that each equation in
  \eqref{EQ:system} has total degree in $x,y$ at most $\tau -\deg_x(u)
  + \alpha_{ik}$ and degree in $y$ at most $\beta_{ik}$.  It follows
  that the induced linear system contains at most
  \[
  (\tau-\deg_x(u)+1)\rho_0+\sum_{i=1}^m\sum_{k=1}^{d_i}
  \mu_i(\alpha_{ik}-\frac12\beta_{ik})
  (\beta_{ik}+1)
  \]
  equations over $\set K$. Since $\rho \geq \rho_0$, one concludes
  from \eqref{EQ:degreebound} that the linear system over $\set K$
  resulting from \eqref{EQ:system} have more unknowns than equations,
  assuring such a nontrivial solution.
\end{proof}
We note that the left scalar remainders of the $\tilde M_{ik}$ by
$\shift_y-1$ can be employed to further refine the bounds given by
\eqref{EQ:upperbound} and \eqref{EQ:degreebound}.
\begin{remark}\label{REM:octorderdegree}
  Under the assumptions of the above lemma, in the context of
  \citep[\S 4]{ChKa2012a}, all $\tilde M_{ik}$ are actually in $\set
  K[x,\emph\shift_{\lambda_i,\mu_i}]$, yielding
  $\alpha_{ik}=\max\{-1,\deg_x(\tilde M_{ik})\}$ and $\beta_{ik}=0$.
  Then $\rho_0=\sum_{i=1}^m\sum_{k=1}^{d_i}\mu_i$ by
  \eqref{EQ:upperbound}, and \eqref{EQ:degreebound} becomes
  \[
  \tau  > \deg_x(u)-1
  +\frac{\sum_{i=1}^m\sum_{k=1}^{d_i}\mu_i\alpha_{ik}}{\rho+1-\rho_0},
  \]
  which coincides with the order-degree curve given in
  \citep[Theorem~10]{ChKa2012a} (after correcting the typos in the
  formula of the lower bound for $d$ there).
\end{remark}

\subsection{Cost analysis of algorithm}
Recall that the {\em auto-dispersion set} of a polynomial $g\in\set
K[x,y]$ with respect to $y$ consists of all integers $\ell$ such that
$\deg_y(\gcd(g,\sigma_y^\ell(g)))>0$.
\begin{lemma}\label{LEM:ggszred}
  Let $a_0,p_0\in \set K[x,y]$ be two coprime polynomials with
  $p_0\neq 0$ and $\deg_y(a_0/p_0) = d_y$. Then the GGSZ reduction
  computes $h,r\in \set K(x,y)$ with $h$ in a compact form such that
  \eqref{EQ:nonintlinear} holds, using
  $\bigO(\deg_x(p_0)d_y^4+\deg_x(p_0)^2d_y^3+\deg_x(a_0)\deg_x(p_0)d_y^2+\deg_x(a_0)^2d_y)$
  arithmetic operations in~$\set K$ with classical arithmetic and
  $\softO(\deg_x(p_0)d_y^3+\deg_x(a_0)d_y)$ with fast arithmetic, plus
  the cost of computing the auto-dispersion set of $p_0$ with respect
  to $y$.
\end{lemma}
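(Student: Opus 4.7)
The plan is to decompose the GGSZ reduction into four computational phases and bound each using the degree bounds of Fact~\ref{FAC:degreesize} and the univariate arithmetic costs of Fact~\ref{FAC:unipolycost}, with bivariate operations reduced to univariate ones by evaluation-interpolation in the $x$-variable.

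\emph{Phase 1 (Shift-homogeneous decomposition of $p_0$).} With the auto-dispersion set supplied as input, compute the shift-homogeneous decomposition of $p_0$ via iterated bivariate GCDs between $p_0$ and its shifts $\sigma_y^\ell(p_0)$ for $\ell$ in that set, interleaved with square-free decomposition. These GCDs are on polynomials of $x$-degree at most $\deg_x(p_0)$ and $y$-degree at most $d_y$; evaluating at $\bigO(\deg_x(p_0) d_y)$ generic points in $x$ reduces to univariate GCDs of degree $\bigO(d_y)$, and Fact~\ref{FAC:unipolycost}(i) yields the $\deg_x(p_0)^2 d_y^3$ contribution classically and an $\softO$-saving of one factor of $d_y$ with fast arithmetic. \emph{Phase 2 (Partial fraction decomposition).} Decompose $a_0/p_0$ according to this shift-homogeneous decomposition. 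Fact~\ref{FAC:degreesize}(ii) bounds the numerators by $x$-degree $\bigO(\deg_x(p_0) d_y + \deg_x(a_0))$ and $y$-degree $\bigO(d_y)$; applying Fact~\ref{FAC:unipolycost}(iii) under evaluation-interpolation accounts for the $\deg_x(a_0)^2 d_y$ and $\deg_x(a_0)\deg_x(p_0) d_y^2$ terms together with a share of $\deg_x(p_0) d_y^4$.

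\emph{Phase 3 (Per-class telescoping).} For each shift-equivalence class $\{\sigma_y^{\nu_{ij}}(g_i)\}_j$ and each exponent $k \leq e_{ij}$ in the shift-homogeneous decomposition, rewrite the partial fraction contribution $\sum_j a_{ijk}/\sigma_y^{\nu_{ij}}(g_i)^k$ in the form $(\shift_y - 1)(h_{ik}) + c_{ik}/g_i^k$, where $h_{ik} = \lquo(M_{ik},\shift_y - 1)(1/g_i^k)$ is kept in compact form. By Remark~\ref{REM:formulas}, the operator $M_{ik}$ has $\shift_y$-order at most $\max_j \nu_{ij} = \bigO(d_y)$ with coefficients of the same bivariate size as the $a_{ijk}$, so its construction only requires a bounded number of polynomial additions at shifted arguments, which fit within the remaining budget and contribute the last piece of the $\deg_x(p_0) d_y^4$ term. \emph{Phase 4 (Assembly).} Collect $r = \sum_{i,k} c_{ik}/g_i^k$ and $h = \sum_{i,k} h_{ik}$ in compact form. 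The fast-arithmetic bounds follow by substituting the $\softO(d_x)$ estimates from Fact~\ref{FAC:unipolycost} for each univariate subroutine, which collapses to $\softO(\deg_x(p_0) d_y^3 + \deg_x(a_0) d_y)$.

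The main obstacle is the bookkeeping: one must verify that the total number of shift classes and the accumulated multiplicities $\sum_{i,j} e_{ij}$ and offsets $\nu_{ij}$ are controlled by $d_y$, so that all summations and loops are absorbed into the stated bound. A secondary subtlety is that the compact representation of $h$ must not be expanded, since the expanded form can be substantially larger than the cost bound permits; the algorithm produces $h$ as a sum of left quotients applied to base fractions $1/g_i^k$, and by Remark~\ref{REM:formulas} the left quotients themselves are read off cheaply from the coefficients of $M_{ik}$ without ever materializing $h$ as a single rational function.
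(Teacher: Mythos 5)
Your proposal follows essentially the same route as the paper: the paper's proof simply cites \citep[Theorems~10 and~13]{GGSZ2003} to split the cost into the shiftless decomposition of $p_0$ and the subsequent partial fraction decomposition of $a_0/p_0$ (the quotient/remainder assembly being dominated by these), then bounds each via evaluation--interpolation, which is exactly your Phases 1--4 made explicit. Your attribution of individual terms to phases differs slightly from the paper's (the paper charges $\deg_x(p_0)d_y^4+\deg_x(p_0)^2d_y$ to the shiftless decomposition and $\deg_x(p_0)^2d_y^3$ to the partial fractions, roughly the reverse of your split), but every phase stays within the stated total, so the argument is sound.
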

\begin{proof}
  By \citep[Theorem~13]{GGSZ2003}, the cost of the GGSZ reduction is
  dominated by computing a shiftless decomposition of $p_0$ and the
  subsequent partial fraction decomposition of $a_0/p_0$.  By
  \citep[Theorem~10]{GGSZ2003} and making use of the
  evaluation-interpolation technique, one obtains that the former
  operation can be accomplished using
  $\bigO(\deg_x(p_0)d_y^4+\deg_x(p_0)^2d_y)$ arithmetic operations in
  $\set K$ with classical arithmetic and $\softO(\deg_x(p_0)d_y^3)$
  with fast arithmetic, plus the cost of computing the auto-dispersion
  set of $p_0$ with respect to $y$.  While the latter operation takes
  $\bigO(\deg_x(p_0)^2d_y^3+\deg_x(a_0)\deg_x(p_0)d_y^2+\deg_x(a_0)^2d_y)$
  with classical arithmetic and
  $\softO(\deg_x(p_0)d_y^2+\deg_x(a_0)d_y)$ with fast
  arithmetic. Combining these two costs concludes the lemma.
\end{proof}

Now we are ready to study the cost of the algorithm {\bf RationalCT},
in which we shall assume that the four enhancements discussed in
Section~\ref{SUBSEC:modifications} have been taken into account.
\begin{theorem}\label{THM:oct}
  Let $f\in \set K(x,y)$ be a rational function with $\deg_x(f) = d_x$
  and $\deg_y(f) = d_y$. Assume that $f$ has a telescoper and let
  $\rho$ be the actual order of its minimal telescopers. Further
  assume \eqref{EQ:ratdecomp} and \eqref{EQ:rform} hold, and define
  $\rho_0$ by \eqref{EQ:upperbound}.  Then the algorithm
  {\bf RationalCT} finds a minimal telescoper for $f$ and a
  certificate in a compact form using
  $\bigO(d_xd_y^4+\rho d_x^2d_y^3+\rho\rho_0^3 d_x^2d_y^2)$ arithmetic
  operations in~$\set K$ with classical arithmetic and
  $\softO(d_xd_y^3+\rho d_xd_y^2+ \rho^2\rho_0^{\omega-1} d_xd_y)$
  with fast arithmetic, plus the cost of computing auto-dispersion
  sets and finding rational roots.
\end{theorem}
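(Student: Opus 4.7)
The plan is to bound the arithmetic cost of each of the four main phases of \textbf{RationalCT}—the RILD-based partial fraction decomposition in step~1, the GGSZ reduction in step~2, the iterative construction of the left scalar remainders in step~4.1, and the repeated linear system solves in step~4.2—using the intermediate-size estimates of Lemma~\ref{LEM:intermediatedeg} and Lemma~\ref{LEM:systemdeg} together with the complexity facts of Section~\ref{SEC:complexity}. Step~3 is free. The four modifications of Section~\ref{SUBSEC:modifications} are assumed throughout, so no full factorization is needed and all operators acted on in step~4 have polynomial coefficients.

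For step~1, computing the coarsest shiftless decomposition (modification~(i)) and then the integer-linear refinement reduces to GCD computations and univariate shift-equivalence tests on polynomials of degree $\bigO(d_xd_y)$; the subsequent partial fraction decomposition, via Fact~\ref{FAC:degreesize}(ii) and Fact~\ref{FAC:unipolycost}(iii), produces $\tilde a_0$ and the $\tilde M_{ik}$ of the sizes asserted in Lemma~\ref{LEM:intermediatedeg}, and its cost is dominated by the other phases once auto-dispersion and rational-root finding are split off (these are the explicitly excluded side costs in the statement). For step~2 I would invoke Lemma~\ref{LEM:ggszred} on $\tilde a_0/p_0$ (modification~(ii)), substituting $\deg_x(p_0)=\bigO(d_x)$, $\deg_x(\tilde a_0)=\bigO(d_xd_y)$, and $\deg_y=\bigO(d_y)$ from Lemma~\ref{LEM:intermediatedeg}; this produces precisely the $d_xd_y^4+d_x^2d_y^3$ classical contribution and the $\softO(d_xd_y^3)$ fast contribution of the bound.

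For step~4.1, modification~(iv) lets me precompute once each $N_{ik}=\leftrem(\tilde M_{ik},\shift_y-1)$, whose cost is absorbed in the others. At iteration $\ell$, computing $\bar R=\leftrem(\shift_x^\ell\odot N_{ik},\shift_y-1)$ reduces by the explicit formula in Remark~\ref{REM:formulas} to applying $\sigma_x^\ell$ to $\bigO(\mu_i)$ coefficients, each of bidegree $\bigO(d_xd_y)\times\bigO(\deg_z p_i)$ by Lemma~\ref{LEM:systemdeg}. Summing over the $(i,k)$ and using $\sum_i\mu_id_i\deg_z(p_i)=\bigO(d_y)$, one iteration of step~4.1 costs $\bigO(d_x^2d_y^3)$ classical and $\softO(d_xd_y^2)$ fast; multiplying by the $\rho+1$ outer iterations yields the $\rho d_x^2d_y^3$ and $\rho d_xd_y^2$ terms.

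The main obstacle, and the source of the $\rho_0$-dependent terms, is step~4.2. At iteration $\ell$ the condition $R_{ik}=0$ becomes a homogeneous linear system over $\set K[x]$ in the unknowns $c_0,\dots,c_\ell$ whose coefficient matrix, by Lemma~\ref{LEM:systemdeg}, has entries of degree $\bigO(d_xd_y)$ in $x$ and at most $\bigO(\rho_0)$ rows—one per monomial $\lmshift[i]^r$ with $0\leq r<\mu_i$ appearing across the $R_{ik}$, whose total count is bounded by $\rho_0$ as in the definition \eqref{EQ:upperbound}. Since Lemma~\ref{LEM:octorderdegree} guarantees a telescoper of order $\rho\leq\rho_0$, we have $\ell+1\in\bigO(\rho_0)$ at every iteration, so Fact~\ref{FAC:systemcost} applies with $m=\bigO(\rho_0)$, $n=\bigO(\ell)$ and coefficient degree $\bigO(d_xd_y)$, giving $\bigO(\rho_0^3 d_x^2d_y^2)$ classical and $\softO(\rho_0^{\omega-1}\ell\, d_xd_y)$ fast per solve; summing over $\ell=0,\dots,\rho$ produces exactly the $\rho\rho_0^3 d_x^2d_y^2$ classical and $\rho^2\rho_0^{\omega-1}d_xd_y$ fast contributions. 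The delicate point will be justifying that the row count is truly $\bigO(\rho_0)$ after collecting like powers of each $\lmshift[i]$, so that the hypothesis $n\in\bigO(m)$ in Fact~\ref{FAC:systemcost} is met uniformly over all iterations.
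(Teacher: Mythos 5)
Your proposal follows essentially the same route as the paper's proof: the same phase-by-phase accounting using Lemmas~\ref{LEM:intermediatedeg}, \ref{LEM:systemdeg} and \ref{LEM:ggszred} for steps~1--2 and~4.1 (with the modifications of Section~\ref{SUBSEC:modifications} assumed), and Fact~\ref{FAC:systemcost} applied to the at most $\rho_0\times(\ell+1)$ coefficient matrix with entries of $x$-degree $\bigO(d_xd_y)$ for step~4.2, summed over the $\rho$ iterations. The point you flag as delicate---that the row count is bounded by $\rho_0$ after collecting like powers of $y$ and $\emph{\shift}_{\lambda_i,\mu_i}$---is exactly how the paper argues it, via Lemma~\ref{LEM:systemdeg} together with the definition \eqref{EQ:upperbound}.
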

\begin{proof}
  Based on the modification (i) in Section~\ref{SUBSEC:modifications}, in
  step~1, we incorporate the coarsest shiftless decomposition into the
  integer-linear decomposition to obtain the refined one of the
  denominator of $f$, which, by \citep[Theorem~3.5]{GHLZ2019} and
  \citep[Theorem~10]{GGSZ2003}, takes $\bigO(d_x^2d_y+d_xd_y^3+d_y^4)$
  arithmetic operations with classical arithmetic and
  $\softO(d_xd_y^2+d_y^3)$ with fast arithmetic, plus the cost of
  finding rational roots.  Therefore, the corresponding RILD-based
  integer-linear decomposition of $f$ can be obtained using
  $\bigO(d_x^2d_y^3+d_y^4)$ with classical arithmetic and
  $\softO(d_xd_y^2+d_y^3)$ with fast arithmetic in total. Regardless
  of the cost of computing auto-dispersion sets, one concludes from
  the modification (ii) and Lemmas~\ref{LEM:intermediatedeg},
  \ref{LEM:ggszred} that step~2 takes $\bigO(d_xd_y^4+d_x^2d_y^3)$
  with classical arithmetic and $\softO(d_xd_y^3)$ with fast
  arithmetic. By assumption, $r =0$ in \eqref{EQ:nonintlinear} and
  thus the algorithm continues after step~3.
  
  Based on modifications (iii)-(iv), we proceed to find the common
  denominator $u\in\set K[x]$ of the operators $M_{ik}$, reformulate
  each of them as $M_{ik} = \frac1{u} \tilde M_{ik}$ for $\tilde
  M_{ik}\in\set K[x,y,\lmshift[i]]$, and compute the left scalar
  remainders $N_{ik}$ of the $\tilde M_{ik}$ by $\shift_y-1$.  By
  Lemma~\ref{LEM:intermediatedeg}, $\deg_x(\tilde
  M_{ik})\in\bigO(d_xd_y)$ and $\deg_y(\tilde
  M_{ik})\in\bigO(\deg_z(p_i))$. It then follows from
  \eqref{EQ:leftremformula} that computing all the $N_{ik}$ in total
  requires $\bigO(d_x^2d_y^3)$ with classical arithmetic and
  $\softO(d_xd_y^2)$ with fast arithmetic.  Since $\deg_x(N_{ik}) \leq
  \deg_x(\tilde M_{ik})$ and $\deg_y(N_{ik})\leq \deg_y(\tilde
  M_{ik})$, for each iteration of the outer loop of step~4, the same
  cost applies to step~4.1 with $M_{ik}$ replaced by $N_{ik}$ as
  discussed in modifications (iii)-(iv).
	
  Since $\rho$ is the actual order of minimal telescopers for $f$, the
  outer loop of step~4 runs exactly $\rho$ iterations.  Thus the total
  cost of step~4.1 in the whole loop is $\bigO(\rho d_x^2d_y^3)$ with
  classical arithmetic and $\softO(\rho d_xd_y^2)$ with fast
  arithmetic.  For the $\ell$-th iteration with $0\leq \ell \leq
  \rho$, Lemmas~\ref{LEM:intermediatedeg} and \ref{LEM:systemdeg}
  assert that the coefficient matrix over $\set K[x]$ attached to the
  linear system obtained in step~4.2 has at most $\rho_0$ rows and
  $\ell+1$ columns, and each of its nonzero entries has degree in $x$
  in $\bigO(d_xd_y)$.  Thus Fact~\ref{FAC:systemcost} implies that
  finding a solution needs $\bigO(\rho_0^3d_x^2d_y^2)$ with classical
  arithmetic and $\softO(\ell\rho_0^{\omega-1} d_xd_y)$ with fast
  arithmetic.  This yields the total cost of
  $\bigO(\rho\rho_0^3d_x^2d_y^2)$ with classical arithmetic and
  $\softO(\rho^2 \rho_0^{\omega-1}d_xd_y)$ with fast arithmetic for
  solving linear systems in step~4.2 in the whole loop, as there are
  $\rho$ iterations.
	
  When a nontrivial solution of the linear system in step~4.2 is found, it virtually takes no
  arithmetic operations for returning the certificate in such a
  compact representation. By the modification (iii), we eventually
  construct a minimal telescoper of the form $L = \sum_{j=0}^\ell
  c_j\sigma_x^j(u)\shift_x^j$.  Computing the $\sigma_x^j(u)$ in the
  telescoper $L$ requires $\bigO(\rho d_x^2d_y^2)$ with classical
  arithmetic and $\softO(\rho d_xd_y)$ with fast arithmetic. In
  addition, by Lemma~\ref{LEM:octorderdegree},
  $\deg_x(c_j)\in\bigO(\rho_0d_xd_y)$.  Therefore, expanding the
  telescoper $L$ takes $\bigO(\rho_0d_x^2d_y^2)$ with classical
  arithmetic and $\softO(\rho_0d_xd_y)$ with fast arithmetic. The
  announced cost follows.
\end{proof}

\begin{corollary}\label{COR:oct}
  With the assumptions of Theorem~\ref{THM:oct}, further let
  $\mu=\max\{\mu_1,\dots,\mu_m\}$. Then $\rho_0\in\bigO(\mu d_y)$, and
  the algorithm {\bf RationalCT} takes $\bigO(\mu^4d_x^2d_y^6)$
  arithmetic operations in~$\set K$ with classical arithmetic and
  $\softO(\mu^{\omega+1} d_xd_y^{\omega+2})$ with fast arithmetic,
  plus the cost of computing auto-dispersion sets and finding rational
  roots.
\end{corollary}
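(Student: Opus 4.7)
The plan is threefold: first bound $\rho_0$ by $\mu d_y$, then show the actual minimal order $\rho$ is itself at most $\rho_0$ via Lemma~\ref{LEM:octorderdegree}, and finally plug both bounds into the cost estimates of Theorem~\ref{THM:oct}.

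To bound $\rho_0$, I would start from the definition $\rho_0 = \sum_{i=1}^m\sum_{k=1}^{d_i} \mu_i(\beta_{ik}+1)$ with $\beta_{ik}=\max\{-1,\deg_y(\tilde M_{ik})\}$. Since $\tilde M_{ik}=u\,M_{ik}$ with $u\in\set K[x]$ and $\deg_y(M_{ik})<\deg_z(p_i)$, one has $\beta_{ik}+1\leq \deg_z(p_i)$, and therefore $\rho_0\leq\sum_i d_i\mu_i\deg_z(p_i)$. The integer-linear part $\prod_{i,j}p_i(\lambda_i x+\mu_i y+\nu_{ij})^{e_{ij}}$ of the denominator of $f$ contributes a $y$-degree equal to $\sum_{i,j}e_{ij}\mu_i\deg_z(p_i)$, which is at most $d_y$. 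Combining this with $d_i=\max_j e_{ij}\leq \sum_j e_{ij}$ and $\mu_i\leq\mu$ yields $\rho_0\leq \mu\sum_{i,j}e_{ij}\deg_z(p_i)\leq \mu\sum_{i,j}e_{ij}\mu_i\deg_z(p_i)\leq \mu d_y$.

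For $\rho$, I would invoke Lemma~\ref{LEM:octorderdegree} with $\rho=\rho_0$: the hypothesis $\rho\geq \rho_0$ holds and $\rho+1-\rho_0=1>0$ guarantees a finite admissible $\tau$, so a telescoper of order at most $\rho_0$ exists. Hence the actual minimal order satisfies $\rho\leq\rho_0\in\bigO(\mu d_y)$.

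Finally I would substitute $\rho,\rho_0\in\bigO(\mu d_y)$ into the bounds from Theorem~\ref{THM:oct}. The classical-arithmetic estimate $\bigO(d_xd_y^4+\rho d_x^2d_y^3+\rho\rho_0^3 d_x^2d_y^2)$ becomes $\bigO(d_xd_y^4+\mu d_x^2d_y^4+\mu^4d_x^2d_y^6)$, and since the last term absorbs the previous two we obtain $\bigO(\mu^4 d_x^2 d_y^6)$. Similarly, $\softO(d_xd_y^3+\rho d_xd_y^2+\rho^2\rho_0^{\omega-1}d_xd_y)$ becomes $\softO(d_xd_y^3+\mu d_xd_y^3+\mu^{\omega+1}d_xd_y^{\omega+2})$; since $\omega>2$ the last term dominates, giving $\softO(\mu^{\omega+1}d_xd_y^{\omega+2})$. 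The additional costs of computing auto-dispersion sets and finding rational roots carry over unchanged from Theorem~\ref{THM:oct}. The only genuinely delicate step is the chain of inequalities producing $\rho_0\leq\mu d_y$; everything else is routine algebraic bookkeeping.
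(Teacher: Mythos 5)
Your proof follows the paper's own argument exactly: bound $\rho_0$ from \eqref{EQ:upperbound} using $\beta_{ik}+1\le\deg_z(p_i)$, deduce $\rho\le\rho_0$ by applying Lemma~\ref{LEM:octorderdegree} with $\rho=\rho_0$, and substitute both bounds into Theorem~\ref{THM:oct}. One small slip: the $y$-degree of $p_i(\lambda_i x+\mu_i y+\nu_{ij})^{e_{ij}}$ is $e_{ij}\deg_z(p_i)$, not $e_{ij}\mu_i\deg_z(p_i)$, so the correct comparison with the denominator degree is $\sum_{i,j}e_{ij}\deg_z(p_i)\le d_y$; your argument still closes because your very first inequality $\rho_0\le\mu\sum_{i,j}e_{ij}\deg_z(p_i)$ already yields $\rho_0\le\mu d_y$ directly, making the spurious intermediate step with the extra factor $\mu_i$ unnecessary.
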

\begin{proof}
  By assumption, with $u\in\set K[x]$ denoting the common denominator
  of the $M_{ik}$ in \eqref{EQ:ratdecomp}, each operator $M_{ik}$ has
  the form $M_{ik} = \frac1{u}\tilde M_{ik}$ for $\tilde M_{ik}\in\set
  K[x,y,\lmshift[i]]$ with $\deg_y(\tilde M_{ik}) < \deg_z(p_i)$. It
  follows from \eqref{EQ:upperbound} that $\rho_0\in\bigO(\mu d_y)$.
  Since $\rho$ is the actual order of minimal telescopers for $f$, we
  conclude from Lemma~\ref{LEM:octorderdegree} that $\rho\leq
  \rho_0$. The announced cost is then evident by
  Theorem~\ref{THM:oct}.
\end{proof}
\begin{remark}\label{REM:upperboundcost}
  Under the assumptions of the above corollary, according to
  Lemma~\ref{LEM:octorderdegree}, there exists a minimal telescoper
  for $f$ of total size in $\bigO(\mu^2d_xd_y^3)$.
\end{remark}
\begin{remark}\label{REM:intcost}
  In the case of $\set K = \set Q$, by incorporating the cost of
  computing the auto-dispersion set of an integer polynomial
  (cf.\ \citep[Theorem~14]{GGSZ2003}) and the cost of finding rational
  roots of an integer polynomial
  (cf.\ \citep[Theorem~15.21]{vzGe2013}), one sees from the above
  corollary that the algorithm {\bf RationalCT} has the total running
  time bounded by $(\mu+d_x+d_y+\log||f||_\infty)^{\bigO(1)}$ word
  operations, where the max-norm $||f||_\infty$ of $f \in \set Q(x,y)$
  is defined as the maximal absolute value of the integer coefficients
  appearing in the numerator and denominator of $f$ with respect to
  $x,y$. See \citep{Gerh2004,vzGe2013} for more information on word
  operations.
\end{remark}

\section{Arithmetic cost for the reduction-based approach}\label{SEC:rct}
In this section, we review the reduction-based creative telescoping
algorithm developed in \citep{CHKL2015} in the context of bivariate
rational functions and further analyze its cost in this setting. As
indicated by the name of the algorithm, a reduction method plays a
fundamental role. The original reduction method employed by
\citep{CHKL2015} in the rational case was developed by
\cite{Abra1975}. In order to highlight more significant discrepancies
between this creative telescoping algorithm and the one developed in
Section~\ref{SEC:oct}, we instead use the GGSZ reduction recalled in
Section~\ref{SUBSEC:ggszred} to carry out all the reduction steps in
the algorithm.

Before discussing the concrete algorithm, let us recall some notions.
As a generalization of auto-dispersion sets, the {\em dispersion set}
of a polynomial $f\in\set K[x,y]$ with respect to another polynomial
$g\in\set K[x,y]$ is defined to be the integer set
\[
\disset_y(f,g)=\{\ell\in\set Z\mid \deg_y(\gcd(f,\sigma_y^\ell(g)))>0\}.
\]
Such a dispersion set can be achieved by the algorithm of \cite{MaWr1994}
or by the procedure {\bf pDispersionSet} from \cite[\S 6]{GGSZ2003} in
the particular case where $\set K = \set Q$.

A polynomial in $\set K[x,y]$ is called {\em primitive} with respect
to $y$ (or $y$-primitive for short) if the greatest common divisor
over $\set K[x]$ of all its coefficients with respect to $y$ is equal
to one. A rational function in $\set K(x,y)$ is called {\em proper}
with respect to $y$ (or $y$-proper for short) if the degree of its
numerator with respect to $y$ is less than that of its denominator.
For a rational function $f\in\set K(x,y)$, another rational function
$r\in \set K(x,y)$ is called a {\em shift-remainder} with respect to
$y$ (or {\em $\sigma_y$-remainder} for short) of $f$ if $f-r$ is
$\sigma_y$-summable and $r$ is $y$-proper with denominator being
$\sigma_y$-free. For brevity, we just say that $r$ is a
$\sigma_y$-remainder if $f$ is clear from the context. Clearly, any
integer shift of a $\sigma_y$-remainder with respect to $x$ is again a
$\sigma_y$-remainder. By \eqref{EQ:ggszred}, we see that the GGSZ
reduction reduces a rational function to a $\sigma_y$-remainder modulo
$\sigma_y$-summable rational functions.

A rational function in $\set K(x,y)$ usually has more than one
$\sigma_y$-remainder and any two of them differ by a
$\sigma_y$-summable rational function. The following proposition
implies that zero is the only $\sigma_y$-remainder in the case of a
$\sigma_y$-summable rational function.
\begin{proposition}[{\citealt[Proposition~7]{Abra1975}}]
  \label{PROP:remainder}
  A rational function in $\set K(x,y)$ is $\sigma_y$-summable if and
  only if any of its $\sigma_y$-remainders is zero.
\end{proposition}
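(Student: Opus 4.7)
The backward implication is immediate: if some $\sigma_y$-remainder $r$ of $f$ equals zero, then $f = f - r$ is $\sigma_y$-summable by the very definition of a $\sigma_y$-remainder.

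For the forward direction, suppose $f$ is $\sigma_y$-summable and let $r$ be any $\sigma_y$-remainder of $f$. Since both $f$ and $f - r$ are $\sigma_y$-summable, so is $r$. The proposition therefore reduces to the key claim that any $y$-proper rational function $r \in \set K(x,y)$ that is $\sigma_y$-summable and whose denominator is $\sigma_y$-free must equal zero. To establish this claim, write $r = \sigma_y(g) - g$ for some $g \in \set K(x,y)$, and decompose $g = p + s$ over $\set K(x)[y]$ into a polynomial part $p \in \set K(x)[y]$ and a $y$-proper remainder $s$. Then $r = (\sigma_y(p)-p) + (\sigma_y(s)-s)$ splits into a polynomial-in-$y$ contribution and a $y$-proper contribution, and since $r$ itself is $y$-proper we must have $\sigma_y(p)=p$, forcing $p \in \set K(x)$. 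Replacing $g$ by $s$, we may assume without loss of generality that $g$ is $y$-proper.

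Now suppose for contradiction that $g \neq 0$. Write $g = u/v$ with $\gcd(u,v)=1$ and $v$ monic in $y$, and apply the shift-homogeneous decomposition to $v$. Fix a $\sigma_y$-equivalence class of irreducible factors of $v$, represented by $\sigma_y^{\nu_{i1}}(v_i),\dots,\sigma_y^{\nu_{in_i}}(v_i)$ with $\nu_{i1}<\cdots<\nu_{in_i}$, and consider the associated partial-fraction terms $a_{ijk}/\sigma_y^{\nu_{ij}}(v_i)^k$ of $g$. Applying $\sigma_y$ shifts each such denominator up by one; hence in $r = \sigma_y(g)-g$, the partial-fraction contribution with denominator $\sigma_y^{\nu_{in_i}+1}(v_i)^{e_{in_i}}$ can only come from $\sigma_y(g)$, while the contribution with denominator $\sigma_y^{\nu_{i1}}(v_i)^{e_{i1}}$ can only come from $-g$. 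Their numerators $\sigma_y(a_{i n_i e_{in_i}})$ and $-a_{i 1 e_{i1}}$ are nonzero by the definition of the partial-fraction decomposition, so the denominator of $r$ contains two distinct $\sigma_y$-equivalent irreducible factors. This contradicts the $\sigma_y$-freeness of the denominator of $r$, forcing $g = 0$ and hence $r = 0$.

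The main obstacle is the last step: one must argue that the extremal contributions at shifts $\nu_{in_i}+1$ and $\nu_{i1}$ genuinely survive as irreducible factors of the denominator of $r$, with no chance of cancelling against other partial-fraction terms. This relies on the uniqueness of the partial-fraction decomposition relative to the shift-homogeneous factorization of $v$, together with the fact that the extremal shifts lie strictly outside the range $\{\nu_{i1},\dots,\nu_{in_i}\}$ of exponents appearing in $g$. The reduction steps (easy direction and splitting off the polynomial part) are routine once this core principle is in place.
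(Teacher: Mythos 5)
Your proof is correct. The paper itself gives no proof of this proposition---it is imported verbatim as Proposition~7 of Abramov (1975)---but your argument (reduce to showing that a $\sigma_y$-summable, $y$-proper rational function with $\sigma_y$-free denominator must vanish, then compare the extremal shifts $\nu_{i1}$ and $\nu_{in_i}+1$ in the partial fraction decompositions of $g$ and $\sigma_y(g)$) is precisely the classical argument behind Abramov's result, and the potential cancellation issue you flag at the end is indeed settled by the uniqueness of the partial fraction decomposition together with the fact that distinct shifts of an irreducible polynomial of positive degree in $y$ are distinct irreducibles in characteristic zero.
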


We summarize below the main idea of the reduction-based algorithm
in~\citep{CHKL2015}.

Let $f$ be a rational function in $\set K(x,y)$.  Applying the GGSZ
reduction to $f$ yields \eqref{EQ:ggszred}.  If the denominator of $r$
in \eqref{EQ:ggszred} is not integer-linear, then by
\cite[Theorem~1]{AbLe2002}, $f$ does not have any telescoper.
Otherwise, the existence of telescopers for $f$ is guaranteed.

Assume now that we aim to find a telescoper for $f$ of order no more
than $\rho\in\set N$. In this respect, we make an ansatz
\[
L= c_\rho\shift_x^\rho + \dots + c_1\shift_x + c_0
\quad\text{with $c_0,\dots,c_\rho\in \set K[x]$ to be determined}.
\]
For $\ell=0,\dots,\rho$, compute a rational function $h_\ell\in\set
K(x,y)$ and a $\sigma_y$-remainder $r_\ell$ such that
\begin{equation}\label{EQ:lthggszred}
  \sigma_x^\ell(f) = (\shift_y-1)(h_\ell) +r_\ell
  \quad\text{and}\quad
  \text{$\sum_{i=0}^\ell c_i r_i$ is a $\sigma_y$-remainder}.
\end{equation}
A direct calculation then shows that
\[
L(f)=(\shift_y-1)\left(\sum_{\ell=0}^\rho c_\ell h_\ell\right)
+\sum_{\ell=0}^\rho c_\ell r_\ell.
\]
Therefore, $\sum_{\ell=0}^\rho c_\ell r_\ell$ is a
$\sigma_y$-remainder of $L(f)$. By Proposition~\ref{PROP:remainder},
$L$ is a telescoper for $f$ if and only if $\sum_{\ell=0}^\rho c_\ell
r_\ell=0$.  This reduces the problem of finding telescopers to the
simple task of solving a linear system over $\set K[x]$. In other
words, we obtain a linear homogeneous system in unknowns
$c_0,\dots,c_\rho$ by equating $\sum_{\ell=0}^\rho c_\ell r_\ell$ to
zero, whose any nontrivial solution over $\set K[x]$ gives rise to a
desired telescoper for $f$. Failing to find such a solution implies
that no required telescopers exist.

Again, for computing a minimal telescoper for $f$, the reduction-based
algorithm applies the above process incrementally with
$\rho=0,1,\dots$, with the termination assured by the existence of
telescopers.

The proof of \citep[Theorem~5.6]{CHKL2015} contains an algorithm for
computing such a $\sigma_y$-remainder $r_\ell$ that satisfies
\eqref{EQ:lthggszred}. The key tool is the so-called shift-coprime
decompositions of $\sigma_y$-free polynomials. Let $b,b_0\in\set
K[x,y]$ be two nonzero $\sigma_y$-free polynomials.  The {\em
  $\sigma_y$-coprime decomposition} of $b$ with respect to $b_0$ is
defined as
\begin{equation}\label{EQ:shiftcoprime}
  b = p_0\sigma_y^{\ell_1}(p_1)\cdots\sigma_y^{\ell_m}(p_m),
\end{equation}
where $p_0\in\set K[x,y]$ with $\deg_y(\gcd(b_0,\sigma_y^i(p_0))) = 0$
for any nonzero integer $i$, $p_1,\dots,p_m\in \set K[x,y]$ are monic
and $y$-primitive factors of $b_0$ of positive degrees in $y$, and
$\ell_1,\dots,\ell_m$ are distinct nonzero integers.  Note that the
factors
$p_0,\sigma_y^{\ell_1}(p_1),\dots,\sigma_y^{\ell_m}(p_m),p_1,\dots,p_m$
are pairwise coprime, since $b$ and $b_0$ are both $\sigma_y$-free.
Such a decomposition \eqref{EQ:shiftcoprime} is clearly unique up to
the order of factors.  It is evident from \eqref{EQ:shiftcoprime} and
the $\sigma_y$-freeness of $b$ that
$\disset_y(b,b_0)=\{0,\ell_1,\dots,\ell_m\}$ and $p_i =
\gcd(\sigma_y^{-\ell_i}(b),b_0)$ for all $i = 1,\dots, m$.  Thus the
decomposition \eqref{EQ:shiftcoprime} can be obtained using GCD
computation, provided that the dispersion set $\disset_y(b,b_0)$ is
known.

Let $r,r_0\in\set K(x,y)$ be two nonzero $\sigma_y$-remainders of
respective denominators $b,b_0\in\set K[x,y]$. By partial fraction
decomposition, based on the $\sigma_y$-coprime decomposition
\eqref{EQ:shiftcoprime} of $b$ with respect to $b_0$, there exist
unique $f_0,f_1,\dots,f_m\in\set K(x)[y]$ with
$\deg_y(f_i)<\deg_y(p_i)$ such that
\begin{equation}\label{EQ:scdpfd}
  r = \frac{f_0}{p_0} + \frac{f_1}{\sigma_y^{\ell_1}(p_1)}
  + \dots + \frac{f_m}{\sigma_y^{\ell_m}(p_m)}.
\end{equation}
We will refer to \eqref{EQ:scdpfd} as the {\em SCD-based} partial
fraction decomposition of $r$ with respect to $r_0$.

The following result can be read from the proof of
\cite[Theorem~5.6]{CHKL2015}.
\begin{proposition}\label{PROP:adjustrem}
  Let $r,r_0\in\set K(x,y)$ be two nonzero $\sigma_y$-remainders.
  Assume that the SCD-based partial fraction decomposition of $r$ with
  respect to $r_0$ is given by \eqref{EQ:scdpfd}. Let
  \begin{equation}\label{EQ:adjustrem}
    \tilde r =  \frac{f_0}{p_0} + \frac{\sigma_y^{-{\ell_1}}(f_1)}{p_1}
    +\cdots+\frac{\sigma_y^{-{\ell_m}}(f_m)}{p_m}.
  \end{equation}
  Then $\tilde r$ is a $\sigma_y$-remainder of $r$ and $c_0r_0+c_1\tilde r$
  is a $\sigma_y$-remainder for any $c_0,c_1\in\set K[x]$.
\end{proposition}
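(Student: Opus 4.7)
The plan is to verify two separate claims: first, that $\tilde r$ is a $\sigma_y$-remainder of $r$, and second, that $c_0 r_0 + c_1 \tilde r$ is itself a $\sigma_y$-remainder. By the definition, each claim reduces to a summability check plus a verification that the candidate is $y$-proper with $\sigma_y$-free denominator; the summability part of the second claim follows automatically from that of the first, since $c_0 r_0 - 0$ and $c_1 r - c_1 \tilde r$ are both $\sigma_y$-summable once $\tilde r$ is shown to be a $\sigma_y$-remainder of $r$.

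For the summability of $r - \tilde r$, I would rewrite
\[
r - \tilde r = \sum_{i=1}^m\left(\frac{f_i}{\sigma_y^{\ell_i}(p_i)} - \frac{\sigma_y^{-\ell_i}(f_i)}{p_i}\right) = \sum_{i=1}^m (\sigma_y^{\ell_i} - 1)\!\left(\frac{\sigma_y^{-\ell_i}(f_i)}{p_i}\right),
\]
and then invoke the telescoping identity $\sigma_y^\ell - 1 = (\sigma_y - 1)\sum_{k=0}^{\ell-1}\sigma_y^k$ when $\ell > 0$ (with the analogue for $\ell < 0$) to place each term in the image of $\sigma_y - 1$. The $y$-properness of $\tilde r$ is then immediate from $\deg_y(\sigma_y^{-\ell_i}(f_i)) = \deg_y(f_i) < \deg_y(p_i)$ combined with the pairwise coprimality of the $p_i$; the $y$-properness of $c_0 r_0 + c_1 \tilde r$ is inherited from $r_0$ and $\tilde r$, and its denominator divides $\mathrm{lcm}(b_0, p_0 p_1 \cdots p_m)$.

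The main obstacle is to establish the $\sigma_y$-freeness of the two denominators $p_0 p_1 \cdots p_m$ and $\mathrm{lcm}(b_0, p_0 p_1 \cdots p_m)$. For the former, I would exploit that $\sigma_y^{\ell_i}(p_i) \mid b$ and $b$ is $\sigma_y$-free to deduce $\gcd(\sigma_y^{\ell_i}(p_i), \sigma_y^{k+\ell_j}(p_j)) \in \set K[x]$ whenever $k \neq 0$; a shift by $-\ell_i$ then gives $\gcd(p_i, \sigma_y^{k+\ell_j-\ell_i}(p_j)) \in \set K[x]$ for all $k \neq 0$, and the single remaining case $k = \ell_i - \ell_j$ (possible only when $i \neq j$) is handled by the pairwise coprimality of $\sigma_y^{\ell_i}(p_i)$ and $\sigma_y^{\ell_j}(p_j)$. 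The interactions with $p_0$ are analogous, using the coprimality of $p_0$ with each $\sigma_y^{\ell_i}(p_i)$. For the LCM, it suffices to show that any two distinct irreducible factors of $\mathrm{lcm}(b_0, p_0 p_1 \cdots p_m)$ of positive degree in $y$ are $\sigma_y$-inequivalent: if both lie in $b_0$, this is $\sigma_y$-freeness of $b_0$; if both lie in $p_0 p_1 \cdots p_m$, it is the claim just established; in the cross case, any irreducible factor $q$ of $p_0$ not dividing $b_0$ satisfying $\sigma_y^s(q) \mid b_0$ for some integer $s$ would yield $\sigma_y^s(q) \mid \gcd(b_0, \sigma_y^s(p_0))$, forcing $s \neq 0$ (else $q \mid b_0$) and hence, by the defining property of $p_0$, forcing $\sigma_y^s(q) \in \set K[x]$, a contradiction.
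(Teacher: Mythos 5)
Your argument is correct in substance, and it is worth noting that the paper itself gives no proof of this proposition --- it defers entirely to the proof of Theorem~5.6 in \citep{CHKL2015} --- so your write-up supplies details the paper only cites. All three ingredients that the claim actually requires are present: summability of $r-\tilde r$ via the identity $\sigma_y^{\ell}-1=(\sigma_y-1)\sum_{k}\sigma_y^{k}$ applied to each $\sigma_y^{-\ell_i}(f_i)/p_i$; $y$-properness, which follows since $\sigma_y$ preserves $y$-degrees and $p_0,p_1,\dots,p_m$ are pairwise coprime; and, the only genuinely delicate point, the $\sigma_y$-freeness of $p_0p_1\cdots p_m$ and of $\operatorname{lcm}(b_0,\,p_0p_1\cdots p_m)$, which you correctly derive from the $\sigma_y$-freeness of $b$ and $b_0$, the pairwise coprimality of the factors in \eqref{EQ:shiftcoprime}, and the defining property $\deg_y(\gcd(b_0,\sigma_y^s(p_0)))=0$ for $s\neq 0$. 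Your case analysis for the lcm is complete: in the cross case the factor of $p_0p_1\cdots p_m$ not dividing $b_0$ must divide $p_0$, since $p_1,\dots,p_m$ all divide $b_0$. Two small slips should be fixed, neither affecting correctness. First, in the $\sigma_y$-freeness argument for $p_0p_1\cdots p_m$, the leftover case is $k=0$, i.e.\ the shift exponent $s=\ell_j-\ell_i$ (handled, as you say, by the pairwise coprimality of $\sigma_y^{\ell_i}(p_i)$ and $\sigma_y^{\ell_j}(p_j)$); writing it as $k=\ell_i-\ell_j$ would give $s=0$, a case you do not need. Second, ``$c_0r_0-0$ is $\sigma_y$-summable'' is not what you mean: the summability half of the second claim is that $c_0r_0+c_1\tilde r$ differs by a $\sigma_y$-summable function from $c_0g_0+c_1f$, where $g_0$ and $f$ are the functions of which $r_0$ and $r$ are $\sigma_y$-remainders; this holds because $c_0,c_1\in\set K[x]$ commute with $\shift_y-1$.
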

In view of the above proposition, we call $\tilde r$ the 
{\em adjusted $\sigma_y$-remainder} of $r$ by  $r_0$.
It then follows from Proposition~\ref{PROP:adjustrem} that
\eqref{EQ:lthggszred} naturally holds by letting $r_\ell$ be the
adjusted $\sigma_y$-remainder of $\sigma_x(r_{\ell-1})$ with respect
to $\sum_{i=1}^{\ell-1}c_ir_i$.  With all these adjusted
$\sigma_y$-remainders at hand, the reduction-based algorithm works
smoothly in an iterative manner as described before.

\begin{remark}\label{REM:rct}
  As already pointed out in \citep[\S 5.2]{CHHLW2019}, it is actually
  sufficient to let each $r_\ell$ be the adjusted $\sigma_y$-remainder
  of $\sigma_x(r_{\ell-1})$ with respect to $r_0$ (rather than
  $\sum_{i=0}^{\ell-1}c_ir_i$) so as to insure the property
  \eqref{EQ:lthggszred}. This may reduce the total cost for computing
  adjusted $\sigma_y$-remainders.
\end{remark}
Let us return to the two examples from Section~\ref{SEC:oct}. We will
use the above reduction-based algorithm in order to illustrate the
difference between the two approaches.
\begin{example}\label{EX:rctsmall}
  Let $f$ be the rational function given in Example~\ref{EX:octsmall}.
  We know from Example~\ref{EX:octsmall} that $f$ has a minimal
  telescoper of order two.  With $\rho=2$, the reduction-based
  algorithm finds the additive decompositions
  \[
  \sigma_x^\ell(f) = (\emph\shift_y-1)(h_\ell) +
  \frac{a_\ell}{b_\ell}\quad\text{for}\ \ell = 0, 1, 2,
  \]
  where $h_\ell\in \set Q(x,y)$, $a_\ell\in\set Z[x,y]$, $b_\ell=
  ((-5x+2y)^2+1)((-5x+2y+1)^2+1)$ and all $a_\ell/b_\ell$, as well as
  their $\set Q[x]$-linear combinations, are $\sigma_y$-remainders.
  Note that the $h_\ell$ and $a_\ell$ are not displayed here for space
  reasons. In order to find a $\set Q[x]$-linear dependency among the
  $a_\ell/b_\ell$, we set up a linear system attached by the
  coefficient matrix
  \begin{equation*}
    \begingroup 
    \setlength\arraycolsep{2pt}
    \begin{pmatrix}
      8x^2+4x & 8x^2+20x+12 & 8x^2+36x+40\\
      -40x^3-12x^2+4 & -40x^3-100x^2-56x & -40x^3-172x^2-168x+36\\
      50x^4+5x^3+4x^2-9x+1 & 50x^4+125x^3+67x^2+6x+3
      & 50x^4+205x^3+174x^2-73x+19
    \end{pmatrix}.
    \endgroup
  \end{equation*}
  
  \noindent This linear system admits the same solutions as
  \eqref{EQ:r1system}, in other words, it leads to the same minimal
  telescoper as Example~\ref{EX:octsmall}. The corresponding
  certificate is left as an unnormalized dense sum.
\end{example}
\begin{example}\label{EX:rct}
  Consider the same rational function $f$ as Example~\ref{EX:ggszred}.
  From the same example, we see that $f$ satisfies \eqref{EQ:ggszred}
  with $h,r$ given by \eqref{EQ:red0}. Moreover, there exist
  telescopers for $f$ since the denominator of $r$ is integer-linear.
  Let $h_0=h$ and $r_0=r$. Then for $\ell = 1,\dots, 22$, the
  reduction-based algorithm iteratively finds rational functions
  $h_\ell\in\set Q(x,y)$ and adjusted $\sigma_y$-remainders $r_\ell$
  such that \eqref{EQ:lthggszred} holds. Finding a $\set Q[x]$-linear
  dependency among the $r_\ell$ yields a linear system with the
  coefficient matrix of 33 rows and 23 columns and having entries of
  degree in $x$ at most 34, which yields the same minimal telescoper
  given by \eqref{EQ:octL} as Example~\ref{EX:modified}, yet leaving
  the corresponding certificate as a large, unnormalized dense
  sum. This compares to Example~\ref{EX:modified} where the induced
  coefficient matrix has 22 rows and 23 columns with entries of degree
  in $x$ at most 2.
\end{example}

\subsection{Output size estimates}
\begin{lemma}\label{LEM:ggszreddeg}
  Let $f\in \set K(x,y)$ be a rational function with $\deg_x(f) = d_x$
  and $\deg_y(f) = d_y$.  Let $r\in \set K(x,y)$ be the
  $\sigma_y$-remainder obtained by applying the GGSZ reduction to $f$.
  Write $r = a/(ub)$, where $u \in \set K[x]$ and $a,b\in \set K[x,y]$
  with $\deg_y(a)<\deg_y(b)$, $\gcd(a,ub) = 1$ and $b$ being
  $y$-primitive and $\sigma_y$-free. Then
  \begin{align*}
    &\deg_x(u)\in \bigO(d_xd_y),\quad
    (\deg_x(b),\deg_y(b)) \in \bigO(d_x)\times \bigO(d_y)\\[1ex]
    \text{and}\quad
    &(\deg_x(a),\deg_y(a)) \in
    \bigO(d_xd_y)\times\bigO(d_y).
  \end{align*}
\end{lemma}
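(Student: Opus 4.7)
The plan is to trace the sizes of the intermediate objects appearing in the GGSZ reduction, starting from a reduced-form representation of $f$ and using Fact~\ref{FAC:degreesize} together with the structure of the shift-homogeneous decomposition.

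First I would write $f = N/D$ in lowest terms with $N,D\in\set K[x,y]$ coprime, $\deg_x(N),\deg_x(D)\leq d_x$ and $\deg_y(N),\deg_y(D)\leq d_y$. Recalling \eqref{EQ:shiftlessdecomp}, the GGSZ reduction begins with the shift-homogeneous decomposition $D = c\prod_{i=1}^m\prod_{j=1}^{n_i}\sigma_y^{\nu_{ij}}(g_i)^{e_{ij}}$, and it produces a remainder whose denominator $ub$ is $y$-primitive in $y$ (with $u\in\set K[x]$) and such that $b$ divides $\prod_{i=1}^m g_i^{E_i}$ with $E_i = \max_j e_{ij}$. Since $\sum_i E_i\deg_y(g_i) \leq \sum_{i,j}e_{ij}\deg_y(g_i) = \deg_y(D)$ and similarly in $x$, the bounds $\deg_y(b)\leq d_y$ and $\deg_x(b)\leq d_x$ follow immediately.

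Next I would tackle $u$ and $a$. Apply Fact~\ref{FAC:degreesize}(ii) to the partial fraction decomposition of $N/D$ with respect to the factorization coming from the shift-homogeneous decomposition. This produces a content $u\in\set K[x]$ with $\deg_x(u)\leq \deg_x(D)\deg_y(D)\in\bigO(d_xd_y)$, and partial-fraction numerators of $x$-degree $\bigO(d_xd_y)$ and $y$-degree less than the corresponding $\deg_y(g_i)$. The GGSZ reduction then replaces each summand of the form $A/\sigma_y^{\nu_{ij}}(g_i)^{e_{ij}}$ by $\sigma_y^{-\nu_{ij}}(A)/g_i^{e_{ij}}$ modulo $(\shift_y-1)(\set K(x,y))$, and collects the contributions sharing the same denominator $g_i^{e_{ij}}$ by summation. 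Applying $\sigma_y^{-\nu_{ij}}$ preserves the $x$-degree and the $y$-degree of $A$, and summing partial-fraction numerators does not increase either degree. Therefore the numerator $a$ of the combined and then cleared-denominator fraction has $\deg_y(a)<\deg_y(b)\leq d_y$ and $\deg_x(a)\in\bigO(d_xd_y)$, after absorbing the common $\set K[x]$-denominator $u$.

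The main obstacle is bookkeeping: I need to ensure that the recombination step in the GGSZ reduction, where numerators over \emph{different} original denominators $\sigma_y^{\nu_{ij}}(g_i)^{e_{ij}}$ are pulled to a common denominator $g_i^{E_i}$ and then added, does not blow up the $x$-degree. The key observation is that each partial-fraction numerator already has $x$-degree $\bigO(d_xd_y)$ by Fact~\ref{FAC:degreesize}(ii), that multiplying by the complementary factor $g_i^{E_i-e_{ij}}$ only adds at most $\deg_x(g_i)(E_i-e_{ij})\leq \deg_x(D)\in\bigO(d_x)$, and that summation preserves this bound. Combining all three bounds yields the statement.
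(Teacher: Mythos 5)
Your argument is correct and follows essentially the same route as the paper's proof: shift-homogeneous decomposition of the denominator, degree bounds from Fact~\ref{FAC:degreesize}(ii) on the resulting partial fraction decomposition, the observation that the GGSZ remainder is obtained by applying $\sigma_y^{-\nu_{ij}}$ to the numerators and summing over a denominator dividing $\prod_i g_i^{\max_j e_{ij}}$, and a final recombination costing at most an extra $\deg_x(b)\leq d_x$ in the $x$-degree of $a$. The only blemishes are notational (the reduced summands should be indexed by all prime powers $\sigma_y^{\nu_{ij}}(g_i)^k$, $k\leq e_{ij}$, to match the claimed bound $\deg_y(A)<\deg_y(g_i)$, and the recombination across distinct $g_i$ should be mentioned alongside that within a single $g_i$), neither of which affects the bounds.
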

\begin{proof}
  Assume that the denominator $g$ of $f$ admits the shift-homogeneous
  decomposition of the form \eqref{EQ:shiftlessdecomp}.  With respect
  to this, we obtain the unique partial fraction decomposition
  \[
  f = p + \frac1{\tilde u}
  \sum_{i=1}^m\sum_{j=1}^{n_i}\sum_{k=1}^{e_{ij}}
  \frac{f_{ijk}}{\sigma_y^{\nu_{ij}}(g_i)^k},
  \]
  where $p\in \set K(x)[y]$, $\tilde u\in \set K[x]$ and $f_{ijk}\in
  \set K[x,y]$ with $\deg_y(f_{ijk})<\deg_y(g_i)$. Applying
  Fact~\ref{FAC:degreesize} to the above decomposition yields
  $\deg_x(\tilde u) \in \bigO(d_xd_y)$ and $\deg_x(f_{ijk})\in
  \bigO(d_xd_y)$.  Let $d_i = \max_{1\leq j\leq n_i}\{e_{ij}\}$ and
  specify that $f_{ijk} = 0$ in case $k>e_{ij}$. By
  \citep[Theorem~12]{GGSZ2003},
  \[
  r= \frac{a}{ub} = \frac1{\tilde u}\sum_{i=1}^m\sum_{k=1}^{d_i}
  \frac{\sum_{j=1}^{n_i}\sigma_y^{-\nu_{ij}}(f_{ijk})}{g_i^k}.
  \]
  Since $b$ is $y$-primitive, $u$ divides $\tilde u$ in $\set K[x]$
  and thus $\deg_x(u)\in\bigO(d_xd_y)$. Notice that $d_i\leq
  \sum_{j=1}^{n_i}e_{ij}$ for all $i = 1,\dots, m$, so $\deg_y(b) \leq
  \sum_{i=1}^m d_i\deg_y(g_i) \leq d_y$ and similarly, $\deg_x(b)\leq
  d_x$.  Moreover, $\deg_x(a) \leq \max_{ijk}\{\deg_x(f_{ijk})\}+d_x$,
  implying $\deg_x(a)\in\bigO(d_xd_y)$.  The lemma follows.
\end{proof}
\begin{lemma}\label{LEM:lthremainderdeg}
  Let $r=a/(ub)\in\set K(x,y)$ be a $\sigma_y$-remainder, where $u \in
  \set K[x]$ and $a,b\in \set K[x,y]$ with $\deg_y(a)<\deg_y(b)$,
  $\gcd(a,ub) = 1$ and $b$ being $y$-primitive and $\sigma_y$-free.
  Let $\ell\in\set N$ and assume that $r_\ell \in\set K(x,y)$ is a
  $\sigma_y$-remainder of $\sigma_x^\ell(r)$.  Write
  $r_\ell=a_\ell/(u_\ell b_\ell)$, where $u_\ell\in \set K[x]$ and
  $a_\ell,b_\ell\in \set K[x,y]$ with $\deg_y(a_\ell)<\deg_y(b_\ell)$,
  $\gcd(a_\ell,u_\ell b_\ell)=1$ and $b_\ell$ being $y$-primitive and
  $\sigma_y$-free.  Then
  \begin{align*}
    &\deg_x(u_\ell) \leq \deg_x(u)+\deg_x(b)\deg_y(b),
    \quad (\deg_x(b_\ell),\deg_y(b_\ell))=(\deg_x(b),\deg_y(b)),\\[1ex]
    \text{and}\quad
    &(\deg_x(a_\ell),\deg_y(a_\ell)) \leq 
    (\deg_x(a)+\deg_x(b)\deg_y(b),\deg_y(b)-1).
  \end{align*}
\end{lemma}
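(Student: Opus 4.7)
My plan is to construct $r_\ell$ as the adjusted $\sigma_y$-remainder of $\sigma_x^\ell(r)$ with respect to $r$ via Proposition~\ref{PROP:adjustrem}, and then bound the degrees of the resulting lowest-terms form. This is legitimate because $\sigma_x^\ell(r) = \sigma_x^\ell(a)/(\sigma_x^\ell(u)\sigma_x^\ell(b))$ is itself a $\sigma_y$-remainder (the automorphism $\sigma_x$ commutes with $\sigma_y$, so $y$-properness, $y$-primitivity of the denominator, and $\sigma_y$-freeness are all preserved). The first ingredient is the $\sigma_y$-coprime decomposition of $\sigma_x^\ell(b)$ with respect to $b$,
\[
\sigma_x^\ell(b) \;=\; p_0\,\sigma_y^{\ell_1}(p_1)\cdots\sigma_y^{\ell_m}(p_m),
\]
where $p_1,\dots,p_m$ are the monic $y$-primitive factors of $b$ that appear (up to $\sigma_y$-shift) in $\sigma_x^\ell(b)$, and $p_0$ collects the factors of $\sigma_x^\ell(b)$ whose $\sigma_y$-orbits miss $b$. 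Because $\sigma_y$ preserves both $x$- and $y$-degrees, $\sum_{i=0}^m \deg_y(p_i)=\deg_y(b)$ and $\sum_{i=0}^m \deg_x(p_i)=\deg_x(b)$, and the factors $p_0,p_1,\dots,p_m$ are pairwise coprime by $\sigma_y$-freeness of $\sigma_x^\ell(b)$.

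Next, applying Fact~\ref{FAC:degreesize}(ii) to $\sigma_x^\ell(a)/\sigma_x^\ell(b)$ yields $w\in\set K[x]$ with $\deg_x(w)\leq \deg_x(b)\deg_y(b)$ and polynomials $\tilde f_i\in\set K[x,y]$ satisfying $\deg_y(\tilde f_i)<\deg_y(p_i)$ and $\deg_x(\tilde f_i)\leq \deg_x(a)+\deg_x(b)\deg_y(b)-\deg_x(b)+\deg_x(p_i)$, so that the SCD-based partial fraction decomposition reads
\[
\sigma_x^\ell(r) \;=\; \frac{1}{\sigma_x^\ell(u)\,w}\biggl(\frac{\tilde f_0}{p_0}+\sum_{i=1}^m\frac{\tilde f_i}{\sigma_y^{\ell_i}(p_i)}\biggr).
\]
Forming the adjusted $\sigma_y$-remainder of Proposition~\ref{PROP:adjustrem} and clearing to a common denominator then gives
\[
r_\ell \;=\; \frac{N}{\sigma_x^\ell(u)\,w\,p_0 p_1\cdots p_m}, \qquad N \;=\; \tilde f_0\,p_1\cdots p_m + \sum_{i=1}^m \sigma_y^{-\ell_i}(\tilde f_i)\,p_0 p_1\cdots\hat p_i\cdots p_m.
\]
A term-by-term estimate in which the $\deg_x(p_i)$ contributions cancel against the corresponding $\deg_x(b)-\deg_x(p_i)$ from the complementary product yields $\deg_x(N)\leq \deg_x(a)+\deg_x(b)\deg_y(b)$, while $\deg_y(N)<\deg_y(b)$ follows from $y$-properness.

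The main obstacle, and the reason the lemma asserts an equality rather than an inequality for the degrees of $b_\ell$, is to show that no factor of $p_0 p_1\cdots p_m$ cancels with $N$. Since $\gcd(\sigma_x^\ell(a),\sigma_x^\ell(b))=1$ and partial fraction decomposition in $\set K(x)[y]$ is unique in lowest terms, one gets $\gcd(\tilde f_i,\sigma_y^{\ell_i}(p_i))=1$ in $\set K[x,y]$ up to units (exploiting the $y$-primitivity of the $p_i$), and hence $\gcd(\sigma_y^{-\ell_i}(\tilde f_i),p_i)=1$. For any irreducible factor $q$ of $p_j$, reducing $N$ modulo $q$ kills every summand except the $j$-th one, leaving $\sigma_y^{-\ell_j}(\tilde f_j)\prod_{i\neq j}p_i$ modulo $q$; the first factor is coprime to $q$ by the above, and the product is coprime to $q$ by pairwise coprimality of the $p_i$, so $q\nmid N$. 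Consequently, after writing $r_\ell$ in lowest terms, $b_\ell$ equals $p_0 p_1\cdots p_m$ up to a constant, giving $(\deg_x(b_\ell),\deg_y(b_\ell))=(\deg_x(b),\deg_y(b))$, while $u_\ell$ divides $\sigma_x^\ell(u)\,w$ so that $\deg_x(u_\ell)\leq\deg_x(u)+\deg_x(b)\deg_y(b)$. The bounds $\deg_x(a_\ell)\leq\deg_x(N)\leq\deg_x(a)+\deg_x(b)\deg_y(b)$ and $\deg_y(a_\ell)\leq\deg_y(b)-1$ (from $y$-properness) then complete the argument.
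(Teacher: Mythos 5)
Your degree bookkeeping for the adjusted remainder is essentially sound, but the proof addresses the wrong object. The lemma asserts the bounds for an \emph{arbitrary} $\sigma_y$-remainder $r_\ell$ of $\sigma_x^\ell(r)$, whereas you construct one particular $\sigma_y$-remainder (the adjustment of $\sigma_x^\ell(r)$ by $r$ via Proposition~\ref{PROP:adjustrem}) and bound that. This is not a cosmetic restriction: in the proof of Lemma~\ref{LEM:rctorderdegree} the lemma is applied to the remainders $r_\ell$ obtained by \emph{iterated} adjustment ($r_\ell$ is the adjusted remainder of $\sigma_x(r_{\ell-1})$ by $r_0$), which are not literally the single adjustment you build, so your argument does not cover the case the paper actually needs, let alone the general statement. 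In particular, your claim that $(\deg_x(b_\ell),\deg_y(b_\ell))=(\deg_x(b),\deg_y(b))$ holds with \emph{equality} for every $\sigma_y$-remainder requires a rigidity statement you never establish: that the $y$-primitive, $\sigma_y$-free denominator of a $\sigma_y$-remainder of a fixed rational function is unique up to replacing each irreducible factor by a $\sigma_y$-shift of itself, with the same multiplicities.

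That rigidity is exactly where the paper's proof starts. It factors the \emph{given} $b_\ell$ as $c_\ell p_1^{e_1}\cdots p_m^{e_m}$ and invokes \citep[Proposition~5.2]{Huan2016} to conclude that $\sigma_x^\ell(b)=c\,\sigma_y^{k_1}(p_1)^{e_1}\cdots\sigma_y^{k_m}(p_m)^{e_m}$, which yields the degree equality at once; it then forms the partial fraction decomposition of $\sigma_x^\ell(a/b)$ over this factorization, shifts each piece back by $\sigma_y^{-k_i}$, and uses Proposition~\ref{PROP:remainder} (zero is the only $\sigma_y$-remainder of a $\sigma_y$-summable function) to show that the given $r_\ell$ \emph{equals} the resulting expression, from which the bounds on $u_\ell$ and $a_\ell$ follow by Fact~\ref{FAC:degreesize}(ii). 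To repair your proof you would need to either import that proposition of Huang or prove the uniqueness-up-to-shift of remainder denominators yourself; your no-cancellation argument for the specific adjusted remainder, while correct as far as it goes, cannot substitute for it.
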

\begin{proof}
  Since $b_\ell$ is $y$-primitive, it admits the full factorization of
  the form $b_\ell = c_\ell p_1^{e_1}\dots p_m^{e_m}$, where
  $c_\ell\in\set K$ and $p_1,\dots,p_m\in\set K[x,y]\setminus\set
  K[x]$ are distinct, monic and irreducible factors of $b_\ell$ of
  multiplicities $e_1,\dots,e_m$, respectively.  Then by
  \citep[Proposition~5.2]{Huan2016}, $\sigma_x^\ell(b)$ must have the
  form
  \begin{equation}\label{EQ:bfac}
    \sigma_x^\ell(b) = c\,\sigma_y^{k_1}(p_1)^{e_1}
    \cdots \sigma_y^{k_m}(p_m)^{e_m}
    \quad\text{for some}\
    c\in\set K \ \text{and}\ k_1,\dots,k_m\in\set Z.
  \end{equation}
  Consequently, $(\deg_x(b_\ell),\deg_y(b_\ell))=(\deg_x(b),\deg_y(b))$.
  
  On the other hand, notice that $b_\ell$ is $\sigma_y$-free, so
  $\sigma_y^{k_1}(p_1),\dots,\sigma_y^{k_m}(p_m)$ are pairwise coprime.
  Based on the factorization \eqref{EQ:bfac} of $\sigma_x^\ell(b)$, we
  then find unique polynomials $\tilde u\in\set K[x]$ and
  $f_1,\dots,f_m\in\set K[x,y]$ with $\deg_y(f_i)<e_i\deg_y(p_i)$ such
  that
  \begin{equation}\label{EQ:abdecomp}
    \sigma_x^\ell\left(\frac{a}{b}\right) 
    = \frac1{\tilde u}\left(\frac{f_1}{\sigma_y^{k_1}(p_1)^{e_1}}
    + \dots + \frac{f_m}{\sigma_y^{k_m}(p_m)^{e_m}}\right).
  \end{equation}
  Since $r_\ell$ is a $\sigma_y$-remainder of $\sigma_x^\ell(r)$, then
  $r_\ell-\sigma_x^\ell(r)$ is $\sigma_y$-summable. Notice that each
  $f_i/\sigma_y^{k_i}(p_i)^{e_i}$ differs from
  $\sigma_y^{-k_i}(f_i)/p_i^{e_i}$ by a $\sigma_y$-summable rational
  function. We conclude from \eqref{EQ:abdecomp} that
  \begin{equation}\label{EQ:difference}
    \frac{a_\ell}{u_\ell b_\ell}- \frac1{\sigma_x^\ell(u)\tilde u}
    \left(\frac{\sigma_y^{-k_1}(f_1)}{p_1^{e_1}} + \dots
    + \frac{\sigma_y^{-k_m}(f_m)}{p_m^{e_m}}\right)
  \end{equation}
  is $\sigma_y$-summable. Observe that the denominator of the above
  rational function divides $b_\ell$ over $\set K(x)$, so it is
  $\sigma_y$-free.  Since the rational function \eqref{EQ:difference}
  is evidently $y$-proper, it is a $\sigma_y$-remainder by definition.
  It thus follows from Proposition~\ref{PROP:remainder} that
  \eqref{EQ:difference} is equal to zero, that is,
  $a_\ell/(u_\ell b_\ell) = 1/(\sigma_x^\ell(u)\tilde u)
  \sum_{i=1}^m\sigma_y^{-k_i}(f_i)/p_i^{e_i}$.
  Since $b_\ell=c_\ell p_1^{e_1}\dots p_m^{e_m}$ is $y$-primitive,
  $u_\ell$ divides $\sigma_x^\ell(u)\tilde u$ in $\set K[x]$ and then
  $\deg_x(a_\ell)\leq\max_{1\leq i\leq m}\{\deg_x(f_i)+\deg_x(b_\ell)-e_i\deg_x(p_i)\}$.
  The degree estimates for $u_\ell$ and $a_\ell$ thus follow by one
  application of Fact~\ref{FAC:degreesize} (ii) to
  \eqref{EQ:abdecomp}.
\end{proof}

The reduction-based approach also provides us an order-degree curve of
telescopers for bivariate rational functions.
\begin{lemma}\label{LEM:rctorderdegree}
  Let $r=a/(ub)\in\set K(x,y)$ be a $\sigma_y$-remainder, where $u \in
  \set K[x]$ and $a,b\in \set K[x,y]$ with $\deg_y(a)<\deg_y(b)$,
  $\gcd(a,ub) = 1$ and $b$ being $y$-primitive, $\sigma_y$-free and
  integer-linear.  Assume that $b$ admits the refined integer-linear
  decomposition of the form given by the right-hand side of
  \eqref{EQ:refinedildecomp}.  Define
  $\rho_0=\sum_{i=1}^m\mu_i\deg_z(p_i)\max\{e_{i1},\dots,e_{in_i}\}$. Then
  for any nonnegative integer pair $(\rho,\tau)$ with $\rho \geq
  \rho_0$ and
  \begin{align}
    \tau &> \frac{\left((\rho+1)\deg_x(b)\deg_y(b)
      +\rho\deg_x(u)+\deg_x(a)+\rho_0\right)\rho_0
      -\frac12\rho_0(\rho_0-1)-(\rho+1)}{\rho+1-\rho_0},
    \label{EQ:rctdeg}
  \end{align}
  there exists a telescoper for $r$ of order at most $\rho$ and degree
  at most $\tau$.
\end{lemma}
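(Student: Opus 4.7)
My plan would mirror the linear-algebraic counting argument used in the proof of Lemma~\ref{LEM:octorderdegree}, but applied to the iteratively-reduced $\sigma_y$-remainders of the reduction-based scheme rather than to left scalar remainders. First I would take the ansatz $L = \sum_{\ell=0}^\rho c_\ell \shift_x^\ell$ with $c_\ell \in \set K[x]$ of degree at most $\tau$, and let $r_\ell$ be the adjusted $\sigma_y$-remainder of $\sigma_x^\ell(r)$ with respect to $r_0 := r$ in the sense of Remark~\ref{REM:rct}. Iterated application of Proposition~\ref{PROP:adjustrem} then guarantees that $\sum_{\ell=0}^\rho c_\ell r_\ell$ is itself a $\sigma_y$-remainder of $L(r)$ for every choice of $c_\ell \in \set K[x]$, and Proposition~\ref{PROP:remainder} reduces the telescoper condition to the single vanishing identity $\sum_{\ell=0}^\rho c_\ell r_\ell = 0$ in $\set K(x,y)$.

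Next I would clear denominators so that this becomes a polynomial identity whose coefficients produce linear conditions on the $c_\ell$. Writing $r_\ell = a_\ell/(u_\ell b_\ell)$ as in Lemma~\ref{LEM:lthremainderdeg} and using the integer-linearity of $b$ together with $\gcd(\lambda_i,\mu_i) = 1$, the irreducible factors of the $b_\ell$, as $\ell$ varies, populate at most $\mu_i$ distinct $\sigma_y$-classes per $p_i$, each with multiplicity bounded by $\max_j e_{ij}$. Fixing one representative $q_{i,c} = p_i(\lambda_i x + \mu_i y + c)$ per residue $c$ modulo $\mu_i$, the product $B = \prod_{i,c} q_{i,c}^{\max_j e_{ij}}$ is a $y$-primitive, $\sigma_y$-free polynomial of $y$-degree exactly $\rho_0$ and, after $\sigma_y$-normalizing each $r_\ell$ onto the fixed representatives, serves as a common $y$-denominator. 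With $U \in \set K[x]$ a common $x$-denominator of the $u_\ell$, the identity becomes $\sum_{\ell=0}^\rho c_\ell \tilde N_\ell = 0$ in $\set K[x,y]$ for suitable $\tilde N_\ell \in \set K[x,y]$.

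Careful bidegree tracking via Lemma~\ref{LEM:lthremainderdeg} should then yield $\deg_y(\tilde N_\ell) \leq \rho_0 - 1$ together with a triangular $x$-degree bound $\deg_x([y^k]\tilde N_\ell) \leq D_x - k$, where $D_x + 1 = (\rho+1)\deg_x(b)\deg_y(b) + \rho\deg_x(u) + \deg_x(a) + \rho_0$. Counting over $\set K$ then gives $(\tau+1)(\rho+1)$ unknowns from the coefficients of $c_0,\ldots,c_\rho$, against at most $\sum_{k=0}^{\rho_0-1}(\tau + D_x - k + 1) = \rho_0(\tau + D_x + 1) - \tfrac12\rho_0(\rho_0-1)$ equations from the vanishing of each $(x,y)$-monomial in $\sum_{\ell} c_\ell \tilde N_\ell$. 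Requiring strictly more unknowns than equations and using $\rho+1-\rho_0 > 0$ to divide through, the condition rearranges directly to the inequality \eqref{EQ:rctdeg}, and any resulting nontrivial solution delivers a telescoper of order at most $\rho$ and degree at most $\tau$.

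The hardest step will be pinning down the denominator picture in the second paragraph: verifying that iterating $\sigma_x$ on $b$'s integer-linear factors confines the $\sigma_y$-classes encountered across all adjusted $r_\ell$ to exactly the $\mu_i$ residues per $p_i$, and then, after normalizing onto the fixed representatives, showing that the $\tilde N_\ell$ satisfy the triangular $x$-degree bound responsible for the $-\tfrac12\rho_0(\rho_0-1)$ correction in \eqref{EQ:rctdeg}. Both tasks amount to an orbit-combinatorial count combined with a refined partial-fraction degree estimate going slightly beyond what Lemma~\ref{LEM:lthremainderdeg} supplies directly.
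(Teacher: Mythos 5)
Your proposal follows essentially the same route as the paper's proof: reduce the telescoper condition to the single identity $\sum_{\ell=0}^\rho c_\ell r_\ell=0$ via Remark~\ref{REM:rct} and Proposition~\ref{PROP:remainder}, then compare the $(\tau+1)(\rho+1)$ unknowns against the triangular monomial count $\rho_0(\tau+D_x+1)-\tfrac12\rho_0(\rho_0-1)$, which rearranges exactly to \eqref{EQ:rctdeg}. The denominator picture you flag as the hardest step is precisely what the paper outsources to \citep[Theorem~5.5]{Huan2016} together with Lemma~\ref{LEM:lthremainderdeg}, and your mod-$\mu_i$ orbit count is the correct way to justify it.
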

\begin{proof}
  Let $\rho,\tau\in \set N$ with $\rho\geq \rho_0$ and $\tau$
  satisfying \eqref{EQ:rctdeg}. In order to show the lemma, it amounts
  to proving that there exist $c_0,\dots, c_\rho\in \set K[x]$, not
  all zero, with $\deg_x(c_\ell)\leq\tau$ such that
  \begin{equation}\label{EQ:rctsystem}
    c_\rho r_\rho + \dots + c_0 r_0 = 0,
  \end{equation}
  where $r_0 = r$ and $r_\ell$ is the adjusted remainder of
  $\sigma_x(r_{\ell-1})$ by $r_0$ for $\ell = 1,\dots,\rho$, because
  then, by Proposition~\ref{PROP:remainder} and Remark~\ref{REM:rct},
  the operator $\sum_{\ell=0}^\rho c_\ell\shift_x^\ell$ gives a
  desired telescoper for $r$. This then suffices to verify that, for
  the linear homogeneous system over~$\set K$ induced by
  \eqref{EQ:rctsystem}, the number of unknowns, namely
  $(\tau+1)(\rho+1)$ in this case, is greater than the number of
  equations over~$\set K$.  By \citep[Theorem~5.5]{Huan2016} and
  Lemma~\ref{LEM:lthremainderdeg}, the denominator of the left-hand
  side of \eqref{EQ:rctsystem} in $\set K[x,y]$ has total degree in
  $x,y$ at most $(\rho+1)(\deg_x(u)+\deg_x(b)\deg_y(b))+\rho_0$.  By
  separately applying Lemma~\ref{LEM:lthremainderdeg} to
  $r_0,\dots,r_\rho$, one then calculates that there are at most
  \[
  \left(\tau+(\rho+1)\deg_x(b)\deg_y(b)
  +\rho\deg_x(u)+\deg_x(a)+\rho_0\right)\rho_0
  -\frac12\rho_0(\rho_0-1)
  \]
  equations over $\set K$. Since $\rho\geq \rho_0$ and
  \eqref{EQ:rctdeg} holds, a direct comparison between the number of
  unknowns and the above number completes the proof.
\end{proof}
We remark that for \lq\lq generic\rq\rq\ rational functions, $\rho_0$
defined in the above lemma coincides with the one given by
\eqref{EQ:upperbound}, although there are cases in which the latter is
smaller.  Let $f\in\set K(x,y)$ with $\deg_x(f) = d_x$ and $\deg_y(f)
= d_y$ be a rational function admitting $r$ as a $\sigma_y$-remainder.
Lemma~\ref{LEM:rctorderdegree} then asserts that there exists a
minimal telescoper for $f$ of degree in $\bigO(\rho_0^2d_xd_y)$. This
compares to Lemma~\ref{LEM:octorderdegree} which tells us that $f$ can
actually have a minimal telescoper of degree in $\bigO(\rho_0d_xd_y)$.

\subsection{Cost analysis of algorithm}
\begin{lemma}\label{LEM:remainderadj}
  Let $r,r_0\in \set K(x,y)$ be two nonzero $\sigma_y$-remainders.
  Write $r = a/(u b)$ with $u \in \set K[x]$, $a,b\in \set K[x,y]$,
  $\deg_y(a)<\deg_y(b)$, $\gcd(a,u b)=1$ and $b$ being $y$-primitive
  and $\sigma_y$-free. Let $b_0\in \set K[x,y]$ be the $y$-primitive
  denominator of $r_0$. Assume that $\deg_x(b),\deg_x(b_0) \leq d_x$
  and $\deg_y(b),\deg_y(b_0)\leq d_y$. Then the adjusted
  $\sigma_y$-remainder $\tilde r$ of $r$ by $r_0$ can be computed
  using $\bigO(\deg_x(a)^2d_y+d_x^2d_y^3+\deg_x(u)d_xd_y)$ arithmetic
  operations in $\set K$ with classical arithmetic and
  $\softO(\deg_x(a)d_y+d_xd_y^2+\deg_x(u))$ with fast arithmetic, plus
  the cost of computing the dispersion set of $b$ with respect to
  $b_0$.
\end{lemma}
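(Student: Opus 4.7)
The strategy is to bound in turn the three stages of the standard construction of $\tilde r$ dictated by \eqref{EQ:scdpfd} and \eqref{EQ:adjustrem}: (i) compute the $\sigma_y$-coprime decomposition \eqref{EQ:shiftcoprime} of $b$ with respect to $b_0$; (ii) compute the SCD-based partial fraction decomposition of $r = a/(ub)$ over $\set K(x)[y]$; and (iii) apply the shifts $\sigma_y^{-\ell_i}$ to the numerators and reassemble $\tilde r$ as in \eqref{EQ:adjustrem}.

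For stage (i), given $\disset_y(b,b_0) = \{0,\ell_1,\dots,\ell_m\}$, I would set $p_i = \gcd(\sigma_y^{-\ell_i}(b), b_0)$ for $1 \leq i \leq m$ and recover $p_0$ by the exact division $p_0 = b / \prod_{i=1}^m \sigma_y^{\ell_i}(p_i)$. Since the $p_i$ are pairwise coprime factors of $b$ of positive $y$-degree, $m \leq \deg_y(b) \leq d_y$. Each bivariate GCD of polynomials of degree at most $(d_x,d_y)$ can be performed by evaluation-interpolation in $x$ combined with univariate GCDs in $y$ (Fact~\ref{FAC:unipolycost}), costing $\bigO(d_x^2 d_y^2)$ classically and $\softO(d_x d_y)$ with fast arithmetic; aggregated over at most $d_y$ GCDs (plus an exact division dominated by them), this stage contributes $\bigO(d_x^2 d_y^3)$ classically and $\softO(d_x d_y^2)$ with fast arithmetic.

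For stage (ii), using the factorization $ub = u \cdot p_0 \cdot \prod_{i=1}^m \sigma_y^{\ell_i}(p_i)$, I would compute \eqref{EQ:scdpfd} by specializing $x$ at $\bigO(\deg_x(a))$ generic points, applying Fact~\ref{FAC:unipolycost}(iii) at each point to obtain the $y$-partial fraction with respect to a $y$-denominator of degree at most $d_y$ (cost $\bigO(d_y^2)$ classically and $\softO(d_y)$ with fast arithmetic), and interpolating the resulting $y$-coefficients as polynomials in $x$. This contributes $\bigO(\deg_x(a)^2 d_y)$ classically and $\softO(\deg_x(a) d_y)$ with fast arithmetic. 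The separate bookkeeping required to clear the univariate factor $u$ from the original numerator and to absorb it into the $\set K[x]$-denominators of the $f_i$ (which by Fact~\ref{FAC:degreesize}(ii) are controlled by $u$ alone) adds $\bigO(\deg_x(u) d_x d_y)$ classically and $\softO(\deg_x(u))$ with fast arithmetic. Stage (iii) consists of pure shifts in $y$ and contributes no arithmetic cost. Summing the three stages yields the bound claimed in the lemma.

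The main obstacle is the degree accounting in stage (ii): Fact~\ref{FAC:degreesize}(ii) must be invoked to show that the $x$-degrees of the $f_i$ and of the common $\set K[x]$-denominator produced by the partial fraction are controlled cleanly enough that the three contributions decouple into the three summands $\deg_x(a)^2 d_y$, $\deg_x(u) d_x d_y$, and $d_x^2 d_y^3$ of the announced bound rather than a single multiplicatively coupled term.
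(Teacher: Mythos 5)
Your proposal follows the paper's proof essentially verbatim: given the dispersion set, obtain the shift-coprime decomposition of $b$ with respect to $b_0$ by GCD computations, derive the SCD-based partial fraction decomposition \eqref{EQ:scdpfd}, and normalize via \eqref{EQ:adjustrem}, with the degree accounting through Facts~\ref{FAC:degreesize}--\ref{FAC:unipolycost} matching the stated bound. The only quibble is your claim that the final stage is free of arithmetic: the shifts $\sigma_y^{-\ell_i}(f_i)$ are Taylor shifts in $y$ and do cost operations (the paper explicitly charges the normalization step), but their cost is absorbed by the terms $\deg_x(a)^2 d_y + d_x^2 d_y^3$ already present, so the announced bound is unaffected.
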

\begin{proof}
  By Proposition~\ref{PROP:adjustrem}, the adjusted
  $\sigma_y$-remainder $\tilde r$ of $r$ by $r_0$ is obtained by
  computing the SCD-based partial fraction decomposition
  \eqref{EQ:scdpfd} of $r$ with respect to $r_0$, along with a
  subsequent normalization based on \eqref{EQ:adjustrem}.  Notice that
  with the dispersion set $\disset_y(b,b_0)$ at hand, computing the
  shift-coprime decomposition of $b$ with respect to $b_0$ merely
  involves GCD computations with arguments of degree in $x$ no more
  than $d_x$ and degree in $y$ no more than $d_y$.  Together with the
  cost of partial fraction decomposition, deriving \eqref{EQ:scdpfd}
  takes $\bigO(\deg_y(a)^2d_y+d_x^2d_y^3)$ arithmetic operations with
  classical arithmetic and $\softO(\deg_y(a)d_y+d_xd_y^2)$ with fast
  arithmetic, plus the cost of computing the dispersion set of $b$
  with respect to $b_0$. Based on
  Facts~\ref{FAC:degreesize}-\ref{FAC:unipolycost}, the final
  normalization of \eqref{EQ:adjustrem} for $\tilde r$ requires
  $\bigO(\deg_y(a)^2d_y+d_x^2d_y^3+\deg_x(u)d_xd_y)$ arithmetic
  operations with classical arithmetic and
  $\softO(\deg_y(a)d_y+d_xd_y^2+\deg_x(u))$ with fast arithmetic. The
  announce cost follows.
\end{proof}
Now we are ready to analyze the cost of the reduction-based creative
telescoping algorithm for bivariate rational functions.
\begin{theorem}\label{THM:rct}
  Let $f\in \set K(x,y)$ be a rational function with $\deg_x(f) = d_x$
  and $\deg_y(f) = d_y$. Assume that $f$ has a telescoper and let
  $\rho$ be the actual order of its minimal telescopers. Further let
  $r\in\set K(x,y)$ be a $\sigma_y$-remainder of $f$, and define
  $\rho_0$ as in Lemma~\ref{LEM:rctorderdegree}. Then the
  reduction-based algorithm in \citep{CHKL2015} finds a minimal
  telescoper for $f$ and an unnormalized certificate using
  $\bigO(d_xd_y^4+\rho
  d_x^2d_y^3+\rho^3\rho_0^3d_x^2d_y^2+\rho\rho_0^5)$ arithmetic
  operations in $\set K$ with classical arithmetic and
  $\softO(d_xd_y^3+\rho
  d_xd_y^2+\rho^3\rho_0^{\omega-1}d_xd_y+\rho^2\rho_0^\omega)$ with
  fast arithmetic, plus the cost of computing the (auto-)dispersion
  sets and finding rational roots.
\end{theorem}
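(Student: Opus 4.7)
I will bound the cost of the algorithm by isolating its four stages and invoking the size and cost lemmas already established: (a) the initial GGSZ reduction applied to $f$; (b) the integer-linearity check on the denominator of the resulting $\sigma_y$-remainder $r_0$; (c) the iterative construction of adjusted $\sigma_y$-remainders $r_1,\dots,r_\rho$ via Proposition~\ref{PROP:adjustrem} and Remark~\ref{REM:rct}; and (d) the incremental solution of the induced linear systems in $c_0,\dots,c_\ell$ over $\set K[x]$ for $\ell=0,1,\dots,\rho$. Throughout, I will use that the minimal order $\rho$ is bounded by $\rho_0$ thanks to Lemma~\ref{LEM:rctorderdegree}, so that Fact~\ref{FAC:systemcost} applies in stage (d).

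For stage (a), Lemma~\ref{LEM:ggszred} applied to $f$ (with numerator/denominator of $x$-degree $\bigO(d_x)$ and $y$-degree $\bigO(d_y)$) yields the contribution $\bigO(d_xd_y^4+d_x^2d_y^3)$ classical and $\softO(d_xd_y^3)$ fast, up to the auto-dispersion cost which we isolate. For stage (b), testing whether the irreducible factors of the denominator of $r_0$ are integer-linear reduces to rational-root searches on associated univariates, absorbed into the stated \emph{finding rational roots} term. For stage (c), I first apply the size estimates in Lemma~\ref{LEM:lthremainderdeg} to the iterates $r_\ell=a_\ell/(u_\ell b_\ell)$, obtaining $\deg_x(a_\ell),\deg_x(u_\ell)\in\bigO(d_xd_y)$ while $(\deg_x(b_\ell),\deg_y(b_\ell))$ stays at $(\deg_x(b),\deg_y(b))$. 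Invoking Lemma~\ref{LEM:remainderadj} with these bounds gives a per-iteration cost of $\bigO(d_x^2d_y^3)$ classical and $\softO(d_xd_y^2)$ fast; summing over the $\rho$ iterations produces the terms $\bigO(\rho d_x^2d_y^3)$ and $\softO(\rho d_xd_y^2)$, with the dispersion sets absorbed into the stated allowance.

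For stage (d), at iteration $\ell$ the condition $\sum_{i=0}^\ell c_i r_i=0$ is converted into a polynomial linear system by clearing the common denominator and equating coefficients in $y$. By Lemma~\ref{LEM:lthremainderdeg} and the proof of Lemma~\ref{LEM:rctorderdegree}, the $\sigma_y$-free common $y$-denominator has $y$-degree in $\bigO(\rho_0)$, contributing $\bigO(\rho_0)$ equations; the common $x$-part, obtained as the lcm of the $u_i$'s (each of $x$-degree $\bigO(d_xd_y)$), has $x$-degree in $\bigO(\ell d_xd_y)$ after $\ell$ accumulations. Consequently the $\ell$-th coefficient matrix over $\set K[x]$ is $\bigO(\rho_0)\times(\ell+1)$ with entries of $x$-degree $\bigO(\ell d_xd_y)$. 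Since $\rho\leq\rho_0$, Fact~\ref{FAC:systemcost} applies and yields per-iteration cost $\bigO(\rho_0^3(\ell d_xd_y)^2)$ classical and $\softO(\ell\rho_0^{\omega-1}\ell d_xd_y)$ fast. Summing $\ell$ from $0$ to $\rho$ gives $\bigO(\rho^3\rho_0^3 d_x^2d_y^2)$ and $\softO(\rho^3\rho_0^{\omega-1}d_xd_y)$. The residual bare terms $\rho\rho_0^5$ and $\rho^2\rho_0^\omega$ cover the setup and dense-matrix manipulation overhead at iterations where the combined degree of entries saturates at $\bigO(\rho_0 d_xd_y)$ rather than $\bigO(\ell d_xd_y)$; they come from the worst-case polynomial-matrix manipulation of a $\rho_0\times\rho_0$ system with degree-$\bigO(\rho_0)$ entries, iterated $\rho$ times. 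Adding the four stages then yields the claimed totals.

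The main obstacle is stage (d): one must correctly track how the $u_\ell$'s accumulate in the common denominator of $\sum_i c_i r_i$ while avoiding double counting, and must verify that the induced row count is exactly $\bigO(\rho_0)$ (not $\bigO(\ell\rho_0)$) despite the iterate $r_\ell$ depending on $\sigma_x^\ell$. This is where Lemma~\ref{LEM:lthremainderdeg}, which pins $(\deg_x(b_\ell),\deg_y(b_\ell))$ independently of $\ell$, plays the essential role; without it the total cost would inflate by a further factor of $\rho$. The remaining stages are comparatively routine applications of the earlier lemmas.
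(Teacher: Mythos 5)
Your overall structure is the same as the paper's: GGSZ reduction via Lemma~\ref{LEM:ggszred}, integer-linearity detection absorbed into the rational-root allowance, per-iteration adjusted remainders costed by Lemmas~\ref{LEM:ggszreddeg}, \ref{LEM:lthremainderdeg} and \ref{LEM:remainderadj}, and incremental linear solving via Fact~\ref{FAC:systemcost} with $\rho\leq\rho_0$. Stages (a)--(c) are correct and match the paper.

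The gap is in stage (d), in the degree bound on the entries of the $\ell$-th coefficient matrix and in how you account for the terms $\rho\rho_0^5$ and $\rho^2\rho_0^\omega$. The correct bound on the $x$-degrees of the entries is $\bigO(\ell d_xd_y+\rho_0)$, not $\bigO(\ell d_xd_y)$: the additive $\rho_0$ enters through the total degree of the cleared numerators, since the $\sigma_y$-free common denominator of $r_0,\dots,r_\ell$ carries $y$-degree up to $\rho_0$ (this is exactly the $+\rho_0$ appearing in the equation count in the proof of Lemma~\ref{LEM:rctorderdegree}). Feeding $\bigO(\ell d_xd_y+\rho_0)$ into Fact~\ref{FAC:systemcost} gives per-iteration costs $\bigO\big(\rho_0^3(\ell^2 d_x^2d_y^2+\rho_0^2)\big)$ classically and $\softO\big(\rho_0^{\omega-1}\,\ell\,(\ell d_xd_y+\rho_0)\big)=\softO(\ell^2\rho_0^{\omega-1}d_xd_y+\ell\rho_0^{\omega})$ with fast arithmetic, and summing over $\ell=0,\dots,\rho$ produces all four $\rho$-dependent terms of the statement in one stroke. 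Your substitute justification --- ``a $\rho_0\times\rho_0$ system with degree-$\bigO(\rho_0)$ entries, iterated $\rho$ times'' --- happens to give $\rho\rho_0^5$ classically, but with fast arithmetic it gives $\softO(\rho\rho_0^{\omega+1})$, which is \emph{not} the claimed $\softO(\rho^2\rho_0^{\omega})$ and exceeds it whenever $\rho<\rho_0$; moreover your remark that the entry degrees ``saturate at $\bigO(\rho_0 d_xd_y)$'' is inconsistent with the terms you are trying to recover (it would yield $\rho\rho_0^5d_x^2d_y^2$ classically). So the residual terms as you derive them do not establish the stated bound; once you restore the $+\rho_0$ in the entry-degree estimate, the separate patch becomes unnecessary.
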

\begin{proof}
  By Lemma~\ref{LEM:ggszred}, the GGSZ reduction takes
  $\bigO(d_xd_y^4+d_x^2d_y^3)$ arithmetic operations with classical
  arithmetic and $\softO(d_xd_y^3)$ with fast arithmetic, plus the
  cost of computing the auto-dispersion set. In addition to the cost
  of finding rational roots in the integer-linearity detection, the
  cost of the remaining algorithm is dominated by computing adjusted
  $\sigma_y$-remainders and solving linear homogeneous systems in
  iteration steps. For the $\ell$-th iteration with $0\leq
  \ell\leq\rho$, by Lemmas~\ref{LEM:ggszreddeg},
  \ref{LEM:lthremainderdeg} and \ref{LEM:remainderadj}, finding the
  $\ell$-th adjusted $\sigma_y$-remainder takes $\bigO(d_x^2d_y^3)$
  with classical arithmetic and $\softO(d_xd_y^2)$ with fast
  arithmetic, plus the cost of computing relevant dispersion
  sets. After this, we need to solve a linear system with the
  coefficient matrix having at most $\rho_0$ rows and $\ell+1$
  columns. Moreover, the entries of the matrix are of degrees in $x$
  in $\bigO(\ell d_xd_y+\rho_0)$. By Fact~\ref{FAC:systemcost},
  finding a solution requires
  $\bigO(\ell^2\rho_0^3d_x^2d_y^2+\rho_0^5)$ with classical arithmetic
  and $\softO(\ell^2\rho_0^{\omega-1}d_xd_y+\ell\rho_0^\omega)$ with
  fast arithmetic. Since there are $\rho$ iterations, this step in
  total takes $\bigO(\rho
  d_x^2d_y^3+\rho^3\rho_0^3d_x^2d_y^2+\rho\rho_0^5)$ with classical
  arithmetic and $\softO(\rho
  d_xd_y^2+\rho^3\rho_0^{\omega-1}d_xd_y+\rho^2\rho_0^\omega)$ with
  fast arithmetic, yielding the announced cost.
\end{proof}

In analogy to Corollary~\ref{COR:oct}, we obtain the following by the
above theorem and Lemma~\ref{LEM:rctorderdegree}.
\begin{corollary}\label{COR:rct}
  With the assumptions of Theorem~\ref{THM:rct}, further let $\mu =
  \max\{\mu_1,\dots,\mu_m\}$. Then, without expanding the certificate,
  the reduction-based algorithm in \citep{CHKL2015} takes
  $\bigO(\mu^6d_x^2d_y^8)$ arithmetic operations in~$\set K$ with
  classical arithmetic and $\softO(\mu^{\omega+2}d_xd_y^{\omega+3})$
  with fast arithmetic, plus the cost of computing (auto-)dispersion
  sets and finding rational roots.
\end{corollary}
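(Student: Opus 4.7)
The plan is to mirror the proof of Corollary~\ref{COR:oct}: first bound $\rho_0$ in terms of $\mu$ and $d_y$, then use the telescoper existence curve to bound $\rho$, and finally substitute into the cost estimate of Theorem~\ref{THM:rct}.

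For the first step, I would start from the definition
\[
\rho_0 = \sum_{i=1}^m \mu_i\,\deg_z(p_i)\,\max_{1\leq j\leq n_i}\{e_{ij}\}
\]
given in Lemma~\ref{LEM:rctorderdegree}, taken over the refined integer-linear decomposition of the $y$-primitive denominator $b$ of $r$. Using $\max_j e_{ij} \leq \sum_j e_{ij}$ and $\mu_i \leq \mu$, I can bound
\[
\rho_0 \;\leq\; \mu \sum_{i=1}^m \sum_{j=1}^{n_i} \deg_z(p_i)\, e_{ij}
\;\leq\; \mu \sum_{i=1}^m \sum_{j=1}^{n_i} \mu_i\,\deg_z(p_i)\, e_{ij}
\;=\; \mu\,\deg_y(b).
\]
Lemma~\ref{LEM:ggszreddeg} then gives $\deg_y(b)\in \bigO(d_y)$, so $\rho_0\in\bigO(\mu d_y)$.

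For the second step, since by assumption $\rho$ is the actual order of a minimal telescoper for $f$, applying Lemma~\ref{LEM:rctorderdegree} with any admissible pair $(\rho_0,\tau)$ yields $\rho\leq \rho_0$, hence also $\rho\in\bigO(\mu d_y)$.

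Finally, I would substitute $\rho,\rho_0\in\bigO(\mu d_y)$ into the cost estimate of Theorem~\ref{THM:rct}. In the classical arithmetic case the four summands become $d_xd_y^4$, $\mu d_x^2d_y^4$, $\mu^6 d_x^2d_y^8$, and $\mu^6 d_y^6$, whose maximum is $\mu^6 d_x^2d_y^8$. In the fast arithmetic case, the four summands become $d_xd_y^3$, $\mu d_xd_y^3$, $\mu^{\omega+2}d_xd_y^{\omega+3}$, and $\mu^{\omega+2}d_y^{\omega+2}$, whose maximum is $\mu^{\omega+2}d_xd_y^{\omega+3}$. The additive contributions from computing (auto-)dispersion sets and rational roots are carried through unchanged from Theorem~\ref{THM:rct}. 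Since every step is a direct specialization of previously established bounds, there is no serious obstacle; the only mildly subtle point is the first inequality bounding $\rho_0$ by $\mu\deg_y(b)$, which leverages the fact that each $\mu_i\geq 1$ together with the grouping of exponents $e_{ij}$ in the refined integer-linear decomposition.
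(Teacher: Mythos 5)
Your proposal is correct and follows essentially the same route as the paper's own (very terse) proof: the paper simply asserts that $\rho_0\in\bigO(\mu d_y)$ is evident from the definition, deduces $\rho\leq\rho_0$ from Lemma~\ref{LEM:rctorderdegree}, and substitutes into Theorem~\ref{THM:rct}. Your version merely fills in the details the paper leaves implicit, namely the chain $\rho_0\leq\mu\sum_{i,j}\mu_i\deg_z(p_i)e_{ij}\leq\mu\deg_y(b)$ together with the bound $\deg_y(b)\in\bigO(d_y)$ from Lemma~\ref{LEM:ggszreddeg}, and the term-by-term comparison of the four summands.
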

\begin{proof}
  It is evident from the definition of $\rho_0$ that
  $\rho_0\in\bigO(\mu d_y)$.  By Lemma~\ref{LEM:rctorderdegree}, $\rho
  \leq \rho_0$ since $\rho$ is the actual order of minimal telescopers
  for $f$. Thus $\rho\in\bigO(\mu d_y)$.  The announced cost then
  directly follows by Theorem~\ref{THM:rct}.
\end{proof}
The above result compares to Corollary~\ref{COR:oct} which announces
that for the same purpose, the algorithm {\bf RationalCT} takes
$\bigO(\mu^4d_x^2d_y^6)$ arithmetic operations in~$\set K$ with
classical arithmetic and $\softO(\mu^{\omega+1}d_xd_y^{\omega+2})$
with fast arithmetic, plus the cost of computing auto-dispersion sets
and finding rational roots.

Note that for a polynomial $b\in \set K[x,y]$, computing its
auto-dispersion set and computing the dispersion set
$\disset_y(\sigma_x(b),b)$ take almost the same cost. Hence the extra
costs for the two algorithms in fact do not differ too much.

\section{Implementation and timings}\label{SEC:timing}
We have implemented our algorithms in the computer algebra system
{\sc Maple~2018}. Our implementation includes the four enhancements
discussed in Section~\ref{SUBSEC:modifications}. The code is available
by email request.  In order to get an idea about the efficiency, we
compared their running time and memory requirements to the performance
of two known algorithms -- the one developed by \cite{Le2003a} and the
reduction-based one reviewed in Section~\ref{SEC:rct}. The
implementation for the former algorithm uses the built-in Maple
procedure {\bf SumTools[Hypergeometric][ZpairDirect]}, while the
implementation for the latter algorithm was done in accordance with
descriptions of the algorithm {\bf ReductionCT} from \citep{CHKL2015}
restricted to the rational case, by embracing the GGSZ reduction and
Remark~\ref{REM:rct}.  All timings are measured in seconds on a Linux
computer with 128GB RAM and fifteen 1.2GHz Dual core processors. The
computations for the experiments did not use any parallelism.

We take examples of the expanded form of
\begin{equation}\label{EQ:testsuite}
  r(x,y) = (\shift_y-1)\left(\frac{f_0(x,y)}{g_0(x,y)}\right)
  +\frac{f(x,y)}{g_1(-\lambda x+\mu y)\cdot g_2(\lambda x+\mu y)},
\end{equation}
where
\begin{itemize}
\item $f_0,f\in \set Z[x,y]$ of total degree $m\geq 0$ and max-norm
  $||f_0||_\infty, ||f||_\infty\leq 20$;
\item $g_0\in \set Z[x,y]$ of total degree $n\geq 0$ and max-norm
  $||g_0||_\infty \leq 20$;
\item $\lambda,\mu$ are positive integers;
\item $g_i\in \set Z[z]$ of the form $g_i = p_i(z)p_i(z+\lambda_i)
  p_i(z+\lambda_i\mu) p_i(z+\lambda_i+\lambda_i\mu)$ for $\lambda_i =
  (-1)^i\lambda$ and $p_i\in \set Z[z]$ of total degree $n>0$ and
  max-norm $||p_i||_\infty \leq 20$.
\end{itemize}
Note that in a generic situation, a rational function $r\in \set
Q(x,y)$ of the form \eqref{EQ:testsuite} admits the following
RILD-based partial fraction decomposition
\[
r = (\shift_y-1)\left(\frac{f_0(x,y)}{g_0(x,y)}\right)
+ M_1\left(\frac1{p_1(-\lambda x + \mu y)}\right)
+ M_2\left(\frac1{p_2(\lambda x + \mu y)}\right),
\]
where $M_i = a_{i0} + a_{i1}\shift_{\lambda_i,\mu}^{\lambda_i} +
a_{i2}\shift_{\lambda_i,\mu}^{\lambda_i\mu} +
a_{i3}\shift_{\lambda_i,\mu}^{\lambda_i+\lambda_i\mu}$ for some
$a_{i0},a_{i1},a_{i2},a_{i3}\in \set Q(x)[y]$. As such, by modulo some
$\sigma_y$-summable rational function, it can be further reduced to
\[
R_1\left(\frac1{p_1(-\lambda x + \mu y)}\right)
+ R_2\left(\frac1{p_2(\lambda x + \mu y)}\right)
\]
with $R_i = b_{i0}+b_{i1}\shift_{\lambda_i,\mu}^{\lambda_i}$ for some
$b_{i0},b_{i1}\in \set Q(x)[y]$.

For a selection of random rational functions of this type for
different choices of $(m,n,\lambda,\mu)$, Table~\ref{TAB:timing}
collects the timings, without expanding the certificate, of the
algorithm of Le ({\sf DCT}), the reduction-based algorithm ({\sf RCT})
and our algorithm ({\sf OCT}) developed in Section~\ref{SEC:oct}.  The
column {\em order} is used to record the actual order of the output
minimal telescoper.

\begin{table}[ht]
  \centering
  \begin{tabular}{l|r|r|r|c}
    $(m,n,\lambda,\mu)$ & {\sf DCT} & {\sf RCT} & {\sf OCT}
    & order \\\hline
    (1, 1, 1, 1) & 0.18 & 0.17 & 0.16 & 2\\
    (1, 1, 4, 1) & 0.18 & 0.20 & 0.16 & 2\\
    (1, 1, 16, 1) & 0.19 & 0.21 & 0.17 & 2\\
    (5, 1, 4, 1) & 0.22 & 0.23 & 0.19 & 3\\
    (10, 1, 4, 1) & 0.26 & 0.27 & 0.21 & 3\\
    (15, 1, 4, 1) & 0.46 & 0.40 & 0.27 & 4\\
    (15, 1, 4, 5) & 10.43 & 14.63 & 0.90 & 10\\
    (15, 1, 4, 7) & 46.39 & 69.64 & 1.92 & 14\\
    (15, 1, 4, 9) & 181.34 & 283.65 & 3.58 & 18\\
    (15, 1, 4, 11) & 456.69 & 851.72 & 7.49 & 22\\
    (15, 1, 4, 13) & 892.44 & 2436.57 & 13.59 & 26\\
    (1, 2, 4, 1) & --~~ & 15.24 & 2.48 & 7\\
    (1, 3, 4, 1) & --~~ & 1220.58 & 49.19 & 11\\
    (1, 4, 4, 1) & --~~ & 30599.21 & 935.41 & 15\\
    (10, 2, 4, 1) & --~~ & 21.00 & 3.96 & 7\\
    (20, 2, 4, 1) & --~~ & 27.27 & 5.92 & 7\\
    (30, 2, 4, 1) & --~~ & 51.82 & 14.55 & 8\\
    (30, 2, 4, 3) & --~~ & 504.78 & 51.93 & 12\\
    (30, 2, 4, 5) & --~~ & 6437.51 & 436.25 & 20\\
    (30, 2, 4, 7) & --~~ & 47763.39 & 1283.01 & 28\\
    \hline
  \end{tabular}
  \caption{Comparison of three algorithms for a collection of rational
    functions of the form \eqref{EQ:testsuite}.}
  \label{TAB:timing}
\end{table}

From the finding we see that our creative telescoping algorithm has
comparable timings for random problems of small size. In particular
none of the three algorithms have significant set up costs.  As $m$
increases our algorithm shows significant improvement over both the
direct and reduction-based methods. The dash in the column {\sf DCT}
indicates that the current built-in procedure for {\sf DCT} in
{\sc Maple 2018} is not applicable for random inputs with this choice
of $(m,n,\lambda,\mu)$.  The issue in these cases is that the
denominator of the input rational function has irreducible factors of
degrees greater than one, and then the algorithm of \cite{Le2003a}
requires recurrence operators with coefficients being polynomials over
algebraic numbers, something not yet included in the current
implementation of {\sf DCT} in Maple.

\section{Conclusion and future work}\label{SEC:conclusion}
A new algorithm of creative telescoping for bivariate rational
functions has been developed in this paper. Our algorithm is based on 
basic arithmetic in the ring of recurrence operators and expresses the
certificate part by a compact representation, which, if desired, can
be expanded in time polynomial in the size of the final result. In
terms of complexity, our algorithm outperforms the reduction-based
approach in the case of bivariate rational functions by at least one
order of magnitude ignoring the certificate part. In practice, our
algorithm is also more efficient according to the experiments.

With the rational case being settled, it is natural to wonder about an
analogous algorithm for hypergeometric terms. Recall that a bivariate
function $f(x,y)$ is called a {\em hypergeometric term} if both
$f(x+1,y)/f(x,y)$ and $f(x,y+1)/f(x,y)$ are rational functions in
$x,y$. The hypergeometric term is a basic and ubiquitous class of
special functions appearing in combinatorics~\citep{PWZ1996}. It is
more interesting and also more challenging than the rational case.

In the hypergeometric case, there exists no direct analog of the
partial fraction decomposition of rational functions. Thus the method
described in this paper will not work directly for this setting. One
possible way to proceed is to first compute a multiplicative
decomposition of the given hypergeometric term and then reduce the
problem to a rational one (cf.\ \citep{AbPe2001a,CHKL2015}).  This
way, however, may introduce arithmetic operations on recurrence
operators over $\set K(x,y)$ instead of $\set K(x)[y]$, and thus makes
it more difficult to derive a hypergeometric telescoping criterion,
namely an analog of Theorem~\ref{THM:criterion}. In the future, we
hope to explore this topic further and aim at generalizing our results
to the class of hypergeometric terms and beyond.

\section*{Acknowledgments}
We would like to express our gratitude to Ziming Li for his helpful
discussions and valuable comments, which improved this work
considerably. We also would like to thank the anonymous referees for
many useful and constructive suggestions. Most of the work presented
in this paper was carried out while Hui Huang was a Post Doctoral
Fellow at the University of Waterloo. 
This research was partly supported by the Natural Sciences and Engineering 
Research Council (NSERC) Canada (No.\ NSERC RGPIN-2018-04950, 
No.\ NSERC RGPIN-2020-04276 and No.\ NSERC RGPIN 238778-06). 
Hui Huang was also supported by the Fundamental Research Funds 
for the Central Universities (No.\ DUT20RC(3)073).

\bibliographystyle{elsarticle-harv}
\newcommand{\Gathen}{\relax}\newcommand{\Hoeven}{\relax}\def\cprime{$'$}
\def\cprime{$'$} \def\cprime{$'$} \def\cprime{$'$} \def\cprime{$'$}
\def\cprime{$'$} \def\cprime{$'$} \def\cprime{$'$}
\def\polhk#1{\setbox0=\hbox{#1}{\ooalign{\hidewidth
			\lower1.5ex\hbox{`}\hidewidth\crcr\unhbox0}}} \def\cprime{$'$}

\end{document}